\pdfoutput=1

\documentclass[cmp,final]{svjour}

\usepackage[utf8]{inputenc}
\usepackage{graphicx}
\usepackage{amsmath}
\usepackage{amssymb}

\providecommand*{\bra}[1]{\mathinner{\langle #1\rvert}}
\providecommand*{\ket}[1]{\mathinner{\lvert #1\rangle}}

\providecommand*{\abs}[1]{\mathinner{\lvert #1\rvert}}
\providecommand*{\norm}[1]{\lVert #1\rVert}
\providecommand*{\gen}[1]{\langle #1\rangle}
\providecommand*{\braket}[2]{\langle #1\vert #2\rangle}
\providecommand*{\eval}[2]{\langle #1,#2\rangle}
\providecommand*{\bigeval}[2]{\bigl\langle #1,#2\bigr\rangle}
\providecommand*{\relphantom}[1]{\mathrel{\phantom{#1}}}
\DeclareMathOperator{\tr}{tr}
\DeclareMathOperator{\id}{id}
\DeclareMathOperator{\ttr}{ttr}

\DeclareMathOperator{\ad}{ad}

\DeclareMathOperator{\Cocom}{Cocom}

\begin{document}

\title{A hierarchy of topological tensor network states}

\author{Oliver Buerschaper\inst{1} \and
        Juan Martín Mombelli\inst{2} \and
        Matthias Christandl\inst{3,4}\and
        Miguel Aguado\inst{1}}

\institute{Max-Planck-Institut für Quantenoptik,
           Hans-Kopfermann-Straße 1, 85748 Garching, Germany \and
           Facultad de Matemática, Astronomía y Física,
           Universidad Nacional de Córdoba, Medina Allende s/n,
           Ciudad Universitaria, 5000 Córdoba, Argentina \and
           Institute for Theoretical Physics, ETH Zurich, 8093 Zurich,
           Switzerland \and
           Fakultät für Physik, Ludwig-Maximilians-Universität München,
           Theresienstraße 37, 80333 Munich, Germany}

\date{Received: \today / Accepted: date}

\communicated{Unknown}

\maketitle

\begin{abstract}
    We present a hierarchy of quantum many-body states among which many
    examples of topological order can be identified by construction.
    We define these states in terms of a general, basis-independent framework of tensor networks based on
    the algebraic setting of finite-dimensional Hopf $C^*$-algebras.
    At the top of the hierarchy we identify
    ground states of new topological lattice models extending Kitaev's quantum double models~\cite{Kitaev:2003}.
    For these states we exhibit the mechanism
    responsible for their non-zero topological entanglement
    entropy by constructing a renormalization group flow.
    Furthermore it is shown that
    those states of the hierarchy
    associated with Kitaev's original quantum double models are related
    to each other by the condensation of topological charges.
    We conjecture that charge condensation is
    the physical mechanism underlying the hierarchy in general.
\end{abstract}

\section{Introduction}
\label{sec:intro}

\subsection{Background}

Quantum states of many-body systems, for instance lattice models of quantum
spins, furnish a rich physical arena where simplified models of
condensed matter phenomena such as high-$T_c$ superconductivity
can be studied~\cite{Anderson:1987p2991}. Such quantum systems are, however, difficult to solve numerically, since
the dimension of the corresponding Hilbert space grows exponentially in
the number of components.

Two promising developments tackling this difficulty have occurred in
recent times. Firstly, the advent of experimental settings
where lattice models can be implemented whose parameters can be
controlled very accurately~\cite{Bloch:2008p2992}. Secondly -- and relevant for the
present article -- the understanding of the entanglement structure of
models with local Hamiltonians. It is believed that ground states of
such models generically obey an area law for the entanglement
entropy~\cite{Vidal:2003p2577,Eisert:2010p2578}. This has justified the
development of a number of approaches where the physical states of the
model can be described by so-called tensor networks whose set of
parameters grows only polynomially in the number of components and which
are therefore suitable for
numerical simulations. Among these are the tensor product states ({\sc TPS})~\cite{Nishino:1995}, also known as projected
entangled-pair states ({\sc PEPS})~\cite{Verstraete:2004p114},
which generalize the structure of matrix product
states ({\sc MPS})~\cite{Fannes:1992p1590} to higher dimensions, and
the multi-scale entanglement renormalisation ansatz ({\sc MERA}) based on
renormalisation group
ideas~\cite{Vidal:2007p2581,Vidal:2008p2582}.

In 2003 Kitaev proposed topological quantum
computation~\cite{Kitaev:2003}: this is an approach to quantum computation which is based on quantum many-body systems exhibiting topological order, i.e.
systems that are effectively described by a topological quantum field
theory~\cite{Witten:1988p2583}. Topological phases appear in condensed matter
physics, for instance in the fractional quantum Hall effect~\cite{Wen:1995p1894}; they
can also be studied in the context of quantum field theories~\cite{Asorey:1993p2994}. In Kitaev's proposal information is encoded
in nontrivial loops of the underlying surface~\cite{Dennis:2002} and is
processed by the creation, braiding and annihilation of topological
charges~\cite{Freedman:2002,Freedman:2002b}. The simplest models
proposed for this purpose are Kitaev's original quantum double models,
which are based on discrete gauge theories~\cite{Kitaev:2003}, and Levin
and Wen's string-net models~\cite{Levin:2004p117}. The latter are
believed to describe all parity and time-reversal invariant topological phases on the lattice and are constructed
from the idea that topological phases correspond to fixed points of
renormalisation group procedures. In Kitaev's models, the space of
superselection sectors is the set of irreducible representations of the
quasitriangular Hopf algebra~$\mathrm{D}(\mathbb{C}G)$, Drinfeld's
quantum double~\cite{Drinfeld:1986} of the group algebra~$\mathbb{C}G$ describing the gauge
symmetry. We will therefore call these models
$\mathrm{D}(\mathbb{C}G)$-models. The simplest such model is the well-known
toric code which is based on $G=\mathbb{Z}_2$.

Recently, tensor networks for topological lattice models have been introduced
and it is this development which forms the starting point for the current work.
In~\cite{Verstraete:2006p113} a {\sc PEPS} ansatz is devised
for a distinguished ground state of the toric code. Ground states of general quantum
double models based on a group algebra can be described as
{\sc MERA} states~\cite{Aguado:2008p518}, as can ground states of
string-net models~\cite{Konig:2009p1839}. Other tensor networks for
string-nets are given in~\cite{Buerschaper:2009p1923,Gu:2009p1885}.

\subsection{Aims}

In the following we will outline the two main goals of this article: the extension of Kitaev's quantum double models and the derivation of tensor network representations for these generalized models.

The first aim of this article is to extend Kitaev's
quantum double models~\cite{Kitaev:2003} from the
case of the group algebra of a finite group~$G$ to a finite-dimensional Hopf algebra
with certain properties, as anticipated by Kitaev in~\cite{Kitaev:2003}.
This sheds more light on the Hopf algebra point
of view Kitaev outlined originally, in particular, it clarifies how the
structure maps of the Hopf algebra enter in the definition of the model
and how the quantum double of the Hopf algebra arises in the description
of the model's excitations. It turns out that an
involutory Hopf algebra (i.e. a Hopf algebra whose antipode map squares to the
identity) with an additional $*$-algebra structure is sufficient in
order to define the generalized quantum double model properly. As we will show both requirements
are satisfied if one chooses a finite-dimensional Hopf $C^*$-algebra (also known as
a finite-dimensional Kac algebra).
In order to illustrate the construction of our generalized model we give
an example of a finite-dimensional Hopf $C^*$-algebra which is nontrivial in the
sense that it is neither a group algebra nor the dual of a group
algebra. This extends well-known instances of the family like the toric
code for $G=\mathbb{Z}_2$ or the model based on $G=S_3$ which is
universal for topological quantum computation~\cite{Mochon:2004p2296}.

The second aim of the present article is to derive tensor network
representations for the ground states of these models.
As such representations allow to map out the phase
diagram of the model and determine phase transitions between topologically
and conventionally ordered phases it is crucial to understand the
properties of tensor networks describing states of the model deep within
the different phases.
In order to derive these tensor network representations for quantum
double model ground states we introduce a novel diagrammatic technique.
It turns out that the tensor networks can be formulated
basis-independently and that furthermore a single distinguished element
of the Hopf $C^*$-algebra, namely its unique Haar integral, plays the
key role in the construction. Other than that only the structure maps of
the Hopf $C^*$-algebra are employed in the definition of the tensor
network states.

In addition to the stated aims above we are able to extend these ground states to a hierarchy of tensor
network states for each finite-dimensional Hopf $C^*$-algebra. This hierarchy of
states is characterized by different values of the topological
entanglement entropy~$\gamma$ as defined in~\cite{Kitaev:2006,Levin:2006p288}, and hence these states represent different
instances of topological order, in other words different unitary modular
tensor categories ({\sc UMTC}s). The way this hierarchy depends on the
Hopf subalgebras of the original Hopf $C^*$-algebra points towards
condensation of topological charges~\cite{Bais:2002p1647,Bais:2003p1648}. Furthermore we show how different isomorphism
classes of finite-dimensional Hopf $C^*$-algebras exhibit different mechanisms for
the non-zero $\gamma$. In particular, we explain how the boundary
configurations of a region differ between models based on a group
algebra, the dual of a group algebra and non-trivial finite-dimensional Hopf
$C^*$-algebras.

\subsection{$\mathrm{D}(\mathbb{C}G)$-models and their tensor network
    representations}

We will now review the basics of Kitaev's quantum double model and its tensor network
representations in a language appropriate for our later discussion.

The quantum double model presented in~\cite{Kitaev:2003} is a
quantum spin model defined on any finite oriented graph $\Gamma=(V,E,F)$ with
vertices~$V$, edges~$E$ and faces~$F$ which can be embedded in a
two-dimensional surface. Throughout the article we will denote the set of all
edges connected to a vertex $s\in V$ by $E(s)$ and similarly $E(p)$ will stand for the set of edges that form the boundary of the face $p\in F$. In order to construct
the Hilbert space of the quantum double model one assigns to each oriented edge
the local Hilbert space~$\mathbb{C}G$ whose canonical basis is
$\{\ket{g}\mid g\in G\}$. Reversing the orientation of an edge
corresponds to the map $\ket{g}\mapsto\ket{g^{-1}}$. The Hamiltonian is
assembled from two sets of operators that act on vertices or faces of $\Gamma$
respectively. At each vertex $s\in V$ one may orient the adjacent edges
such that they point inwards. Then the vertex operator~$A(s)$, which acts on
$E(s)$, is given by
\begin{equation}
    \label{eq:vertex_group}
    A(s)=\frac{1}{\abs{G}}
         \sum_{g\in G}
         L(g)\otimes
         \dots\otimes
         L(g)
\end{equation}
where $L(g)\colon\mathbb{C}G\to\mathbb{C}G$, $\ket{h}\mapsto\ket{gh}$
denotes the left regular representation of $G$. Hence the action of
$A(s)$ is a simultaneous left multiplication at each incident edge
averaged over the group~$G$. In fact, it projects onto the trivial
representation. Using the orientation reversal isomorphism one may
orient the edges around a face $p\in F$ in counterclockwise fashion.
One then defines the face operator~$B(p)$ by its action on basis
states $\ket{g_1,\dots,g_r}$:
\begin{equation}
    \label{eq:face_group}
    B(p)\
    \vcenter{\hbox{\includegraphics{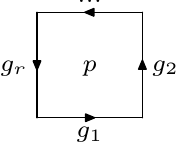}}}
    =\delta_{g_r\cdots g_1,e}\
     \vcenter{\hbox{\includegraphics{lattices2-14}}}\ .
\end{equation}
Again, this is a
projector which projects onto configurations with a trivial product
of group elements around the face. Finally, the Hamiltonian reads:
\begin{equation}
    \mathcal{H}
    =-\sum_{s\in V}
     A(s)
     -\sum_{p\in F}
     B(p).
\end{equation}
This is a sum of mutually commuting projectors and therefore exactly
solvable.

\bigskip

\noindent
In order to find a tensor network representation for
particular states occurring in the quantum double model one can start
from its inherent (group) symmetries. We can assign a
representation~$\rho_p$ of $G$ to each face $p\in F$ and associate the
tensor
\begin{equation}
    \label{eq:group_tensors}
    \vcenter{\hbox{\includegraphics{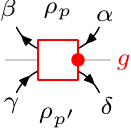}}}
    =A(\rho_p,\rho_{p'})_{\alpha\beta\gamma\delta}^{g}
    :=\rho_p(g)_{\alpha\beta}\,
      \rho_{p'}(g^{-1})_{\gamma\delta}
\end{equation}
with each oriented edge. Here $\rho_p(g)_{\alpha\beta}$ denotes a matrix
element of the representation~$\rho_p$ and the red dot represents the
\emph{physical} index~$g$ with its orientation inherited from the underlying
graph edge. The Greek letters attached to black arrows are called \emph{virtual}
indices. The tensor for a reversed graph edge naturally reads
\begin{equation}
    \vcenter{\hbox{\includegraphics{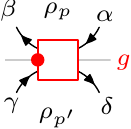}}}
    :=\vcenter{\hbox{\includegraphics{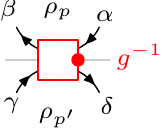}}}\ .
\end{equation}
Reversing virtual arrows is defined by
\begin{equation}
    \vcenter{\hbox{\includegraphics{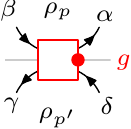}}}
    :=\vcenter{\hbox{\includegraphics{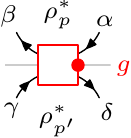}}}\ .
\end{equation}
where the dual representation~$\rho^*$ of a representation~$\rho$ is
given by the matrix equation $\rho^*(g):=\rho^T(g^{-1})$. Note that this
is equivalent to a reflection of the tensor about its vertical axis:
\begin{equation}
    \label{eq:tensor_reflected}
    \vcenter{\hbox{\includegraphics{tensors-25}}}
    =\vcenter{\hbox{\includegraphics{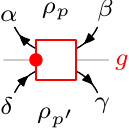}}}\ .
\end{equation}
The tensor network obtained by contracting all virtual indices
around each face represents what we will
call a \emph{group tensor network state} in the following.

For example, on a square lattice the tensor representing a vertex with a
particular edge orientation is given by
\begin{equation}
    \label{eq:tn_vertex}
    \vcenter{\hbox{\includegraphics{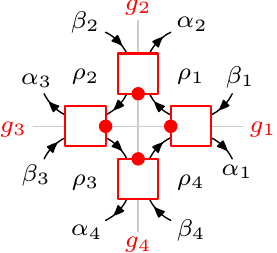}}}
    =\prod_{j=1}^4
     \rho_j(g_j^{-1}g_{j+1})_{\beta_j\alpha_{j+1}}
\end{equation}
where the index~$j$ labelling graph edges is assumed to be cyclic. Quite
similarly, the tensor corresponding to a particular face reads
\begin{equation}
    \label{eq:tn_face}
    \vcenter{\hbox{\includegraphics{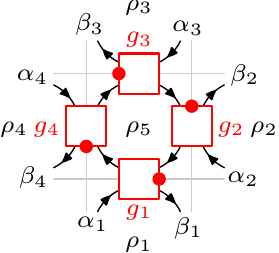}}}
    =\chi_{\rho_5}(g_4\cdots g_1)
     \prod_{j=1}^4
     \rho_j(g_j^{-1})_{\alpha_j\beta_j}
\end{equation}
where~$\chi_\rho$ is the character of the representation~$\rho$.

In fact, both this vertex and face tensor are \emph{partially}
contracted tensor networks and represent linear maps from the uncontracted virtual
indices to the physical ones. So a piece of a tensor network with open
indices can be regarded as a {\sc PEPS} projection
map~\cite{Verstraete:2004p265,Murg:2007p2715}. If $V_p$ is the
vector space associated with the representation~$\rho_p$ then this linear
map $P:V_p\otimes V_p\otimes V_{p'}\otimes V_{p'}\to\mathbb{C}G$ is given by
\begin{equation}
    \label{eq:projector}
    P=\!\!\sum_{g,\alpha,\dots,\delta\in G}\!\!
      A(\rho_p,\rho_{p'})_{\alpha\beta\gamma\delta}^g
      \ket{g}
      \bra{\alpha,\beta,\gamma,\delta}
\end{equation}
for a single edge tensor~$A(\rho_p,\rho_{p'})$.

On the contrary, a \emph{fully} contracted tensor network is a complex
number which equals the inner product between a particular basis state
and the group tensor network state. If viewed as a linear map from basis
configurations to amplitudes we will call it a \emph{tensor trace}.

Clearly, a group tensor network state only depends on the isomorphism
class of the representations~$\rho_p$ assigned to the faces. In general,
there is vastly more gauge freedom for a tensor network to represent the
same physical state, however, this does not necessarily respect the
$G$-action.

\bigskip

\noindent If we choose all representations~$\rho_p$ to coincide with the left
regular one the resulting group tensor network state will be a ground
state of the $\mathrm{D}(\mathbb{C}G)$-model. Since its Hamiltonian is
frustration free (i.e. a sum of mutually commuting terms) this can easily be shown by considering independently
the action of $A(s)$ and $B(p)$ on the partially contracted tensor
networks in~\eqref{eq:tn_vertex} and~\eqref{eq:tn_face}.

As an example, a ground state of the toric code is given by the
group tensor network for the group $\mathbb{Z}_2=\{e,a\}$ with all faces
carrying the regular representation. In that case the {\sc PEPS}
projection map reads $P=\ket{e}\bra{\phi^+}\bra{\phi^+}
+\ket{a}\bra{\psi^+}\bra{\psi^+}$ with the Bell states
\begin{align}
    \ket{\phi^+} & =\!\!\sum_{\alpha,\beta\in\mathbb{Z}_2}\!\!
                    L(e)_{\alpha\beta}
                    \ket{\alpha,\beta}
                   =\ket{e,e}+
                    \ket{a,a}, \\
    \ket{\psi^+} & =\!\!\sum_{\alpha,\beta\in\mathbb{Z}_2}\!\!
                    L(a)_{\alpha\beta}
                    \ket{\alpha,\beta}
                   =\ket{e,a}+
                    \ket{a,e}.
\end{align}
We would like to remark that this projection map coincides (up to a
trivial isomorphism) with the one given in~\cite{Verstraete:2006p113}.

Later in the article we will turn to the physical significance of group
tensor network states arising from representations of $G$ other than the
regular one.

\bigskip

\noindent
In order to build some more intuition for the main part of
this article we briefly review the structure of a local tensor in a
group tensor network state. It is easily seen that the tensor in~\eqref{eq:group_tensors}
factorizes with respect to partitioning the virtual indices into the
sets~$\{\alpha,\beta\}$ and $\{\gamma, \delta\}$. This implies that the
{\sc PEPS} projection map~\eqref{eq:projector} can be rewritten as:
\begin{equation}
    P
    =\sum_{g\in G}
     \ket{g}
     \bra{\phi_p(g)}
     \bra{\phi_{p'}(g^{-1})}
\end{equation}
where a state associated to the virtual indices in face~$p$ is given by:
\begin{equation}
    \ket{\phi_p(g)}=\!\sum_{\alpha,\beta\in G}\!
                    \overline{\rho_p(g)_{\alpha\beta}}
                    \ket{\alpha,\beta}.
\end{equation}
This means that one can carry out the contraction of the tensor network
in two steps. First one contracts each virtual loop separately and
then one needs to glue these pieces together. In a group tensor network
this glueing process is simply given by
\begin{equation}
    \label{eq:glueing}
    \ket{g}\otimes
    \ket{g^{-1}}
    \mapsto
    \ket{g}.
\end{equation}
It turns out that this step becomes rather nontrivial once we generalize
the quantum double model to finite-dimensional Hopf $C^*$-algebras. In fact, we
anticipate that this glueing step is one of the crucial ingredients for
deriving the hierarchy of tensor network states in this article.

\subsection{Outline}

The article is structured as follows. In Section~\ref{sec:drinfeld}, we
will provide a guide to the language of Hopf algebras, introducing the
necessary notation needed in order to extend the quantum double models based on
finite groups to Hopf algebras. Characteristic
examples of Hopf algrebras relevant for this work are supplied, too. In
Section~\ref{sec:model} we then generalize Kitaev's quantum double
construction from group algebras to Hopf algebras, thereby introducing
the physical model which stands in the centre of this work. In
Section~\ref{sec:networks}, we will solve this model by providing a
tensor network representation of its ground state. This representation
only involves the canonical structures associated to the underlying
Hopf algebra and leads us to propose a novel hierarchy of tensor network
states based on Hopf subalgebras. In Section~\ref{sec:entropy} we
calculate both the entanglement entropy and the topological entanglement entropy for
distinguished states in the hierarchy. In section~\ref{sec:discussion} we provide
concluding remarks and an outlook on future work.

\section{The language of Hopf algebras}
\label{sec:drinfeld}

\subsection{Hopf algebras}

If one sets out to find an algebraic structure for symmetries of
linear spaces with tensor product structure, such as encountered in
many-body quantum physics, one is naturally led to the concept of a
Hopf algebra. This is because the axioms of a Hopf algebra directly
allow for duals and tensor products of representations. As general
reference for this section we recommend~\cite{Kassel:1995}.

First and foremost a Hopf algebra~$H$ is a vector space (over some
field~$k$ which we will take to be $\mathbb{C}$) equipped with some additional structure. On the one hand
there is an associative multiplication~$\mu$ of vectors which is linear in each
argument, hence one may define it as the linear map $\mu:H\otimes H\to
H$ with
\begin{equation}
    \label{eq:multiplication}
    \mu(x\otimes y)
    =xy
\end{equation}
such that
\begin{equation}
    (xy)z
    =x(yz).
\end{equation}
This multiplication is accompanied by a unit~$\eta$ which can be defined
as the linear injection of scalars into $H$, hence $\eta:k\to H$ where
\begin{equation}
    \label{eq:unit}
    \eta(\lambda)
    =\lambda 1_H.
\end{equation}
The element $1_H\in H$ is a left and right unit for the
multiplication~$\mu$. At this stage $H$ has the structure of an algebra.
The multiplication encodes the composition of symmetry transformations
in a representation space.

On the other hand there is a dual notion to multiplication. This dual
linear map is the comultiplication $\Delta:H\to H\otimes H$ and
is usually written in Sweedler notation%
\footnote{If there are more than three tensor factors in iterated coproducts
such as $\Delta^{(3)}(x)=\sum_{(x)}x^{(1)}\otimes x^{(2)}\otimes x^{(3)}\otimes
x^{(4)}$ we will consistently use
superscripts instead of primes.}
as
\begin{equation}
    \label{eq:comultiplication}
    \Delta(x)
    =\sum_{(x)}
     x'\otimes x''
    =\sum_{(x)}
     x^{(1)}\otimes
     x^{(2)}.
\end{equation}
This serves as a shorthand for sums of the form $\sum_ix_i^{(1)}\otimes x_i^{(2)}$.
It is required to be coassociative:
\begin{equation}
    (\Delta\otimes\id)\circ
    \Delta
    =(\id\otimes\Delta)\circ
     \Delta.
\end{equation}
Elements which satisfy $\Delta(x)=x\otimes x$ are called grouplike and collected
in the set~$\mathcal{G}(H)$.
This comultiplication has a companion called the counit $\epsilon:H
\rightarrow k$ which is required to satisfy the axiom:
\begin{equation}
    \sum_{(x)}
    \epsilon(x')\,x''
    =\sum_{(x)}
     x'\epsilon(x'')
    =x.
\end{equation}
In other words, it neutralizes the comultiplication. With this structure
alone, the vector space~$H$ is called a coalgebra. The comultiplication
encodes how symmetry transformations act on tensor products of representation
spaces. The counit provides the appropriate notion of a trivial representation,
or invariance under symmetry transformations.

If both the algebra and coalgebra structure are compatible then $H$ is
called a bialgebra. The compatibility axioms read:
\begin{align}
    \Delta(xy)    & =\Delta(x)\,
                     \Delta(y), \\
    \Delta(1_H)   & =1_H\otimes 1_H,
                     \label{eq:coproduct_unit} \\
    \epsilon(xy)  & =\epsilon(x)\,
                     \epsilon(y), \\
    \epsilon(1_H) & =1_k.
\end{align}

Finally, if there is a linear map $S:H\to H$ satisfying the axiom
\begin{equation}
    \sum_{(x)}
    x'S(x'')
    =\epsilon(x)\,
     1_H
    =\sum_{(x)}
     S(x')\,x''
\end{equation}
then $H=(H;\mu,\eta;\Delta,\epsilon;S)$ is called a Hopf algebra. The
map~$S$ is called the antipode and has the following properties:
\begin{align}
    S(xy)
    & =S(y)\,
       S(x), \\
    S(1_H)
    & =1_H, \\
    \sum_{(S(x))}
    S(x)'
    \otimes
    S(x)''
    & =\sum_{(x)}
       S(x'')
       \otimes
       S(x'), \\
    \epsilon\bigl(S(x)\bigr)
    & =\epsilon(x).
\end{align}
In any finite-dimensional Hopf algebra the order of the antipode~$S$
is finite~\cite{Schneider:1995p2245}, hence $S$ is an invertible map.
We will always make this assumption in the sequel. The antipode
is used to define dual (or conjugate) representations.

We may also define opposite multiplication~$\mu^\mathrm{op}$ and
comultiplication~$\Delta^\mathrm{cop}$ by
\begin{align}
    \mu^\mathrm{op}(x\otimes y) & =yx, \\
    \Delta^\mathrm{cop}(x)      & =\sum_{(x)}
                                   x''\otimes x'
\end{align}
relative to the multiplication~\eqref{eq:multiplication} and
comultiplication~\eqref{eq:comultiplication}. Then the sets
\begin{align}
    Z(H)      & =\{x\in H
                   \mid
                   \forall y\in H\colon
                   \mu^\mathrm{op}(x\otimes y)
                   =\mu(x\otimes y)
                   \} \\
    \Cocom(H) & =\{x\in H
                   \mid
                   \Delta^\mathrm{cop}(x)
                   =\Delta(x)
                   \}
\end{align}
are called the centre of $H$ and the cocommutative elements of
$H$ respectively. If $Z(H)=H$ then $H$ is a commutative Hopf algebra,
in the case $\Cocom(H)=H$ it is a cocommutative Hopf algebra.
Furthermore it turns out that
$H^\mathrm{op}=(H;\mu^\mathrm{op},\eta;\Delta,\epsilon;S^{-1})$
is again a Hopf algebra, called the opposite Hopf algebra of $H$.

One can now show that for a given Hopf algebra $H=(H;\mu,\eta;\Delta,
\epsilon;S)$ with underlying vector space~$H$ the dual vector space~$H^*$
has again the structure of a Hopf algebra.
More precisely,
\begin{equation}
    H^*
    =(H^*;
      \Delta^T,
      \epsilon^T;
      \mu^T,
      \eta^T;
      S^T)
\end{equation}
with the
structure maps%
\footnote{For a linear map
$F:U\to V$ the transpose map $F^T:V^*\to U^*$ is defined as usual by
$\eval{F^T(\alpha)}{x}:=\eval{\alpha}{F(x)}$
for all $\alpha\in V^*$ and $x\in U$.}
as indicated is called the dual Hopf algebra of $H$. By the same token,
the opposite Hopf algebra
$H^\mathrm{op}$
has the natural dual
\begin{equation}
    X
    =(H^\mathrm{op})^*
    =\bigl(H^*;
           \Delta^T,
           \epsilon^T;
           (\mu^\mathrm{op})^T,
           \eta^T;
           (S^{-1})^T\bigr).
\end{equation}

Finally, there is the notion of a (two-sided) integral~$\Lambda$ in a
Hopf algebra~$H$. Such an element is defined via
\begin{equation}
    \label{eq:integral}
    x\Lambda
    =\Lambda x
    =\epsilon(x)
     \Lambda
\end{equation}
for all $x\in H$. In view of the role of the counit as a trivial representation, integrals are thus
invariant elements under multiplication. The dual definition of an integral $\Gamma\in H^*$ can
be phrased as
\begin{equation}
    \label{eq:dual_integral}
    \sum_{(x)}
    x'
    \Gamma(x'')
    =\sum_{(x)}
     \Gamma(x')\,
     x''
    =\Gamma(x)\,
     1_H
\end{equation}
for all $x\in H$. This turns out to be equivalent to~\eqref{eq:integral} once applied to $H^*$.

Particularly important is the notion of a Haar integral. This is the normalized
version of an integral which is guaranteed to exist uniquely for finite-dimensional
Hopf $C^*$-algebras. Those are the Hopf algebras we will be occupied with mostly in the discussion of the physical model. We will supply a precise definition of
Haar integrals in Section~\ref{sec:Hilbert}.

\subsection{Quantum doubles as bicrossed products}

In a general Hopf algebra~$H$ nothing can be said about how $\Delta$ and
$\Delta^\mathrm{cop}$ are related to each other. In other words, $H$ can
be non-cocommutative in the most unpleasant way. Therefore we seek
Hopf algebras which are \emph{almost} cocommutative%
\footnote{In physical applications this is related to the exchange statistics of particles in two dimensions.}
in a certain sense: they are equipped with a so-called quasitriangular
structure which controls the extent to which the comultiplication
fails to be cocommutative. In~\cite{Drinfeld:1986} Drinfeld gave his celebrated
quantum double construction which produces such a quasitriangular Hopf algebra from a given Hopf algebra.

At the heart of his construction lies the idea of a bicrossed product. For example, in the case of the semidirect product of two groups
one has an action of one group on the other and this action is used to
define the multiplication in the product group. Similarly, one may define
a bicrossed product of groups where there is an additional backaction of the
second group on the first one. It is this concept that can be generalized to Hopf algebras and will yield the quantum double.

So given a Hopf algebra~$H$ with invertible antipode~$S$
(in particular, any finite-dimensional Hopf algebra) one can
construct a new Hopf algebra $\mathrm{D}(H)=X\bowtie H$, the quantum double of $H$, as the bicrossed
product of $X=(H^\mathrm{op})^*$ and $H$. This means that as a vector
space $\mathrm{D}(H)$ simply equals $H^*\otimes H$. In order to define
its multiplication we need an action of $H$ on $X$ and another one vice
versa. This bicrossed structure is given by the actions
\begin{align}
    \rhd:
    H\times X
    & \to X,
    & a\rhd f
    & =\sum_{(a)}
       f\bigl(S^{-1}(a'')\,?a'\bigr),
       \label{eq:left_action}\\
    \lhd:
    H\times X
    & \to H,
    & a\lhd f
    & =\sum_{(a)}
       f\bigl(S^{-1}(a''')\,a'\bigr)\,
       a''
       \label{eq:right_action},
\end{align}
where the question mark denotes the argument of the function~$a\rhd f$, i.e.
in this particular instance
\begin{equation}
    (a\rhd f)(x)
    =\sum_{(a)}
     f\bigl(S^{-1}(a'')\,
            xa'\bigr)
\end{equation}
holds for any $x\in H$. We will use this notation frequently later on.

One can then show that
\begin{equation}
    (f\otimes a)
    (g\otimes b)
    =\sum_{(a)}
     f\,
     g\bigl(S^{-1}(a''')\,?a')
     \otimes
     a''b,
\end{equation}
together with the canonical comultiplication on the tensor product of
Hopf algebras, defines a valid Hopf algebra structure on the vector
space $\mathrm{D}(H)$.
Since both the embeddings $i_X:X\to \mathrm{D}(H)$, $f\mapsto f
\otimes1_H$ and $i_H:H\to \mathrm{D}(H)$, $a\mapsto 1_X\otimes a$ are
algebra morphisms this multiplication formula is actually already
determined by the so-called straightening formula%
\footnote{This should be contrasted with the canonical (uncrossed) multiplication
    in the Hopf algebra $X\otimes H$. In this case we simply have
    $af=fa$ for any $f\in X$ and $a\in H$ since multiplication
    is defined componentwise.}
\begin{equation}
    \label{eq:straighten}
    af
    :=(1_X\otimes a)
      (f\otimes 1_H)
    =\sum_{(a)}
     f\bigl(S^{-1}(a''')\,?a'\bigr)\,
     a''.
\end{equation}

It should be noted that both underlying actions~\eqref{eq:left_action}
and~\eqref{eq:right_action} are derived from the adjoint representation
which for $a,x\in H$ is defined by
\begin{equation}
    \ad(a)(x)
    =\sum_{(a)}
     a'x\,
     S(a'').
\end{equation}

\subsection{Hopf $*$-algebras}

If $k=\mathbb{C}$ then a Hopf algebra~$H$ may sometimes be equipped with
a so-called
$*$-structure that will become most important for physical applications.
Namely, it will allow us to define Hilbert spaces and unitarity. For
a general reference on Hopf $*$-algebras see~\cite{Klimyk:1997}.

Firstly, a conjugate-linear map $*:H\to H$ which satisfies
\begin{align}
    (x^*)^* & =x \\
    (xy)^* & =y^*x^*
\end{align}
is called an involution and turns $H$ into a $*$-algebra.
It follows naturally that
\begin{equation}
    1_H^*
    =1_H.
\end{equation}

If an involution $*$ is compatible with comultiplication
\begin{equation}
    \Delta(x^*)
    =\Delta(x)^*
\end{equation}
then $H$ is called a $*$-coalgebra. Here the involution of $H
\otimes H$ is defined by $(x\otimes y)^*=x^*\otimes y^*$. In a $*$-coalgebra
one always has
\begin{equation}
    \epsilon(x^*)
    =\overline{\epsilon(x)}
\end{equation}
where the bar denotes complex conjugation.

A bialgebra~$H$ with an involution for which it is both a $*$-algebra
and a $*$-coalgebra is called a $*$-bialgebra.

Interestingly, if a Hopf algebra~$H$ also has the structure of a
$*$-bialgebra then the interplay between antipode and involution is
already determined:
\begin{equation}
    S\bigl(S(x^*)^*\bigr)
    =x.
\end{equation}
Consequently, $H$ is called a Hopf $*$-algebra. Note that its antipode
is always invertible (even if $H$ is not finite-dimensional).

Finally, one can show that the dual of a Hopf $*$-algebra~$H$ is
again a Hopf $*$-algebra with the involution given by
\begin{equation}
    \label{eq:dual_star}
    f^*(x)
    =\overline{f\bigl(S(x)^*\bigr)}.
\end{equation}

\subsection{Trivial Hopf algebras and their quantum doubles}

Here we briefly discuss some cases of how Hopf algebras can arise from groups.
In particular we focus on how a group itself can be directly understood in the language
of Hopf algebras. This will be the key for relating Kitaev's original construction
to our generalization.

\subsubsection{Trivial Hopf algebras}
\label{sec:trivial_algebras}

A finite-dimensional Hopf algebra is said to be trivial if it is a group algebra or
the dual of a group algebra for some finite group~$G$. Let $\{g\mid g\in
G\}$ be a basis of the group algebra~$\mathbb{C}G$ with multiplication
\begin{equation}
    \mu(g\otimes h)
    =gh
\end{equation}
and unit $\eta(1_\mathbb{C})=e$. Its comultiplication and counit are
given by
\begin{equation}
    \Delta(g)
    =g\otimes g,\qquad
    \epsilon(g)
    =1_\mathbb{C}
\end{equation}
and extended by linearity. Similarly, the antipode map is defined by
\begin{equation}
    S(g)
    =g^{-1}.
\end{equation}
It is easy to see that every group algebra~$\mathbb{C}G$ is
\emph{co}commutative, but in general \emph{non}-commutative.
Furthermore, one may endow $\mathbb{C}G$ with the involution map
\begin{equation}
    g^*
    =g^{-1}
\end{equation}
which is extended by conjugate linearity. This turns $\mathbb{C}G$ into
a Hopf $C^*$-algebra. Its Haar integral reads
\begin{equation}
    h
    =\frac{1}{\abs{G}}
     \sum_{g\in G}
     g.
\end{equation}

The dual~$(\mathbb{C}G)^*$ of a group algebra coincides with the
space~$\mathbb{C}^G$ of linear functions from the group to the field of
complex numbers. For the dual basis~$\{\delta_g\mid g\in G\}$
multiplication and unit are defined by
\begin{equation}
    \delta_g
    \delta_h
    =\delta_{g,h}
     \delta_g,\qquad
    \eta(1_\mathbb{C})
    =\sum_{g\in G}
     \delta_g
    =1
\end{equation}
where $\delta_{g,h}$ denotes the Kronecker delta. The comultiplication,
counit and antipode maps are given by
\begin{equation}
    \Delta(\delta_g)
    =\!\sum_{uv=g}
     \delta_u\otimes
     \delta_v,\qquad
    \epsilon(\delta_g)
    =\delta_g(e)
    =\delta_{g,e},\qquad
    S(\delta_g)
    =\delta_{g^{-1}}
\end{equation}
for each $g\in G$. One may check that every function
algebra~$\mathbb{C}^G$ is commutative, but in general
\emph{non}-\emph{co}commutative. When contrasted with the corresponding
group algebra, this is not too surprising: since $\mathbb{C}G$ and
$\mathbb{C}^G$ are dual to each other, one obtains the multiplication of
one algebra from the comultiplication of the other one and vice versa.
Again, $\mathbb{C}^G$ is a Hopf $C^*$-algebra via
\begin{equation}
    \delta_g^*
    =\delta_g
\end{equation}
and conjugate-linear extension%
\footnote{If $G$ is an Abelian group then $\delta_g^\star
    =\delta_{g^{-1}}$ defines another Hopf $*$-structure for the
    standard convention. In fact, for $G$ non-Abelian the choice of $*$
    or $\star$ corresponds to a choice of either the standard convention
    or the convention $(a\otimes b)^\star=b^\star\otimes a^\star$ for
    the action on tensor products.}
. Finally,
\begin{equation}
    \phi
    =\delta_e
\end{equation}
is the Haar integral of $\mathbb{C}^G$.

\subsubsection{Quantum doubles of trivial Hopf algebras}

As a vector space the quantum double of a group algebra is nothing but
$\mathrm{D}(\mathbb{C}G)=\mathbb{C}^G\otimes\mathbb{C}G$ with basis
$\{\delta_x\otimes g\mid (x,g)\in G\times G\}$. The fact that the
quantum double is a crossed product is reflected in the multiplication
map
\begin{equation}
    (\delta_x\otimes g)
    (\delta_y\otimes h)
    =\delta_{x,gyg^{-1}}
     \delta_x\otimes
     gh
\end{equation}
where the algebra acts on itself by conjugation. Actually, this is fully
determined by the straightening formula
\begin{equation}
    g
    \delta_x
    :=(1\otimes g)
     (\delta_x\otimes e)
    =\delta_{gxg^{-1}}\otimes
     g.
\end{equation}
(and the embeddings $\delta_x\mapsto\delta_x\otimes e$ and $g\mapsto
1\otimes g$ being algebra morphisms) Here $g\delta_x$ is a frequently
used shorthand notation and should be clear from context whenever it
appears. The unit of $\mathrm{D}(\mathbb{C}G)$ is $1\otimes e$.
Comultiplication and counit are given by
\begin{align}
    \Delta(\delta_x\otimes g)
    & =\!\sum_{uv=x}
       (\delta_v\otimes g)\otimes
       (\delta_u\otimes g), \\
    \epsilon(\delta_x\otimes g)
    & =\epsilon(g)\,
       \delta_x(e)
      =\delta_{x,e}
\end{align}
on the basis elements. Note that the comultiplication in
$\mathrm{D}(\mathbb{C}G)$ is derived from the comultiplication in
$\mathbb{C}G$ and $(\mathbb{C}^G)^\mathrm{cop}
=((\mathbb{C}G)^\mathrm{op})^*$ rather than from $\mathbb{C}^G$ itself!
This is because as an algebra the quantum double is the bicrossed
product $\mathrm{D}(\mathbb{C}G)
=\bigl((\mathbb{C}G)^\mathrm{op}\bigr)^*\bowtie\mathbb{C}G$.

The antipode map is
\begin{equation}
    S(\delta_x\otimes g)
    =\delta_{g^{-1}x^{-1}g}\otimes
     g^{-1}.
\end{equation}

\subsubsection{Hopf subalgebras of trivial Hopf algebras}
\label{sec:subalgebras_group}

The Hopf subalgebras of a Hopf algebra will be important for the
construction of our hierarchy of Hopf tensor network states.

If $H=\mathbb{C}G$, all the Hopf subalgebras~$A$ and $B$ can be
characterized easily~\cite{Bais:2002p1647,Bais:2003p1648}. One finds that the
Hopf subalgebras of $\mathbb{C}G$ are precisely the group
algebras~$\mathbb{C}K$ for $K\subset G$ a subgroup. Furthermore the Hopf
subalgebras of $\mathbb{C}^G$ are exactly given by $\mathbb{C}^{G/N}$
with $N\lhd G$ a normal subgroup. It is also not difficult to find the
Haar integrals of these Hopf subalgebras. In the first case one has
\begin{equation}
    h_K
    =\frac{1}{\abs{K}}
     \sum_{g\in K}
     g.
\end{equation}
For the second case note that $\mathbb{C}^{G/N}$ can be identified with
the functions in $\mathbb{C}^G$ which are constant on cosets of $N$. Now
consider the character~$\chi_N$ of $G$ that factors through $N$ to the
regular character of $G/N$. It is easy to see that when multiplied with
the above functions~$\chi_N$ acts as a two-sided integral. Hence taking
normalization into account, one may convince oneself that
$\abs{G/N}^{-1}\chi_N$ is precisely the Haar integral of
$\mathbb{C}^{G/N}$. By abuse of notation we may write it as:
\begin{equation}
    \phi_N
    =\sum_{n\in N}
     \delta_n.
\end{equation}

\subsection{An example of a nontrivial finite-dimensional Hopf $C^*$-algebra}

We are primarily interested in finite-dimensional nontrivial Hopf $C^*$-algebras,
i.e. finite-dimensional nontrivial Hopf algebras over $\mathbb{C}$ with a compatible
involution. In order to find an example of such a Hopf algebra it is
enough to investigate semisimple Hopf algebras in low dimensions and
check whether these support a compatible $*$-structure. It turns out
that any semisimple Hopf algebra~$H$ of dimension~$p$ or $pq$ with prime
numbers $p<q$ is trivial~\cite{Etingof:1998p1915}. The same applies to
dimensions~$p^2$~\cite{Masuoka:1996p1917}. Hence one may expect
nontrivial semisimple Hopf algebras only for $\dim H\geq8$.

\subsubsection{The Hopf algebra~$H_8$}

In fact, we need to look no further than $\dim H=8$ for an example. At
this dimension there exists a unique nontrivial semisimple Hopf algebra
(up to isomorphism) which is commonly denoted by $H_8$ and is due to Kac
and Paljutkin~\cite{Kac:1966p2444}. Interestingly, it can also be
endowed with a compatible involution which turns $H_8$ into a Hopf
$C^*$-algebra.

Following~\cite{Masuoka:1995p2441} this algebra can be presented by
generators~$x$, $y$, $z$ and the relations
\begin{gather}
    x^2
    =y^2
    =1,\qquad
    z^2
    =\frac{1}{2}
     (1+x+y-xy), \\
    xy
    =yx,\qquad
    zx
    =yz,\qquad
    zy
    =xz.
\end{gather}
It is now easy to see that $\mathcal{B}=\{1,x,y,xy,z,zx,zy,zxy\}$ is a basis of
$H_8$. We will denote the corresponding dual basis by $\{\delta_i\mid
i\in\mathcal{B}\}$.
The coalgebra structure is determined by
\begin{align}
    \Delta(x)   & =x\otimes
                   x,\qquad
                  \Delta(y)
                  =y\otimes
                   y, \\
    \Delta(z)   & =\frac{1}{2}
                   (1\otimes
                    1+
                    y\otimes
                    1+
                    1\otimes
                    x-
                    y\otimes
                    x)
                   (z\otimes z), \\
    \epsilon(x) & =\epsilon(y)
                  =\epsilon(z)
                  =1,
\end{align}
and from this it is evident that $\mathcal{G}(H_8)=\{1,x,y,xy\}
\simeq\mathbb{Z}_2\times\mathbb{Z}_2$.
The antipode reads
\begin{equation}
    S(x)
    =x,\qquad
    S(y)
    =y,\qquad
    S(z)
    =z.
\end{equation}
Finally, the involution is given by
\begin{equation}
    x^*
    =x,\qquad
    y^*
    =y,\qquad
    z^*
    =z^{-1}
    =\frac{1}{2}
     (z+zx+zy-zxy)
    =z^3.
\end{equation}
This shows in particular that antipode and involution act differently on
the generators of $H_8$.

The Haar integrals $h\in H_8$ and $\phi\in H_8^*$ are given by:
\begin{align}
    h    & =\frac{1}{8}
            (1+x+y+xy+z+zx+zy+zxy), \\
    \phi & =\delta_1.
\end{align}
Remarkably they look like the Haar integrals of a group algebra and its
dual! The dual Haar integral~$\phi$ and the complex involution
automatically yield a Hermitian inner product on $H_8$ as discussed in
Section~\ref{sec:Hilbert}. Explicitly, we have for all $a,b\in H_8$
\begin{equation}
    (a,b)
    =\delta_1(a^*b).
\end{equation}
It follows that $\mathcal{B}$ is an orthonormal basis of
$H_8$.

\subsubsection{Hopf subalgebras of $H_8$ and its dual}
\label{sec:subalgebras_H8}

The Hopf subalgebras of $H=H_8$ read $\{1\}$, $\{1,x\}
\simeq\mathbb{CZ}_2$, $\{1,y\}\simeq\mathbb{CZ}_2$, $\{1,xy\}
\simeq\mathbb{CZ}_2$, $\{1,x,y,xy\}
\simeq\mathbb{C}(\mathbb{Z}_2\times\mathbb{Z}_2)$ and $H_8$. This is
because Hopf subalgebras of a Hopf algebra~$H$ correspond bijectively to
sets of irreducible representations of $H^*$ which close under fusion and
dualisation~\cite{Bais:2002p1647,Bais:2003p1648}. Now $H_8^*$ has four
one-dimensional irreducible representations~$Q_1$, $Q_x$, $Q_y$ and $Q_{xy}$
derived from $\mathcal{G}(H_8)$ and the two-dimensional representation given by
\begin{equation}
    Q_2
    =\frac{1}{2}
    \begin{pmatrix}
        z+zx   & z-zx \\
        zy-zxy & zy+zxy
    \end{pmatrix}.
\end{equation}
From the fusion rules
\begin{align}
    Q_g\times Q_{g'} & =Q_{gg'} \\
    Q_g\times Q_2    & =Q_2  \\
    Q_2\times Q_2    & =Q_1+Q_x+Q_y+Q_{xy}
\end{align}
for all $g,g'\in \mathcal{G}(H_8)$ one concludes that precisely the
sets~$\{Q_1\}$, $\{Q_1,Q_x\}$, $\{Q_1,Q_y\}$, $\{Q_1,Q_{xy}\}$, $\{Q_1,
Q_x,Q_y,Q_{xy}\}$ and $\{Q_1,Q_x,Q_y,Q_{xy},Q_2\}$ close under fusion
(and dualization) and therefore are the ones that correspond bijectively
to Hopf subalgebras of $H_8$.

The Haar integrals of the Hopf subalgebras read
\begin{align}
    h_{\{1\}}     & =1, \\
    h_{\gen{x}}   & =\frac{1}{2}\,
                     (1+x), \\
    h_{\gen{y}}   & =\frac{1}{2}\,
                     (1+y), \\
    h_{\gen{xy}}  & =\frac{1}{2}\,
                     (1+xy), \\
    h_{\gen{x,y}} & =\frac{1}{4}\,
                     (1+x+y+xy), \\
    h             & =\frac{1}{8}\,
                     (1+x+y+xy+z+zx+zy+zxy).
\end{align}

Since $H_8^*\simeq H_8$ as Hopf algebras the Hopf subalgebras $B\subset
H_8^*$ are given as the images of the Hopf subalgebras of $H_8$ under
the isomorphism. A particular such Hopf isomorphism can be obtained as
follows. From the dual basis~$\{\delta_a\}$ of the basis $\mathcal{B}\subset H_8$
define another basis $\{f_a\}\subset H_8^*$ via
\begin{align}
    f_1     & :=\delta_1+
                \delta_x+
                \delta_y+
                \delta_{xy}+
                \delta_z+
                \delta_{zx}+
                \delta_{zy}+
                \delta_{zxy}, \\
    f_x     & :=\delta_1+
                \delta_x+
                \delta_y+
                \delta_{xy}-
                \delta_z-
                \delta_{zx}-
                \delta_{zy}-
                \delta_{zxy}, \\
    f_y     & :=\delta_1-
                \delta_x-
                \delta_y+
                \delta_{xy}+
                i\delta_z-
                i\delta_{zx}-
                i\delta_{zy}+
                i\delta_{zxy}, \\
    f_{xy}  & :=\delta_1-
                \delta_x-
                \delta_y+
                \delta_{xy}-
                i\delta_z+
                i\delta_{zx}+
                i\delta_{zy}-
                i\delta_{zxy}, \\
    f_z     & :=\delta_1+
                \delta_x-
                \delta_y-
                \delta_{xy}+
                \delta_z+
                \delta_{zx}-
                \delta_{zy}-
                \delta_{zxy}, \\
    f_{zx}  & :=\delta_1-
                \delta_x+
                \delta_y-
                \delta_{xy}+
                \delta_z-
                \delta_{zx}+
                \delta_{zy}-
                \delta_{zxy}, \\
    f_{zy}  & :=\delta_1+
                \delta_x-
                \delta_y-
                \delta_{xy}-
                \delta_z-
                \delta_{zx}+
                \delta_{zy}+
                \delta_{zxy}, \\
    f_{zxy} & :=\delta_1-
                \delta_x+
                \delta_y-
                \delta_{xy}-
                \delta_z+
                \delta_{zx}-
                \delta_{zy}+
                \delta_{zxy}.
\end{align}
Then the map $H_8\to H_8^*$ given by $a\mapsto f_a$ for all $a
\in\mathcal{B}$ and linear extension is a Hopf isomorphism.

Consequently, the Haar integrals~$\phi_B$ read:
\begin{align}
    \phi_{\{f_1\}}
    & =\delta_1+
       \delta_x+
       \delta_y+
       \delta_{xy}+
       \delta_z+
       \delta_{zx}+
       \delta_{zy}+
       \delta_{zxy}
      =\epsilon, \\
    \phi_{\gen{f_x}}
    & =\delta_1+
       \delta_x+
       \delta_y+
       \delta_{xy},
       \displaybreak[0]\\
    \phi_{\gen{f_y}}
    & =\delta_1+
       \delta_{xy}+
       \frac{1+i}{2}\,
       (\delta_z+
        \delta_{zxy})+
       \frac{1-i}{2}\,
       (\delta_{zx}+
        \delta_{zy}),
       \displaybreak[0]\\
    \phi_{\gen{f_{xy}}}
    & =\delta_1+
       \delta_{xy}+
       \frac{1-i}{2}\,
       (\delta_z+
        \delta_{zxy})+
       \frac{1+i}{2}\,
       (\delta_{zx}+
        \delta_{zy}),
       \displaybreak[0]\\
    \phi_{\gen{f_x,f_y}}
    & =\delta_1+
       \delta_{xy}, \\
    \phi
    & =\delta_1.
\end{align}

\section{Constructing quantum double models from Hopf algebras}
\label{sec:model}

The goal of this section is to construct a two-dimensional quantum spin model whose
microscopic degrees of freedom are given by a finite-dimensional Hopf algebra~$H$, such that
its emerging degrees of freedom are characterized by the quantum
double~$\mathrm{D}(H)$. This illuminates Kitaev's
insight in~\cite{Kitaev:2003}.
The model obtained in this fashion dynamically implements the quantum
double construction. This means that its Hilbert
space aquires additional structure, namely that the graph underlying the
spin model can be interpreted as a $\mathrm{D}(H)$-module, or representation
of $\mathrm{D}(H)$. State vectors
describing (elementary) quasiparticle excitations above the ground state
then live in the irreducible representations of $\mathrm{D}(H)$ the Hilbert space
decomposes into. In fact, for every irreducible representation of the quantum
double there will be a corresponding type of quasiparticle
excitation. These quasiparticle excitations can then naturally be
braided via the quasitriangular structure of $\mathrm{D}(H)$ and, in general,
exhibit exchange statistics beyond bosons or fermions. In order to render
their topological nature manifest we would also like these quasiparticle
excitations to be agnostic to the details of the microscopic background.
In particular, they should only feel the topological properties of the
underlying surface rather than the precise shape of the embedded graph.
In other words, the quantum spin model we construct should be insensitive
to the particular discretization chosen. It goes without saying that any
discretization will typically have a fine granularity in order to make the
condensed matter character of the model evident.

Since there is a natural action of the quantum double~$\mathrm{D}(H)$ on
the Hopf algebra~$H$ itself one can nevertheless regard a minimal graph consisting of
precisely one edge as a representation of $\mathrm{D}(H)$ by identifying the edge
with $H$. However, it is clear that this system cannot contain all
irreducible representations of the quantum double. While this restriction
can already be overcome by considering two graph edges associated with
$H\otimes H$ one still needs to go to a macroscopic regime from either of these small graphs. It is precisely the process of spatially extending
the action of $\mathrm{D}(H)$
from one edge to many edges which will eventually yield the Hamiltonian
and thus define the quantum double model completely.

In the following we will first construct a representation of $\mathrm{D}(H)$ from
a small graph with just two edges. Next we will endow any larger graph
obtained from a surface cellulation with a local $\mathrm{D}(H)$-module structure. Finally
we will introduce the Hilbert space for our generalized quantum double model
and assemble the Hamiltonian from particular operators in the local
representations of $\mathrm{D}(H)$.

\subsection{Graph representations of quantum doubles}

Given a finite-dimensional Hopf algebra~$H$ and an oriented graph~$\Gamma=(V,E,F)$
we define the
microscopic Hilbert space as $\mathcal{N}_\Gamma=\bigotimes_{e\in E}H$. Our
goal is to find local operators~$A_a$ and $B_f$ acting on $\mathcal{N}_\Gamma$
which represent both parts~$H$ and $X=(H^\mathrm{op})^*$ of the quantum
double~$\mathrm{D}(H)$ separately via the embeddings~$i_X$ and $i_H$.
Additionally we want these two operators to commute in such a way that
they implement the straightening formula~\eqref{eq:straighten} of $\mathrm{D}(H)$. Put
differently, we require the operators~$A_a$ and $B_f$ to
\emph{establish} the bicrossed multiplication of the quantum double by
interacting nontrivially on the intersection of their supports. Loosely
speaking, this amounts to assembling the adjoint representation from
smaller building blocks. Note that on the minimal graph comprising a single edge
the adjoint representation can be turned into an operator “as is”.

\begin{figure}
    \centering
    \includegraphics{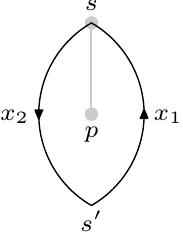}
    \caption{Small graph representing $H\otimes H$. It affords a
        $\mathrm{D}(H)$-module structure at the site~$(s,p)$
        via the operators $A_a(s,p)$ and $B_f(s,p)$.}
    \label{fig:single_loop}
\end{figure}

We begin with the small graph shown in Figure~\ref{fig:single_loop} and
its associated Hilbert space $\mathcal{N}_\Gamma=H\otimes H$. In the spirit
of~\cite{Kitaev:2003} we define the following module structures
on $H$ for all $a,x\in H$ and $f\in X$:
\begin{align}
    L_+^a(x) & :=ax,
                 \label{eq:action1}\\
    L_-^a(x) & :=x\,S(a),
                 \label{eq:action2}\\
    T_+^f(x) & :=\sum_{(x)}
                 \eval{f}
                      {x''}\,
                 x',
                 \label{eq:action3}\\
    T_-^f(x) & :=\sum_{(x)}
                 \eval{f}
                      {S^{-1}(x')}\,
                 x''.
                 \label{eq:action4}
\end{align}
More precisely, the operators $L_\pm$ define actions of $H$ on itself while
$T_\pm$ define actions of $X$ on $H$. Furthermore these actions
are intimately related: one may start with the
left multiplication~$L_+$ and then canonically derive all other
actions. As a consequence one has for example
\begin{align}
    L_-^a & =S\circ
             L_+^a\circ
             S^{-1},
             \label{eq:L+-}\\
    T_-^f & =S\circ
             T_+^f\circ
             S^{-1}.
             \label{eq:T+-}
\end{align}
             This means that
    if one fixes an arbitrary pattern of edge orientations one may
    relate any other pattern to the original one using these relations.
    Unfortunately this does not treat all possible orientations on
    equal footing. We will resolve this
    issue in Section~\ref{sec:Hilbert}. Also note that
\begin{align}
    [L_+^a,L_-^b] & =0
                     \label{eq:commutation_L}, \\
    [T_+^f,T_-^g] & =0
                     \label{eq:commutation_T}
\end{align}
for arbitrary $a,b\in H$ and $f,g\in X$.

Next we attribute operators~$A_a$ and $B_f$ to the sites~$(s,p)$ of the
graph~$\Gamma$. Following~\cite{Kitaev:2003}, a site is defined as a pair consisting of a vertex~$s$ and one of its
adjacent faces~$p$. For the particular site~$(s,p)$ shown in
Figure~\ref{fig:single_loop} we define the operators in terms of the above
actions:
\begin{align}
    A_a(s,p) & :=\sum_{(a)}
                 L_+^{a'}\otimes
                 L_-^{a''}, \\
    B_f(s,p) & :=\sum_{(f)}
                 T_-^{f'}\otimes
                 T_-^{f''}.
\end{align}
It is worth emphasizing that this definition only holds for the edge
orientation chosen in the figure. However, using the conventions set
forth by Figure~\ref{fig:kitaev_edge} we can extend this definition to
arbitrarily oriented edges. Also note an
important detail: our seemingly similar notation refers to two rather
different comultiplications, namely, the one in $H$ and the other one in
$X$.

\begin{figure}
    \centering
    \includegraphics{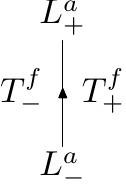}
    \caption{A graph edge representing the Hopf algebra~$H$ via the
        $H$-module structures~$L_\pm$. At the same time it represents
        the Hopf algebra~$X=(H^\mathrm{op})^*$ via the $X$-module
        structures~$T_\pm$. These are related to
        each other by means of the antipode in~\eqref{eq:L+-} and
        \eqref{eq:T+-}.}
    \label{fig:kitaev_edge}
\end{figure}

We are now prepared to analyze the commutation properties of these
operators which we take both at the \emph{same} site~$(s,p)$ for the
moment. It is clear that the support of both operators coincides on this
small graph, hence we can study the interaction of $A_a$ and $B_f$
without any additional complications. The following lemma shows that
indeed the two operators define a representation of $\mathrm{D}(H)$ on
$H\otimes H$.

\begin{lemma}
    For the operators~$A_a(s,p)$ and $B_f(s,p)$ the straightening
    formula~\eqref{eq:straighten} holds in the form:
    \begin{equation}
        A_aB_f
        =\sum_{(a)}
         B_{f(S^{-1}(a''')?a')}
         A_{a''}.
    \end{equation}
\end{lemma}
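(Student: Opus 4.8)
The plan is to verify the identity by letting both sides act on an arbitrary vector $x\otimes y\in H\otimes H$ and reducing each to the same normal form; the entire argument is then a careful bookkeeping exercise in Sweedler notation.

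First I would evaluate the left-hand side. Applying $B_f$ and unfolding $T_-$ gives $B_f(x\otimes y)=\sum f'\bigl(S^{-1}(x')\bigr)\,f''\bigl(S^{-1}(y')\bigr)\,x''\otimes y''$, where the sum runs over the coproducts of $f$ (taken in $X$), of $x$ and of $y$. The decisive observation is that the comultiplication of $X=(H^{\mathrm{op}})^*$ is $(\mu^{\mathrm{op}})^T$, so that $\sum_{(f)}f'(u)\,f''(v)=f(vu)$ for all $u,v\in H$. With $u=S^{-1}(x')$ and $v=S^{-1}(y')$, and using that $S^{-1}$ is an anti-homomorphism, the two pairings collapse into one, $B_f(x\otimes y)=\sum f\bigl(S^{-1}(x'y')\bigr)\,x''\otimes y''$. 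Applying $A_a=\sum_{(a)}L_+^{a'}\otimes L_-^{a''}$ then yields the normal form
\begin{equation}
  A_aB_f(x\otimes y)=\sum f\bigl(S^{-1}(y')\,S^{-1}(x')\bigr)\,a'x''\otimes y''S(a''),
\end{equation}
the sum being over the coproducts of $a$, $x$ and $y$.

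Next I would expand the right-hand side. Writing $\Delta^{(2)}(a)=\sum a'\otimes a''\otimes a'''$, I apply $A_{a''}$ first; coassociativity promotes this to $\Delta^{(3)}(a)$ and produces $a^{(2)}x\otimes y\,S(a^{(3)})$, after which I apply $B_g$ with $g=f\bigl(S^{-1}(a^{(4)})\,?\,a^{(1)}\bigr)$. Using the same $X$-comultiplication identity together with $\Delta\circ S=(S\otimes S)\circ\Delta^{\mathrm{cop}}$ to split $a^{(2)}$ and $a^{(3)}$ produces a term indexed by the sixfold coproduct $\Delta^{(5)}(a)=\sum a^{(1)}\otimes\dots\otimes a^{(6)}$, of the shape
\begin{equation}
  f\bigl(S^{-1}(a^{(6)})\,a^{(5)}\,S^{-1}(y')\,S^{-1}(x')\,S^{-1}(a^{(2)})\,a^{(1)}\bigr)\,a^{(3)}x''\otimes y''S(a^{(4)}).
\end{equation}

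Finally I would collapse this back to the normal form using the antipode axioms, the key ingredient being the $S^{-1}$-version of the antipode--counit relation $\sum S^{-1}(c'')\,c'=\epsilon(c)\,1_H$. Applied to the adjacent pair $S^{-1}(a^{(6)})\,a^{(5)}$ it produces a counit that, absorbed by coassociativity, drops the coproduct from sixfold to fourfold; applied once more to $S^{-1}(a^{(2)})\,a^{(1)}$ it drops it to the ordinary coproduct of $a$. What survives is precisely $\sum f\bigl(S^{-1}(y')\,S^{-1}(x')\bigr)\,a'x''\otimes y''S(a'')$, matching the left-hand side. I expect the main obstacle to be exactly this index bookkeeping: one must keep the two comultiplications (that of $H$, entering through $A_a$, and that of $X$, entering through $B_f$) rigorously apart, track the order reversal induced by $S$ and $S^{-1}$, and trigger the two cancellations at precisely the right factors. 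Once the normal-form target is fixed, however, each manipulation is essentially forced.
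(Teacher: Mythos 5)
Your proposal is correct and follows essentially the same route as the paper's proof: evaluate both sides on an arbitrary element of $H\otimes H$, use $\Delta_X=(\mu^{\mathrm{op}})^T$ to collapse the two pairings into one with reversed order, expand the right-hand side to a sixfold coproduct via coassociativity, multiplicativity of $\Delta$ and $\Delta\circ S=(S\otimes S)\circ\Delta^{\mathrm{cop}}$, and cancel adjacent legs with the antipode and skew-antipode axioms. The only cosmetic difference is that you pull $S^{-1}$ apart as an anti-homomorphism immediately and apply the skew-antipode relation twice, whereas the paper keeps one pair inside $S^{-1}(\cdots)$ and uses $\sum S(a')a''=\epsilon(a)1_H$ once; the two bookkeeping schemes are trivially equivalent.
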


\begin{proof}
    The statement is proven by evaluating both sides of the
    straightening formula on arbitrary elements $x_1,x_2\in H$. To this
    end consider first:
    \begin{align*}
        A_aB_f
        (x_1\otimes x_2)
        & =A_a
           \sum_{(f)}
           T_-^{f'}(x_1)\otimes
           T_-^{f''}(x_2) \\
        & =A_a
           \sum_{(x_i)}
           \sum_{(f)}
           \eval{f'}
                {S^{-1}(x_1')}
           \eval{f''}
                {S^{-1}(x_2')}\,
           x_1''\otimes
           x_2'' \\
        & =\sum_{(x_i)}\,
           \Bigl\langle
           \sum_{(f)}
           f'\otimes f'',
           S^{-1}(x_1')\otimes
           S^{-1}(x_2')
           \Bigr\rangle
           \sum_{(a)}
           L_+^{a'}(x_1'')\otimes
           L_-^{a''}(x_2'') \\
        & =\!\sum_{(a)(x_i)}\!
           \eval{f}
                {S^{-1}(x_2')\,
                 S^{-1}(x_1')}\,
           a'x_1''\otimes
           x_2''\,
           S(a'') \\
        & =\!\sum_{(a)(x_i)}\!
           \eval{f}
                {S^{-1}(x_1'x_2')}\,
           a'x_1''\otimes
           x_2''\,
           S(a'').
    \end{align*}
    Note that in the third line we have used the opposite
    multiplication~$\mu^\mathrm{op}$ which is the appropriate dual to
    swap the comultiplication
    \begin{equation*}
        \sum_{(f)}
        f'\otimes f''
        =(\mu^\mathrm{op})^T(f)
    \end{equation*}
    in $X$ from $f$ onto its argument in $H$.

    In a second step compare this with:
    \begin{align*}
        \sum_{(a)}
        B_{f(S^{-1}(a''')?a')}
        A_{a''}
        (x_1\otimes x_2)
        & =\sum_{(a)}
           B_{f(S^{-1}(a^{(4)})?a^{(1)})}
           \bigl(a^{(2)}x_1\otimes
                 x_2\,S(a^{(3)})\bigr) \\
        & =\sum_{(a)}
           B_{\tilde{f}}
           \bigl(a^{(2)}x_1\otimes
                 x_2\,S(a^{(3)})\bigr)
    \end{align*}
    where we used the temporary abbreviation $\tilde{f}
    =f\bigl(S^{-1}(a^{(4)})\,? a^{(1)}\bigr)$. Now we have
    for the right-hand side
    \begin{align*}
        & \sum_{(a)(\tilde{f})}
          T_-^{\tilde{f}'}
          (a^{(2)}x_1)\otimes
          T_-^{\tilde{f}''}
          \bigl(x_2\,S(a^{(3)})\bigr)
          \displaybreak[0]\\
        & =\!\sum_{(a)(\tilde{f})}
           \sum_{(a^{(2)}x_1)}\!
           \bigeval{\tilde{f}'}
                   {S^{-1}\bigl((a^{(2)}x_1)'\bigr)}\,
           (a^{(2)}x_1)''\otimes{} \\
        & \relphantom{=}
           \hphantom{\!\sum_{(a)(\tilde{f})}}
           \sum_{(x_2S(a^{(3)}))}\!
           \bigeval{\tilde{f}''}
                   {S^{-1}\bigl[\bigl(x_2\,S(a^{(3)})\bigr)'\bigr]}\,
           \bigl(x_2\,S(a^{(3)})\bigr)''
           \displaybreak[0]\\
        & =\sum_{(a)}
           \sum_{(a^{(2)}x_1)}
           \sum_{(x_2S(a^{(3)}))}
           \bigeval{\tilde{f}}
                   {S^{-1}\bigl[\bigl(x_2\,S(a^{(3)})\bigr)'\bigr]\,
                    S^{-1}\bigl((a^{(2)}x_1)'\bigr)} \\
        & \relphantom{=}
           \hphantom{\sum_{(a)}
                     \sum_{(a^{(2)}x_1)}
                     \sum_{(x_2S(a^{(3)}))}}
           (a^{(2)}x_1)''\otimes
           \bigl(x_2\,S(a^{(3)})\bigr)''
           \displaybreak[0]\\
        & =\sum_{(a)}
           \sum_{(a^{(2)}x_1)}
           \sum_{(x_2S(a^{(3)}))}
           \bigeval{\tilde{f}}
                   {S^{-1}\bigl[(a^{(2)}x_1)'
           \bigl(x_2\,S(a^{(3)})\bigr)'\bigr]}\,
           (a^{(2)}x_1)''\otimes
           \bigl(x_2\,S(a^{(3)})\bigr)'' \\
        \intertext{and upon restoring $f$ this becomes}
        & \sum_{(a)}
          \sum_{(a^{(2)}x_1)}
          \sum_{(x_2S(a^{(3)}))}
          \bigeval{f}
                  {S^{-1}(a^{(4)})\,
                   S^{-1}\bigl[(a^{(2)}x_1)'
                               \bigl(x_2\,S(a^{(3)})\bigr)'\bigr]\,
                   a^{(1)}} \\
        & \relphantom{=}
           \hphantom{\sum_{(a)}
                     \sum_{(a^{(2)}x_1)}
                     \sum_{(x_2S(a^{(3)}))}}
           (a^{(2)}x_1)''\otimes
           \bigl(x_2\,S(a^{(3)})\bigr)''
           \displaybreak[0]\\
        & =\sum_{(a)}
           \sum_{(a^{(2)}x_1)}
           \sum_{(x_2S(a^{(3)}))}
           \bigeval{f}
                   {S^{-1}\bigl[(a^{(2)}x_1)'
                                \bigl(x_2\,S(a^{(3)})\bigr)'
                                a^{(4)}\bigr]\,
                    a^{(1)}}
           \displaybreak[0]\\
        & \relphantom{=}
           \hphantom{\sum_{(a)}
                     \sum_{(a^{(2)}x_1)}
                     \sum_{(x_2S(a^{(3)}))}}
           (a^{(2)}x_1)''\otimes
           \bigl(x_2\,S(a^{(3)})\bigr)''
           \displaybreak[0]\\
        & =\!\sum_{(a)(x_i)}
           \sum_{(a^{(2)})}
           \sum_{(S(a^{(3)}))}
           \bigeval{f}
                   {S^{-1}\bigl((a^{(2)})'
                                x_1'x_2'\,
                                S(a^{(3)})'
                                a^{(4)}\bigr)\,
                    a^{(1)}} \\
        & \relphantom{=}
           \hphantom{\!\sum_{(a)(x_i)}
                     \sum_{(a^{(2)})}
                     \sum_{(S(a^{(3)}))}}
           (a^{(2)})''
           x_1''\otimes
           x_2''\,
           S(a^{(3)})''
           \displaybreak[0]\\
        & =\!\sum_{(a)(x_i)}
           \sum_{(S(a^{(4)}))}\!
           \bigeval{f}
                   {S^{-1}\bigl(a^{(2)}
                                x_1'
                                x_2'\,
                                S(a^{(4)})'
                                a^{(5)}\bigr)\,
                    a^{(1)}}\,
           a^{(3)}
           x_1''\otimes
           x_2''\,
           S(a^{(4)})''
           \displaybreak[0]\\
        & =\!\sum_{(a)(x_i)}
           \sum_{(a^{(4)})}
           \bigeval{f}
                   {S^{-1}\bigl[a^{(2)}
                                x_1'
                                x_2'\,
                                S\bigl((a^{(4)})''\bigr)\,
                                a^{(5)}\bigr]\,
                    a^{(1)}}\,
           a^{(3)}
           x_1''\otimes
           x_2''\,
           S\bigl((a^{(4)})'\bigr)
           \displaybreak[0]\\
        & =\!\sum_{(a)(x_i)}\!
           f\bigl[S^{-1}\bigl(a^{(2)}
                              x_1'
                              x_2'\,
                              S(a^{(5)})\,
                              a^{(6)}\bigr)\,
                  a^{(1)}\bigr]\,
           a^{(3)}
           x_1''\otimes
           x_2''\,
           S(a^{(4)})
           \displaybreak[0]\\
        & =\!\sum_{(a)(x_i)}\!
           f\bigl[S^{-1}\bigl(a^{(2)}
                              x_1'
                              x_2'\,
                              \epsilon(a^{(5)})\bigr)\,
                  a^{(1)}\bigr]\,
           a^{(3)}
           x_1''\otimes
           x_2''\,
           S(a^{(4)})
           \displaybreak[0]\\
        & =\!\sum_{(a)(x_i)}\!
           f\bigl(S^{-1}(x_1'x_2')\,
                  S^{-1}(a^{(2)})\,
                  a^{(1)}\bigr)\,
           a^{(3)}
           x_1''\otimes
           x_2''\,
           S(a^{(4)})
           \displaybreak[0]\\
        & =\!\sum_{(a)(x_i)}\!
           f\bigl(S^{-1}(x_1'x_2')\,
                  \epsilon(a')\bigr)\,
           a''
           x_1''\otimes
           x_2''\,
           S(a''')
           \displaybreak[0]\\
        & =\!\sum_{(a)(x_i)}\!
           \eval{f}
                {S^{-1}(x_1'x_2')}\,
           a'
           x_1''\otimes
           x_2''\,
           S(a'').
    \end{align*}
    Note that in the third line we have employed the fact that $\Delta$
    is an algebra morphism. Since the above is true for all $x_1,x_2\in
    H$ we have just proven the claim.\qed
\end{proof}

\bigskip

\noindent
In order to obtain local $\mathrm{D}(H)$-representations at the sites of
arbitrary graphs we need to extend the above actions of $A_a$ and $B_f$
to larger vertices and faces. It will inevitably happen that
the two operators~$A_a$ and $B_f$ act on different sets of edges which do \emph{not} fully
coincide. We need to ensure that a)~both operators
continue to represent their respective parts~$H$ and $X$ individually
and b)~the commutation relation arising from common edges still implements the
bicrossed structure of the quantum double~$\mathrm{D}(H)$.

The extension can be done as follows for general vertices~$s$ and faces~$p$.
\begin{definition}
\label{def:general_module}
Let $(s,p)$ be a site of the graph~$\Gamma$, $a\in H$ and $f\in X$. Then
we define vertex and face operators via
\begin{align}
    A_a(s,p)\
    \vcenter{\hbox{\includegraphics{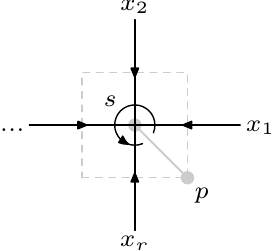}}}
    & :=\sum_{(a)}
       L_+^{a^{(1)}}(x_1)\otimes
       \dots\otimes
       L_+^{a^{(r)}}(x_r)
       \nonumber\\
    & =\sum_{(a)}\,
       \vcenter{\hbox{\includegraphics{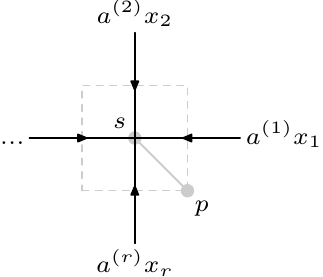}}}\ ,
       \label{eq:vertex_operator_simple}
       \displaybreak[0]\\
    B_f(s,p)\
    \vcenter{\hbox{\includegraphics{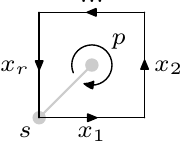}}}
    & :=\sum_{(f)}
       T_-^{f^{(r)}}(x_1)\otimes
       \dots\otimes
       T_-^{f^{(1)}}(x_r)
       \nonumber\\
    & =\sum_{(x_i)}
       f\bigl(S^{-1}(x_1')
              \cdots
              S^{-1}(x_r')\bigr)\
       \vcenter{\hbox{\includegraphics{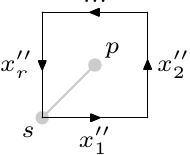}}}\ .
       \label{eq:face_operator_simple}
\end{align}
\end{definition}

Note that for $A_a(s,p)$ the vertex~$s$ denotes the centre of the loop
on the dual graph and $p$ denotes the starting point within the loop
for comultiplication.
Analogously, for $B_f(s,p)$ the starting point of the loop around the face~$p$
is marked by $s$. For different edge orientations the actions $L_+$ and $T_-$
in \eqref{eq:vertex_operator_simple} and \eqref{eq:face_operator_simple} may need to
be replaced per edge by $L_-$ and $T_+$ according to Figure~\ref{fig:kitaev_edge}.

The idea behind this is: first one fixes the action of $a\in H$ such
that the coproduct of $a$ is applied to the edges of the vertex~$s$ in
\emph{counterclock}wise order. This is indicated by the arrow winding
around $s$. When defining the action of $f\in X$ on a face it then
appears at first there might be two choices for the orientation of the
coproduct of $f$. It turns out that only one of them, namely
distributing the coproduct of $f$ in \emph{clock}wise orientation on the
graph, will yield a meaningful theory. The orientation of the coproduct
is again indicated by the arrow winding around $p$.

At this point the reader may wonder why one does not take the
function~$f$ directly from $H^*$ (the dual of $H$ \emph{without} flipped
comultiplication) and the following more symmetric definition of the
action~\eqref{eq:face_operator_simple}:
\begin{equation}
    \label{eq:face_operator_simple2}
    \begin{split}
        \widetilde{B}_f(s,p)\
        \vcenter{\hbox{\includegraphics{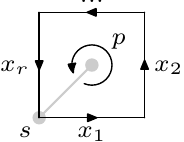}}}
        & :=\sum_{(f)}
           T_-^{f^{(1)}}(x_1)\otimes
           \dots\otimes
           T_-^{f^{(r)}}(x_r) \\
        & =\sum_{(x_i)}
           f\bigl(S^{-1}(x_1')
                  \cdots
                  S^{-1}(x_r')\bigr)\
           \vcenter{\hbox{\includegraphics{lattices2-2}}}\ .
    \end{split}
\end{equation}
Note that here the arrow winds around $p$ in \emph{counterclock}wise
orientation since $f\in H^*$ has a different coproduct now as compared
to~\eqref{eq:face_operator_simple}. Clearly,
both~\eqref{eq:face_operator_simple}
and~\eqref{eq:face_operator_simple2} define the same (algebra) action on the
boundary edges of the face.
However, as cumbersome as taking $f\in X$ in the definition of $B_f(s,p)$ may
seem it will make the next task tremendously easier: making
contact with the Drinfeld double $\mathrm{D}(H)=X\bowtie H$ which these
operators~$A_a$ and $B_f$ represent.

\begin{figure}
    \centering
    \includegraphics{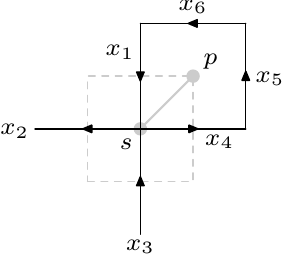}
    \caption{Graph used in the proof of Theorem~\ref{thm:local_modules}. The commutation relation and hence the
        structure of $\mathrm{D}(H)$ is determined on the intersection
        of the supports of $A_a(s,p)$ and $B_f(s,p)$.}
    \label{fig:kitaev_site}
\end{figure}

\begin{theorem}
    \label{thm:local_modules}
    Let $H$ a finite-dimensional Hopf algebra. Then each site~$(s,p)$
    of the graph~$\Gamma$ supports a $\mathrm{D}(H)$-module structure given by
    the operators~$A_a(s,p)$ and $B_f(s,p)$ from Definition~\ref{def:general_module}.
\end{theorem}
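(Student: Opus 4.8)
The plan is to show that the assignment $\pi\colon f\otimes a\mapsto B_f(s,p)\,A_a(s,p)$ extends to an algebra homomorphism $\mathrm{D}(H)\to\End(\mathcal{N}_\Gamma)$, since this is exactly what a $\mathrm{D}(H)$-module structure means. Because $\mathrm{D}(H)=X\bowtie H$ is generated by the two algebra embeddings $i_X,i_H$ whose images multiply through the straightening formula~\eqref{eq:straighten}, it suffices to establish three facts: (i) $a\mapsto A_a(s,p)$ is an algebra map out of $H$; (ii) $f\mapsto B_f(s,p)$ is an algebra map out of $X$; and (iii) the general-site version of the straightening relation $A_aB_f=\sum_{(a)}B_{f(S^{-1}(a''')?a')}A_{a''}$ holds. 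Granting these, using (iii) to straighten $A_aB_g$ and then (i), (ii) to recombine, one checks $\pi(f\otimes a)\,\pi(g\otimes b)=B_fA_aB_gA_b=\pi\bigl((f\otimes a)(g\otimes b)\bigr)$ against the multiplication formula of $\mathrm{D}(H)$, so $\pi$ is a representation.

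First I would dispatch (i). Since $L_+^{a}L_+^{b}=L_+^{ab}$ and the vertex operator distributes $\Delta^{(r-1)}(a)$ over the incident edges, composing $A_aA_b$ produces $a^{(i)}b^{(i)}$ on edge $i$, and as $\Delta^{(r-1)}$ is an algebra morphism this reassembles to $\Delta^{(r-1)}(ab)$, giving $A_aA_b=A_{ab}$ and $A_{1_H}=\id$. Fact (ii) is where the \emph{reversed} ordering of the coproduct components in~\eqref{eq:face_operator_simple} pays off. Writing $B_f$ in the closed form $\sum_{(x_i)}f\bigl(S^{-1}(x_1')\cdots S^{-1}(x_r')\bigr)\,x_1''\otimes\cdots\otimes x_r''$ and composing $B_fB_g$, I would invoke that $\Delta$ is an algebra morphism and that $\Delta\circ S^{-1}=(S^{-1}\otimes S^{-1})\circ\Delta^{\mathrm{cop}}$ to show that $\Delta\bigl(S^{-1}(x_1')\cdots S^{-1}(x_r')\bigr)$ splits into precisely the two factors consumed by $f$ and $g$. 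This reproduces the convolution product $(fg)(x)=\sum_{(x)}f(x')g(x'')$, which is the multiplication $\Delta^T$ of $X$; hence $B_fB_g=B_{fg}$, and $B_\epsilon=\id$ follows from $\epsilon\circ S^{-1}=\epsilon$.

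The heart of the argument is (iii), and the key structural observation is that the supports of $A_a(s,p)$ and $B_f(s,p)$ overlap only on the two edges of $E(s)\cap E(p)$ shown in Figure~\ref{fig:kitaev_site}; on every other edge exactly one of the operators acts nontrivially while the other is the identity. I would therefore split the edges into the two \emph{shared} ones and the remaining \emph{spectator} edges. On the spectators the two operators have disjoint support and commute trivially, so the coproduct components of $a$ (acting by $L_\pm$) and of $f$ (acting by $T_-$) simply pass through unchanged. On the two shared edges the local configuration is exactly the minimal two-edge graph of Figure~\ref{fig:single_loop}, so the interaction there is governed verbatim by the Lemma already proven; coassociativity then lets me group the coproduct of $a$ so that the outermost components $a',a'''$ selected at the shared pair are precisely those absorbed into the relabelled function $f(S^{-1}(a''')?a')$, while $a''$ feeds the full vertex operator $A_{a''}$.

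I expect the main obstacle to be the bookkeeping in (iii): making rigorous the claim that the straightening localizes to the shared edges requires carefully tracking how the iterated coproducts of $a$ and $f$ distribute over $E(s)$ and $E(p)$, and then using coassociativity together with the counit axiom to collapse the spectator components without disturbing the two factors entering the modified function. The orientation conventions — counterclockwise for $a$, clockwise for $f$ — must be matched exactly so that the local computation on the shared pair agrees with the Lemma rather than its mirror image; this is precisely the asymmetry that motivated taking $f\in X$ rather than $f\in H^*$ in~\eqref{eq:face_operator_simple}.
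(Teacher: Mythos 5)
Your reduction of the theorem to the three facts (i)--(iii), together with the final recombination $B_fA_aB_gA_b=\sum_{(a)}B_{f\,g(S^{-1}(a''')\,?\,a')}A_{a''b}$, is sound and matches the paper's own logic: the paper likewise treats (i) and (ii) as immediate (your sketches for them are correct) and devotes its entire proof to the straightening relation (iii), which it establishes by a direct Sweedler computation at a representative site. The gap is in how you propose to obtain (iii).

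The claim that on the two shared edges ``the interaction is governed verbatim by the Lemma'' does not hold up. The Lemma takes as input \emph{adjacent} coproduct components, $\Delta(a)=\sum_{(a)}a'\otimes a''$ and likewise for $f$; at a general site (Figure~\ref{fig:kitaev_site}, say) the shared edges instead receive the \emph{outermost} components $a^{(1)},a^{(4)}$ of $\Delta^{(3)}(a)$ and $f^{(1)},f^{(4)}$ of the iterated coproduct of $f$, with the spectator components interleaved between them. Moreover, the spectators cannot be ``collapsed via the counit'': they act nontrivially on their edges and persist on both sides of the identity (on the right-hand side they reappear as the middle legs of $A_{a''}$). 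The legitimate way to strip them is to pair their slots with arbitrary functionals $\mu\in H^*$, $\nu\in X^*$ (allowed because $b\mapsto L_\pm^b$ and $g\mapsto T_-^g$ are injective); doing so, and using your own (correct) observation that the coproduct of $f\bigl(S^{-1}(a''')\,?\,a'\bigr)$ in $X$ attaches $a'$ and $a'''$ only to the two outer legs, what remains to be proved is
\begin{multline*}
    \sum_{(a)(f)}\mu(a'')\,\nu(f'')\;
    L_+^{a'}T_-^{f'}\otimes
    L_-^{a'''}T_-^{f'''} \\
    =\sum_{(a)(f)}\mu(a^{(3)})\,\nu(f'')\;
    T_-^{f'(?\,a^{(1)})}L_+^{a^{(2)}}\otimes
    T_-^{f'''(S^{-1}(a^{(5)})\,?)}L_-^{a^{(4)}}
\end{multline*}
for all $\mu$ and $\nu$. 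This is \emph{not} an instance of the Lemma: the element $\sum_{(a)}a'\otimes\mu(a'')\,a'''$ is in general not the coproduct $\Delta(b)$ of any $b\in H$ (already for $H=\mathbb{C}^G$ with $G$ non-Abelian it fails the coassociativity constraint characterizing the image of $\Delta$), so there is no element to which the proven Lemma could be applied as a black box, and coassociativity plus the counit axiom alone will not transport it to this setting. Filling the gap requires rerunning the Lemma's computation with $\mu$ and $\nu$ carried along --- which is exactly what the paper's proof does, the counit being used there only to collapse the adjacent antipode pairs $\sum_{(a)}S^{-1}(a'')\,a'=\epsilon(a)\,1_H$ generated by the straightening, never the spectators. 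In short, your plan identifies the right statement to prove, but the proposed mechanism for (iii) fails, and repairing it amounts to doing in full the computation the reduction was meant to avoid.
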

\begin{proof}
    In order to keep the notation
simple we only consider sites such as the one depicted in
Figure~\ref{fig:kitaev_site}. The generality of the argument will not be
affected.

    In order to prove the $\mathrm{D}(H)$-module structure at site~$(s,p)$ it is again
    enough to show that the straightening formula holds. Let $a\in H$,
    $f\in X$ and evaluate $A_aB_f$ on arbitrary edges $x_i\in H$:
    \begin{align*}
        A_a
        B_f
        (x_1\otimes
         \dots\otimes
         x_6)
        & =A_a
           \sum_{(x_i)}
           f\bigl(S^{-1}(x_4')\,
                  S^{-1}(x_5')\,
                  S^{-1}(x_6')\,
                  S^{-1}(x_1')\bigr) \\
        & \relphantom{=}
           \hphantom{A_a
                     \sum_{(x_i)}}
           x_1''\otimes
           x_2\otimes
           x_3\otimes
           x_4''\otimes
           x_5''\otimes
           x_6''
           \displaybreak[0]\\
        & =\sum_{(x_i)}\,
           f\bigl(S^{-1}(x_4')\,
                  S^{-1}(x_5')\,
                  S^{-1}(x_6')\,
                  S^{-1}(x_1')\bigr) \\
        & \relphantom{=}
           \sum_{(a)}
           a^{(1)}
           x_1''\otimes
           x_2\,
           S(a^{(2)})\otimes
           a^{(3)}
           x_3\otimes
           x_4''\,
           S(a^{(4)})\otimes
           x_5''\otimes
           x_6''
           \displaybreak[0]\\
        & =\!\!\sum_{(a)(x_i)}\!
           \eval{f}
                {S^{-1}(x_1'x_6'x_5'x_4')} \\
        & \relphantom{=}
           \hphantom{\!\!\sum_{(a)(x_i)}\!}
           a^{(1)}
           x_1''\otimes
           x_2\,
           S(a^{(2)})\otimes
           a^{(3)}
           x_3\otimes
           x_4''\,
           S(a^{(4)})\otimes
           x_5''\otimes
           x_6''. \\
    \end{align*}
    Compare this with the case:
    \begin{multline*}
        \sum_{(a)}
        B_{f(S^{-1}(a''')
           ?
           a')}
        A_{a''}
        (x_1\otimes
         \dots\otimes
         x_6) \\
        =\sum_{(a)}
         B_{f(S^{-1}(a^{(6)})
            ?
            a^{(1)})}\,
         a^{(2)}
         x_1\otimes
         x_2\,
         S(a^{(3)})\otimes
         a^{(4)}
         x_3\otimes
         x_4\,
         S(a^{(5)})\otimes
         x_5\otimes
         x_6.
    \end{multline*}
    Again, abbreviating $\tilde{f}=f\bigl(S^{-1}(a^{(6)})\,?a^{(1)}\bigr)$ we
    obtain for the right-hand side
    \begin{align*}
        & \sum_{(a)}
          \sum_{(x_4S(a^{(5)}))}
          \sum_{(x_5)(x_6)}
          \sum_{(a^{(2)}x_1)}
          \bigeval{\tilde{f}}
                  {S^{-1}\bigl[\bigl(x_4\,
                                     S(a^{(5)})\bigr)'\bigr]\,
                   S^{-1}(x_5')\,
                   S^{-1}(x_6')\,
                   S^{-1}\bigl((a^{(2)}x_1)'\bigr)} \\
        & (a^{(2)}x_1)''\otimes
          x_2\,S(a^{(3)})\otimes
          a^{(4)}x_3\otimes
          \bigl(x_4\,
                S(a^{(5)})\bigr)''\otimes
          x_5''\otimes
          x_6''
          \displaybreak[0]\\
        & =\sum_{(a)}
           \sum_{(x_i)}
           \sum_{(a^{(2)})}
           \sum_{(a^{(5)})}
           \bigeval{\tilde{f}}
                   {S^{-1}\bigl[x_4'\,
                                S\bigl((a^{(5)})''\bigr)\bigr]\,
                    S^{-1}(x_5')\,
                    S^{-1}(x_6')\,
                    S^{-1}\bigl((a^{(2)})'x_1'\bigr)} \\
        & \relphantom{=}
           (a^{(2)})''x_1''\otimes
           x_2\,S(a^{(3)})\otimes
           a^{(4)}x_3\otimes
           x_4''\,
           S\bigl((a^{(5)})'\bigr)\otimes
           x_5''\otimes
           x_6''
           \displaybreak[0]\\
        & =\!\sum_{(a)(x_i)}
           \sum_{(a^{(2)})}
           \sum_{(a^{(5)})}
           \bigeval{\tilde{f}}
                   {S^{-1}\bigl[(a^{(2)})'
                                x_1'
                                x_6'
                                x_5'
                                x_4'\,
                                S\bigl((a^{(5)})''\bigr)\bigr]} \\
        & \relphantom{=}
           \hphantom{\!\sum_{(a)(x_i)}
                     \sum_{(a^{(2)})}
                     \sum_{(a^{(5)})}}
           (a^{(2)})''
           x_1''\otimes
           x_2\,
           S(a^{(3)})\otimes
           a^{(4)}
           x_3\otimes
           x_4''\,
           S\bigl((a^{(5)})'\bigr)\otimes
           x_5''\otimes
           x_6'' \\
        \intertext{and now restoring $f$ this becomes}
        & \sum_{(a)(x_i)}\!
          \bigeval{f\bigl(S^{-1}(a^{(8)})\,
                          ?
                          a^{(1)}\bigr)}
                  {S^{-1}\bigl(a^{(2)}
                               x_1'
                               x_6'
                               x_5'
                               x_4'\,
                               S(a^{(7)})\bigr)} \\
        & \hphantom{\!\sum_{(a)(x_i)}\!}
          a^{(3)}
          x_1''\otimes
          x_2\,
          S(a^{(4)})\otimes
          a^{(5)}
          x_3\otimes
          x_4''\,
          S(a^{(6)})\otimes
          x_5''\otimes
          x_6'' \\
        & =\!\sum_{(a)(x_i)}\!
           \eval{f}
                {S^{-1}(a^{(8)})\,
                 a^{(7)}
                 S^{-1}(x_1'
                 x_6'
                 x_5'
                 x_4')\,
                 S^{-1}(a^{(2)})\,
                 a^{(1)}} \\
        & \relphantom{=}
           \hphantom{\!\sum_{(a)(x_i)}\!}
           a^{(3)}
           x_1''\otimes
           x_2\,
           S(a^{(4)})\otimes
           a^{(5)}
           x_3\otimes
           x_4''\,
           S(a^{(6)})\otimes
           x_5''\otimes
           x_6''
           \displaybreak[0]\\
        & =\!\sum_{(a)(x_i)}\!
           \eval{f}
                {S^{-1}(x_1'
                        x_6'
                        x_5'
                        x_4')\,
                 S^{-1}(a^{(2)})\,
                 a^{(1)}} \\
        & \relphantom{=}
           \hphantom{\!\sum_{(a)(x_i)}\!}
           a^{(3)}
           x_1''\otimes
           x_2\,
           S(a^{(4)})\otimes
           a^{(5)}
           x_3\otimes
           x_4''\,
           S(a^{(6)})\otimes
           x_5''\otimes
           x_6''
           \displaybreak[0]\\
        & =\!\sum_{(a)(x_i)}\!
           \eval{f}
                {S^{-1}(x_1'
                        x_6'
                        x_5'
                        x_4')} \\
        & \relphantom{=}
           \hphantom{\!\sum_{(a)(x_i)}\!}
           a^{(1)}
           x_1''\otimes
           x_2\,
           S(a^{(2)})\otimes
           a^{(3)}
           x_3\otimes
           x_4''\,
           S(a^{(4)})\otimes
           x_5''\otimes
           x_6''.
    \end{align*}
    This proves the statement.\qed
\end{proof}

\subsection{Hilbert space}
\label{sec:Hilbert}

In order to obtain a physical system from the above discussion we need to
define both a Hilbert space and a Hamiltonian for our topological lattice
model. In particular, we need to find Hopf $*$-algebras~$H$ that allow for an
inner product and $*$-representations. This means that adjoint operators
(which as usual are defined relative to the inner product) in a representation
of $H$ are compatible with the $*$-structure of $H$ itself.
It turns out we can reach both goals by requiring $H$ to be a finite-dimensional Hopf $C^*$-algebra.
Such a Hopf algebra comes endowed with a unique element called the Haar integral which will naturally define both the inner product and the Hamiltonian.

In order to define the Hilbert space we begin with the following proposition which is obtained from~\cite{Nill:1997p1989,Kodiyalam:2003p2638}.

\begin{proposition}
    \label{prop:Haar_integral}
    Let $H$ be a finite-dimensional Hopf $C^*$-algebra. Then $S^2=\id$ and there
    exists a unique two-sided integral $h\in H$ with the following
    properties:
    \begin{enumerate}
        \item $h^2=h$,
        \item $h^*=h$,
        \item $S(h)=h$,
        \item $h\in\Cocom(H)$.
    \end{enumerate}
    Furthermore, $H^*$ is a Hopf $C^*$-algebra again and its unique
    integral $\phi\in H^*$ satisfying 1--4 is a faithful positive functional,
    or trace, on $H$.
\end{proposition}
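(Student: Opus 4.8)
The plan is to reduce everything to the semisimplicity of $H$ and then to extract the four properties from the one-dimensionality of the space of integrals together with a single character computation. Since $H$ is a finite-dimensional $C^*$-algebra it is semisimple, and because $k=\mathbb{C}$ has characteristic zero I would invoke the Larson--Radford theorem: for a finite-dimensional Hopf algebra over such a field, semisimplicity, cosemisimplicity and $S^2=\id$ are all equivalent. This disposes of the first claim $S^2=\id$ at once, and simultaneously tells us that $H^*$ is semisimple too. By the Larson--Sweedler theorem the space of left integrals of $H$ is one-dimensional, and the Maschke-type criterion (available because $H$ is semisimple) supplies a left integral $\Lambda$ with $\epsilon(\Lambda)\neq 0$, which I rescale to the unique $h$ with $\epsilon(h)=1$. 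Semisimple Hopf algebras are unimodular, so this $h$ is automatically a two-sided integral, and it is the unique normalized one.

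Next I would read off properties 1--3 purely from $xh=hx=\epsilon(x)h$ and from uniqueness. Idempotency is immediate, $h^2=\epsilon(h)h=h$. For the other two I would verify that $h^*$ and $S(h)$ are again \emph{normalized two-sided integrals} and then invoke uniqueness: applying $*$ to $xh=\epsilon(x)h$ and using $\epsilon(x^*)=\overline{\epsilon(x)}$ shows $h^*$ is a two-sided integral with $\epsilon(h^*)=\overline{\epsilon(h)}=1$, whence $h^*=h$; applying $S$ and using $\epsilon\circ S=\epsilon$ together with the bijectivity of $S$ shows $S(h)$ is a two-sided integral with the same normalization, whence $S(h)=h$.

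Property 4 and the trace property I would obtain together from the left regular representation $L_a(x)=ax$. For semisimple $H$ the functional $\tfrac{1}{\dim H}\chi_{\mathrm{reg}}$, with $\chi_{\mathrm{reg}}(a)=\tr(L_a)$, is exactly the Haar integral $\phi\in H^*$; applying the same statement to $H^*$ identifies $h\in(H^*)^*=H$ with $\tfrac{1}{\dim H}$ times the regular character of $H^*$. Since any character is a trace, $\chi_{\mathrm{reg}}(ab)=\tr(L_aL_b)=\tr(L_bL_a)=\chi_{\mathrm{reg}}(ba)$, the functional $\phi$ is immediately tracial on $H$. Property 4 then follows by the elementary duality that, for $\psi\in A^*$, the identity $\psi(ab)=\psi(ba)$ is equivalent to $\psi$ being a cocommutative element of the Hopf algebra $A^*$; taking $A=H^*$ and $\psi$ the (tracial) regular character of $H^*$ yields that $h$ is cocommutative, i.e. $h\in\Cocom(H)$.

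Finally, for $\phi$ on $H$ faithfulness and positivity are clean in the block decomposition $H\cong\bigoplus_i M_{n_i}(\mathbb{C})$ of the $C^*$-algebra $H$, in which $*$ is the blockwise adjoint and $\chi_{\mathrm{reg}}(c)=\sum_i n_i\,\tr(c_i)$; then \[ \phi(a^*a)=\frac{1}{\dim H}\sum_i n_i\,\tr(a_i^*a_i)\ge 0, \] with equality precisely when every $a_i=0$, i.e. when $a=0$, so $\phi$ is a faithful positive trace and $(a,b)=\phi(a^*b)$ is the promised inner product. The genuinely $C^*$-algebraic point, and the one I expect to be the main obstacle, is that $H^*$ is again a Hopf $C^*$-algebra: here one must show that the dual involution~\eqref{eq:dual_star} is \emph{positive} on the semisimple algebra $H^*$, and this is exactly the step where $S^2=\id$ is indispensable (without it the antipode is not $*$-preserving and the dual involution need not be positive, as already for non-Kac compact quantum groups). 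Granting this, the whole statement applies verbatim to $H^*$, so its integral $\phi$ inherits properties 1--4, completing the proof.
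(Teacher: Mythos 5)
Your attempt cannot be compared line-by-line with an in-text argument, because the paper offers none: Proposition~\ref{prop:Haar_integral} is imported wholesale from the Kac-algebra literature~\cite{Nill:1997p1989,Kodiyalam:2003p2638}. Judged on its own terms, your reconstruction follows the standard Hopf-algebraic route and is correct in all but one step: semisimplicity of the finite-dimensional $C^*$-algebra $H$ plus Larson--Radford gives $S^2=\id$ and semisimplicity of $H^*$; Larson--Sweedler, the Hopf Maschke criterion and unimodularity of semisimple Hopf algebras give the unique normalized two-sided integral; your uniqueness arguments for $h^2=h$, $h^*=h$ and $S(h)=h$ are exactly right; the identification $\phi=\abs{H}^{-1}\tr_H$ agrees with the lemma the paper proves immediately after the proposition; traciality of $\chi_{\mathrm{reg}}$ together with the duality \emph{tracial on $A$ $\Leftrightarrow$ cocommutative in $A^*$} correctly yields $h\in\Cocom(H)$; and positivity and faithfulness of $\phi$ via $\chi_{\mathrm{reg}}(c)=\sum_i n_i\tr(c_i)$ in the block decomposition are clean.

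The one genuine gap is the clause you explicitly grant rather than prove: that $H^*$ is again a Hopf $C^*$-algebra. You correctly locate the difficulty (positivity of the dual involution~\eqref{eq:dual_star}; the Hopf $*$-algebra structure of the dual is automatic), but flagging the obstacle is not the same as overcoming it, and the clause is part of the proposition. You already possess everything needed to close it. Let $H^*$ act on the Hilbert space obtained from the inner product $(a,b)=\phi(a^*b)$ of~\eqref{eq:inner_product} --- available at this stage since you have shown $\phi$ faithful and positive --- via $T_+^f(x)=\sum_{(x)}f(x'')\,x'$ as in~\eqref{eq:action3}. This is an algebra representation of $H^*$ (coassociativity gives $T_+^fT_+^g=T_+^{fg}$ for the convolution product), it is faithful because $\epsilon\circ T_+^f=f$, and it is a $*$-representation because $(T_+^f)^\dagger=T_+^{f^*}$ --- precisely the ``easy but tedious calculation'' the paper invokes in Section~\ref{sec:Hilbert}, which uses $S^2=\id$, $\phi\circ S=\phi$ and the invariance and traciality of $\phi$, all of which you have already established. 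A finite-dimensional $*$-algebra admitting a faithful $*$-representation on a Hilbert space is a $C^*$-algebra, so $H^*$ is a Hopf $C^*$-algebra. Note also that properties 1--4 for $\phi$ do not actually hinge on this clause: they follow from semisimplicity of $H^*$ and its Hopf $*$-structure alone, with property~4 being the traciality you proved directly; so your closing appeal to ``the whole statement applied verbatim to $H^*$'' is both stronger than needed and, as written, circular at exactly the granted step.
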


As a first consequence we can easily resolve the issue of edge orientation:
since the antipode is now involutive we define the reversal of an edge~$e$ simply
by
\begin{equation}
    \label{eq:orientation_reversal}
    x_e
    \mapsto
    S(x_e).
\end{equation}
This is obviously compatible with the actions~\eqref{eq:action1},
\eqref{eq:action2}, \eqref{eq:action3} and \eqref{eq:action4}. At the
same time we no longer
need to pick a distinguished pattern of edge orientations, rather all
patterns are equivalent.

Although the preceding proposition tells us that $\phi$ is a positive
trace on $H$, we need to know its precise relationship with the usual
trace $\tr_H(a)=\tr(L_+^a)$ found in the literature on Hopf algebras.
Setting $\abs{H}:=\dim H$ we have

\begin{lemma}
    Let $H$ a finite-dimensional Hopf $C^*$-algebra. Then
    \begin{equation}
        \tr_H
        =\abs{H}\cdot
         \phi
    \end{equation}
    holds where $\phi\in H^*$ is the Haar functional on $H$.
\end{lemma}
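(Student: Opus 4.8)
The plan is to recognise $\tr_H$ as the (essentially unique) two-sided integral of the dual Hopf $C^*$-algebra $H^*$ and then to match normalisations against $\phi$. As a preliminary I would note that $\tr_H$ is automatically a trace: since $L_+$ is an algebra homomorphism, $\tr_H(ab)=\tr(L_+^aL_+^b)=\tr(L_+^bL_+^a)=\tr_H(ba)$, so $\tr_H$ is cocommutative in $H^*$, just as $\phi$ is by Proposition~\ref{prop:Haar_integral}. Traciality by itself is not enough to force $\tr_H\propto\phi$, however, because the space of traces on a semisimple algebra has dimension equal to the number of its irreducible representations; the substance of the lemma is therefore to pin $\tr_H$ down as an \emph{integral}, not merely a trace.

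It is illuminating to reformulate the target using the inner product $(a,b)=\phi(a^*b)$ furnished by Proposition~\ref{prop:Haar_integral}. With respect to this product $L_+$ is a $*$-representation, $(L_+^a)^\dagger=L_+^{a^*}$, so computing the trace in any orthonormal basis $\{u_i\}$ gives $\tr_H(a)=\sum_i(u_i,L_+^au_i)=\sum_i\phi(u_i^*au_i)$, and cyclicity of $\phi$ turns this into $\tr_H(a)=\phi(Ca)$ with the Casimir element $C=\sum_iu_iu_i^*$. Since $\phi$ is faithful, the assertion $\tr_H=\abs{H}\cdot\phi$ is thus \emph{equivalent} to $C=\abs{H}\,1_H$. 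This form makes transparent why purely $C^*$-algebraic input cannot suffice: a generic faithful trace assigns independent weights to the matrix blocks of $H$, and only the Hopf structure constrains these weights to be proportional to the block dimensions.

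The heart of the argument---and the step I expect to be the main obstacle---is to verify that $\tr_H$ satisfies the dual-integral condition~\eqref{eq:dual_integral}, namely $\sum_{(x)}x'\,\tr_H(x'')=\sum_{(x)}\tr_H(x')\,x''=\tr_H(x)\,1_H$. This is exactly where involutivity $S^2=\id$ is indispensable: for a non-involutory Hopf algebra the regular trace is only a twisted integral and the Casimir $C$ fails to be central. Conceptually the identity encodes the fact that the left regular module absorbs tensor products, $V\otimes H_{\mathrm{reg}}\cong(\dim V)\,H_{\mathrm{reg}}$, whence $\chi_V*\tr_H=(\dim V)\,\tr_H$; promoting this to the full integral property for every element of $H^*$ I would carry out by a direct Sweedler computation in the spirit of the proof of Theorem~\ref{thm:local_modules}, expanding $\tr_H$ in dual bases and using repeatedly that $\Delta$ is an algebra morphism together with $S^2=\id$.

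Granting the integral property, the proof closes quickly. By the Larson--Sweedler theorem the two-sided integrals of the finite-dimensional Hopf algebra $H^*$ span a one-dimensional space, and since $H^*$ is again a Hopf $C^*$-algebra by Proposition~\ref{prop:Haar_integral} it is unimodular with normalised integral $\phi$; hence $\tr_H=c\,\phi$ for some scalar $c$. Evaluating both sides at the unit fixes the constant: on the one hand $\tr_H(1_H)=\tr(\id_H)=\dim H=\abs{H}$, while on the other hand the normalisation of the Haar functional gives $\phi(1_H)=1$ (this is the counit of $H^*$ applied to $\phi$, which idempotency $\phi*\phi=\phi$ forces to equal $1$). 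Therefore $c=\abs{H}$ and $\tr_H=\abs{H}\cdot\phi$.
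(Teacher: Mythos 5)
Your proposal is correct in substance but organized quite differently from the paper's proof, and the comparison is instructive. The paper disposes of the lemma in three lines by citing a known trace formula: for $S^2=\id$ one has $\tr_H=\epsilon(\Lambda)\,\phi$ where $\Lambda\in H$ is the integral normalized by $\phi(\Lambda)=1$ (attributed to Schneider's lectures); it then computes $\Lambda=\abs{H}\,h$ from the known value $\phi(h)=\abs{H}^{-1}$ and concludes via $\epsilon(h)=1$. Your route --- show that $\tr_H$ itself satisfies the dual-integral condition~\eqref{eq:dual_integral}, invoke Larson--Sweedler one-dimensionality of the integral space of $H^*$ to get $\tr_H=c\,\phi$, and fix $c$ by evaluating at $1_H$ --- is the same underlying fact unpacked: the cited formula \emph{is} the statement that the regular trace is proportional to the integral of $H^*$, with $\epsilon(\Lambda)$ the constant. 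What your version buys is a cleaner normalization: $\tr_H(1_H)=\abs{H}$ and $\phi(1_H)=1$ (which, as you note, follows from $\phi^2=\phi$ together with the integral property, since $\phi\ast\phi=\phi(1_H)\,\phi$ and $\phi\neq0$) avoids any appeal to $\phi(h)=\abs{H}^{-1}$, which the paper must import separately. The one caveat is that the step you yourself flag as the main obstacle --- verifying~\eqref{eq:dual_integral} for $\tr_H$ against \emph{all} of $H^*$ --- is only sketched, not executed; you are right that the character-level identity $\chi_V\ast\tr_H=(\dim V)\,\tr_H$ coming from $V\otimes H_{\mathrm{reg}}\cong(\dim V)\,H_{\mathrm{reg}}$ does not suffice, since characters span only the character ring. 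That missing Sweedler computation is precisely Radford's trace formula (the map $x\mapsto\tr(y\mapsto S^2(y)\,x)$ is a right integral in $H^*$, two-sided and equal to $\tr_H$ when $S^2=\id$), i.e.\ exactly the result the paper outsources to its citation --- so your architecture is sound, but its load-bearing step is, in effect, the same external theorem, promised rather than proven. Your Casimir reformulation ($\tr_H=\abs{H}\,\phi$ iff $\sum_iu_iu_i^*=\abs{H}\,1_H$ by faithfulness of $\phi$) is a correct and illuminating aside, though the paper makes no use of it.
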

\begin{proof}
    By the above Proposition we know that $S^2=\id$. In this case we
    have
    \begin{equation}
        \tr_H
        =\epsilon(\Lambda)\,
         \phi
    \end{equation}
    for the integral $\Lambda\in H$ which is normalized such that
    $\phi(\Lambda) =1$~\cite{Schneider:1995p2245}. Since $\phi(h)=
    \abs{H}^{-1}$ for the Haar integral $h\in H$
    \cite{Kodiyalam:2003p2638}
    we actually have that $\Lambda=\abs{H}\,h$. Finally, $\epsilon(h)
    =1$ concludes the proof.\qed
\end{proof}

Since $\phi$ is a faithful positive trace on $H$ we can derive a
Hermitian inner product on $H$ from it by setting
\begin{equation}
    \label{eq:inner_product}
    (a,b)_H
    =\phi(a^*b).
\end{equation}

This inner product now turns both the module structures~$L_\pm$ and $T_\pm$ on $H$ into
$*$-representations. Indeed, relative to~\eqref{eq:inner_product} the
adjoint map of $L_\pm^a$ is given by $(L_\pm^a)^\dagger=L_\pm^{a^*}$ because of
\begin{equation}
    \bigl(x,L_+^a(y)\bigr)
    =\phi(x^*ay)
    =\eval{\phi}
          {(a^*x)^*b}
    =\bigl(L_+^{a^*}(x),
           y\bigr).
\end{equation}
An easy but tedious calculation shows that $(T_\pm^f)^\dagger=T_\pm^{f^*}$
holds, too. Remember that $f^*$ is given by~\eqref{eq:dual_star}.

This means that for the operators $A_a$ and $B_f$ which represent the
Drinfeld double~$\mathrm{D}(H)$ the adjoint operators are given by:
\begin{align}
    A_a^\dagger(s,p) & =A_{a^*}(s,p),
                        \label{eq:adjoint_vertex} \\
    B_f^\dagger(s,p) & =B_{f^*}(s,p).
                        \label{eq:adjoint_face}
\end{align}

\subsection{Hamiltonian}

It remains to specify a Hamiltonian for the model. In analogy
to~\cite{Kitaev:2003} we would like to get a frustration-free
Hamiltonian, i.e. a sum of commuting terms, and we would like to derive
it from the local operators~$A_a$ and $B_f$ defined previously. Hence we
need to identify a subset of these operators such that they mutually
commute with each other.

Before anything else it is natural to analyze the commutation relation
between $A_a$ and $B_f$ at the \emph{same} site~$(s,p)$ of the
graph~$\Gamma$. Suppose in the following that both $a\in H$ and $f\in X$
are such that $\eval{fg}{a}=\eval{gf}{a}$ for any $f,g\in X$ and $f(xy)
=f(yx)$ for any $x,y\in H$.%
\footnote{It is easy to see that this is equivalent to $a \in\Cocom(H)$
    and $f\in\Cocom(X)$. Clearly, $\Cocom(H)$ and $\Cocom(X)$ are
    defined relative to the respective coproducts in $H$ and $X$, which
    are fundamentally different.}
Then at the level of $\mathrm{D}(H)$ we have
\begin{equation*}
    \begin{split}
        af & =\sum_{(a)}
              f\bigl(S^{-1}(a''')\,
                     ?
                     a'\bigr)\,
              a'' \\
           & =\sum_{(a)}
              f\bigl(a'\,
                     S^{-1}(a''')\,
                     ?\bigr)\,
              a'' \\
           & =\sum_{(a)}
              f\bigl(a'''
                     S^{-1}(a'')\,
                     ?\bigr)\,
              a' \\
           & =\sum_{(a)}
              f\bigl(\epsilon(a'')\,
                     ?\bigr)\,
              a' \\
           & =fa,
    \end{split}
\end{equation*}
where we used the cocommutativity of $f$ (or $a$ respectively) in the second (third)
line and the skew-antipode~$S^{-1}$ in the fourth one. In other words,
the straightening formula~\eqref{eq:straighten} becomes trivial for such elements.
Since for a fixed site~$(s,p)$ the operators~$A_a(s,p)$ and $B_f(s,p)$
form a representation of $\mathrm{D}(H)$ this commutation relation
immediately carries over to:
\begin{equation}
    \label{eq:commutation}
    A_a(s,p)\,
    B_f(s,p)
    =B_f(s,p)\,
     A_a(s,p).
\end{equation}

Interestingly, with this choice of $a$ and $f$ the full specification
 of a site~$(s,p)$ becomes
unnecessary for the operators~$A_a$ and $B_f$. Indeed, since coproducts can now be permuted cyclically, the starting point~$p$ of the dual loop used to define the vertex operator~$A_a(s,p)$ loses any significance. All that matters for the action is the vertex~$s$ itself. In order to reflect this fact we set for all $a\in\Cocom(H)$:
\begin{equation}
    A_a(s)
    :=A_a(s,p).
\end{equation}
By the same token a face operator~$B_f(s,p)$ only cares about the face~$p$ once we choose $f\in\Cocom(X)$:
\begin{equation}
    B_f(p)
    :=B_f(s,p).
\end{equation}

Having restricted the possible candidates for the terms in the Hamiltonian by exploiting the $\mathrm{D}(H)$-module structure at a single fixed site we need to ensure subsequently that \emph{all} vertex operators~$A_a(s)$ commute among themselves, too. Observe first that
\begin{equation}
    [A_a(s),A_b(s')]
    =0
\end{equation}
for any $a,b\in H$ whenever two vertices~$s$ and $s'$ do not coincide. Indeed, if $A_a(s)$ acts on a common edge via $L_+$ then $A_b(s')$ acts on the same edge via $L_-$ and vice versa, hence both operators commute by~\eqref{eq:commutation_L}. If there are no common edges then there is nothing to show. On the other hand if $s=s'$ then the edge set $E(s)$ has an $H$-module structure (inherited from the $\mathrm{D}(H)$-module structure at any site that contains $s$) which implies
\begin{equation}
    [A_a(s),A_b(s)]
    =A_{[a,b]}(s).
\end{equation}
This suggests to further narrow down our set of candidate vertex operators $A_a(s)$ by additionally requiring $a\in\mathrm{Z}(H)$ where $\mathrm{Z}(H)$ denotes the centre of the Hopf algebra~$H$. By analogy, we are naturally led to assume $f\in\mathrm{Z}(X)\cap\Cocom(X)$ in the following if we want all face operators~$B_f(p)$ to commute.

Finally we would like to remark that $A_a(s)$ and $B_f(p)$ trivially commute if
the pair~$(s,p)$ is not a site. In fact, the sets of edges they act on are even disjoint in this case.

\bigskip
\noindent
Recalling the properties of the Haar integral from Proposition~\ref{prop:Haar_integral} we now state the main result of this section.
\begin{theorem}[Generalized quantum double model]
    \label{thm:model}
    Let $H$ a finite-dimensional Hopf $C^*$-algebra with Haar integral~$h$
    and Haar functional~$\phi$ and let $\Gamma$ a graph.
    Furthermore for each $s\in V$ and $p\in F$ define the projectors
    \begin{align}
        A(s) & :=A_h(s,p), \\
        B(p) & :=B_\phi(s,p).
    \end{align}
    
    Then
    \begin{equation}
        \mathcal{H}
        =-\sum_{s\in V}
         A(s)
         -\sum_{p\in F}
         B(p)
    \end{equation}
    is a local, frustration-free Hamiltonian defining the
    $\mathrm{D}(H)$-model.
\end{theorem}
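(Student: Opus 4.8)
The plan is to assemble the statement from the structural results already in place; the genuinely hard algebraic work (the straightening formula) has already been carried out in Theorem~\ref{thm:local_modules}, so what remains is to collect facts. Concretely, three things must be checked: that $A(s)$ and $B(p)$ are well defined and are orthogonal projectors, that all terms of $\mathcal{H}$ commute pairwise, and that $\mathcal{H}$ is local.

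First I would record that the Haar integral $h$ and Haar functional $\phi$ satisfy the hypotheses identified in the preceding discussion, namely $h\in Z(H)\cap\Cocom(H)$ and $\phi\in Z(X)\cap\Cocom(X)$. Indeed, since $h$ is a two-sided integral, $xh=\epsilon(x)h=hx$ for all $x$, so $h\in Z(H)$, while $h\in\Cocom(H)$ is property~4 of Proposition~\ref{prop:Haar_integral}. Dually, $\phi$ is a two-sided integral of $H^*$, hence central, i.e. $\phi\in Z(X)$ because the algebra structure of $X$ coincides with that of $H^*$; and since $\phi$ is a trace on $H$, one has $\phi(xy)=\phi(yx)$, which is precisely the condition $\phi\in\Cocom(X)$ for the flipped coproduct of $X=(H^{\mathrm{op}})^*$. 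With $h\in\Cocom(H)$ the coproduct distributed around $s$ may be cyclically permuted, so $A_h(s,p)$ is independent of the base point $p$; likewise $\phi\in\Cocom(X)$ makes $B_\phi(s,p)$ independent of $s$. This is what renders $A(s)$ and $B(p)$ well defined.

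Next, the projector property. By Theorem~\ref{thm:local_modules} the maps $a\mapsto A_a(s)$ and $f\mapsto B_f(p)$ are algebra representations of $H$ and $X$ respectively, being the restrictions of the $\mathrm{D}(H)$-action along the algebra morphisms $i_H$ and $i_X$. Hence $A(s)^2=A_h(s)A_h(s)=A_{h^2}(s)=A_h(s)=A(s)$ by property~1, and $A(s)^\dagger=A_{h^*}(s)=A_h(s)=A(s)$ by property~2 together with~\eqref{eq:adjoint_vertex}. The same argument in the convolution algebra $X$ gives $B(p)^2=B_{\phi^2}(p)=B_\phi(p)$ and $B(p)^\dagger=B_{\phi^*}(p)=B_\phi(p)$, using idempotency and self-adjointness of $\phi$ in $H^*$ and~\eqref{eq:adjoint_face}. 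Thus each $A(s)$ and each $B(p)$ is a Hermitian idempotent, i.e. an orthogonal projector.

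Finally I would verify frustration freeness by splitting into the three families of commutators. For a site $(s,p)$ the conditions $h\in\Cocom(H)$ and $\phi\in\Cocom(X)$ trivialize the straightening formula, so $[A(s),B(p)]=0$ by~\eqref{eq:commutation}; for a non-site pair the edge sets $E(s)$ and $E(p)$ are disjoint and the operators commute trivially. For distinct vertices $s\ne s'$ any shared edge is acted on by $L_+$ from one operator and by $L_-$ from the other, so $[A(s),A(s')]=0$ by~\eqref{eq:commutation_L}; the dual argument with $T_\pm$ and~\eqref{eq:commutation_T} yields $[B(p),B(p')]=0$ for $p\ne p'$. Locality is then immediate, since $A(s)$ acts only on $E(s)$ and $B(p)$ only on $E(p)$. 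The step demanding the most care is the bookkeeping of coproduct conventions: one must confirm that the cocommutativity of $\phi$ feeding into~\eqref{eq:commutation} is the one for the opposite coproduct of $X$, equivalently the trace property of $\phi$, rather than cocommutativity for the coproduct of $H^*$. Once this identification is settled, the theorem follows by collecting the facts above.
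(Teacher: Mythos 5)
Your proof is correct and takes essentially the same route as the paper: the paper's own (much terser) proof likewise reduces everything to Proposition~\ref{prop:Haar_integral} (giving $h^2=h$, $h^*=h$, cocommutativity and hence base-point independence), the adjoint formulas~\eqref{eq:adjoint_vertex} and~\eqref{eq:adjoint_face} for Hermiticity, and the preceding discussion of commutation relations (trivialized straightening formula at a site, the $L_\pm$/$T_\pm$ commutations~\eqref{eq:commutation_L} and~\eqref{eq:commutation_T} for distinct vertices and faces, disjoint supports for non-site pairs, and centrality of the integrals). You merely make explicit what the paper delegates to that discussion, including the correct identification of $\phi\in\Cocom(X)$ with the trace property $\phi(xy)=\phi(yx)$, which is indeed the subtlety worth flagging.
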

\begin{proof}
    First observe that
    \begin{equation}
        A(s)^2
        =A_{h^2}(s,p)
        =A(s)
    \end{equation}
    by Proposition~\ref{prop:Haar_integral}. The same argument shows that $B(p)$
    is a projector, too.
    
    By the preceding discussion and Proposition~\ref{prop:Haar_integral}
    it is clear that all local terms~$A(s)$ and $B(p)$ commute with each other.
    Furthermore, \eqref{eq:adjoint_vertex} and \eqref{eq:adjoint_face} imply that they are Hermitian.\qed
\end{proof}

For the reader's convenience we give a short summary of how these operators act on the graph:
\begin{align}
    A(s)\
    \vcenter{\hbox{\includegraphics{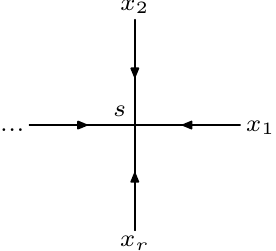}}}
    & =\sum_{(h)}\,
       \vcenter{\hbox{\includegraphics{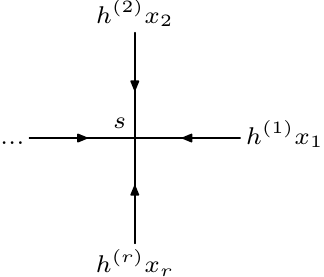}}}\ ,
       \label{eq:vertex_Hopf}\\
    B(p)\
    \vcenter{\hbox{\includegraphics{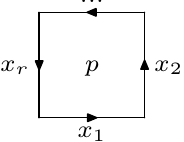}}}
    & =\sum_{(x_i)}
       \phi(x_r'\cdots x_1')\
       \vcenter{\hbox{\includegraphics{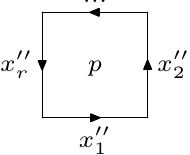}}}\ .
       \label{eq:face_Hopf}
\end{align}
Alternatively, for a different orientation of the graph edges the
face operator acts as follows:
\begin{equation}
    B(p)\
    \vcenter{\hbox{\includegraphics{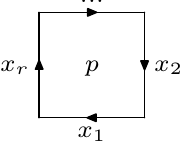}}}
    =\sum_{(x_i)}
     \phi(x_1''\cdots x_r'')\
     \vcenter{\hbox{\includegraphics{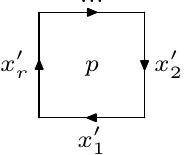}}}\ .
\end{equation}

It is no coincidence that already at this stage do the Haar integrals of $H$ and $H^*$ reveal themselves as the crucial
ingredients for the generalized quantum double models. Since we set out to construct
a quantum spin model whose (elementary) quasiparticle excitations are
characterized by irreducible representations of $\mathrm{D}(H)$ the ground
state sector with no quasiparticles present necessarily has the structure
of a trivial representation of $\mathrm{D}(H)$ locally. Precisely this is what the element
$\phi\otimes h\in\mathrm{D}(H)$ embodies. As the Hamiltonian intrinsically
encodes information about the ground state sector it should not surprise the
reader that the Haar integrals~$h$ and $\phi$ play such a prominent role.

\subsection{Comparison with $\mathrm{D}(\mathbb{C}G)$-models}

We would now like to briefly comment on the relation between our generalized
quantum double models and Kitaev's original construction in~\cite{Kitaev:2003}.

First note that in the case $H=\mathbb{C}G$ the terms given by~\eqref{eq:vertex_Hopf} and \eqref{eq:face_Hopf} in the above Hamiltonian
reduce to the operators of~\eqref{eq:vertex_group} and \eqref{eq:face_group}
which are the ones Kitaev employed for his quantum double models based on a
group~$G$. This becomes clear from Section~\ref{sec:trivial_algebras} where the relevant expressions for $h$ and $\phi$ are listed.

Secondly, the entire theory of ribbon operators readily carries over
from~\cite{Kitaev:2003} to our generalized quantum double models. This is because
ribbon operators
are constructed from certain elementary operators associated with two types of triangles
any given ribbon path decomposes into. These operators are nothing but the $L_\pm$
and $T_\pm$ that implement the $H$- and $X$-module structures. Furthermore patching together a ribbon operator from those elementary pieces only involves the structure
maps of the Hopf algebra~$H$ itself. Actually, we even used ribbon paths
in order to define the local $\mathrm{D}(H)$-module structures (and hence
the Hamiltonian) without saying so.

Finally, our generalized quantum double models inherit all the beautiful topological properties of the original since these follow exclusively
from the algebraic structure of ribbon operators. In particular, these features
include the degeneracy of the ground state sector as well as the exotic
statistics of the quasiparticle excitations whose anyonic nature is revealed
via braiding and fusion operations.

\section{A hierarchy of topological tensor network states}
\label{sec:networks}

In this section we develop a general diagrammatic language for tensor
network states built from finite-dimensional Hopf $C^*$-algebras. Underlying
surfaces both with and without boundaries are considered and we show
how to naturally describe subsystems. Using
this framework we solve the generalized quantum double model introduced in the preceding section by providing a tensor network representation for
one of its ground states. Any other energy eigenstate can be obtained from there by an appropriate ribbon operator. The tensor network
representation for that ground state only involves the canonical structures of the underlying
Hopf $C^*$-algebra~$H$ and its dual: multiplication, comultiplication, antipode and Haar integral. This insight leads us to propose a novel hierarchy of tensor network
states based on Hopf subalgebras. For $H=\mathbb{C}G$ we are able to
completely classify this hierarchy of states in terms of charge condensation.
Finally we describe the relation between our Hopf tensor network language
and the usual formulation of {\sc PEPS}.

Unless otherwise noted, from now on $H$ will be a finite-dimensional Hopf $C^*$-algebra
with Haar integral $h\in H$ and Haar functional $\phi\in H^*$.

\subsection{Tensor traces}

As we discussed in the introduction, the fully contracted tensor network
(which is a complex number) for a certain ground
state of the
$\mathrm{D}(\mathbb{C}G)$-model on the graph~$\Gamma$ can be interpreted
as a collection of virtual loops in the faces of $\Gamma$ that are
suitably glued together to form the physical degrees of freedom.

We can extend this idea to the case of any finite-dimensional Hopf $C^*$-algebra~$H$
now. In each face $p\in F$ place a virtual loop oriented in
counterclockwise direction and associate a function $f_p\in X$ to this
loop. For the moment we may restrict to $f_p\in\Cocom(X)$. With each
oriented edge $e\in E$ we associate an algebra element $x_e\in H$ which
splits into two parts as follows:
\begin{equation}
    \label{eq:edge_split}
    \bigl((S\otimes\id)\circ
          \Delta\bigr)
    (x_e)
    =\sum_{(x_e)}
     S(x_e')\otimes
     x_e''.
\end{equation}
Subsequently, we attribute $x_e''$ to the left adjacent face of $e$ and
$S(x_e')$ to the right one. A virtual loop in face~$p$ is then evaluated
by taking the clockwise product of all elements thus associated
with the loop from the surrounding edges of $p$. The result is then fed
to the function~$f_p$.

In order to simplify arguments we introduce a diagrammatic notation
which encodes calculations with these tensor networks. In diagrammatic
language the evaluation rule just described reads
\begin{equation}
    \label{eq:simple_rule1}
    \vcenter{\hbox{\includegraphics{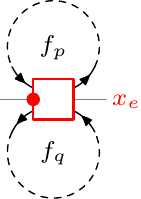}}}
    :=\sum_{(x_e)}
      f_p\bigl(S(x_e')\dots\bigr)\,
      f_q(x_e''\dots).
\end{equation}
Since $f_p$ is assumed cocommutative its argument can be permuted
cyclically and we may start both virtual loops around $p$ and $q$ at
the edge~$e$ without loss of generality. The clockwise order
of the product remains important though. Both the dashed lines and the
ellipses denote the remaining degrees of freedom of the faces~$p$ and
$q$ respectively. Note that the glueing procedure
generalizing~\eqref{eq:glueing} is implemented by
the coproduct. We will introduce a full description of this later
in~\eqref{eq:rule4} and~\eqref{eq:rule5}. Finally, a reversed edge is resolved via
\begin{equation}
    \label{eq:simple_rule2}
    \vcenter{\hbox{\includegraphics{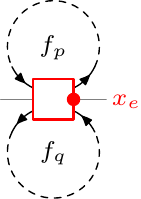}}}
    :=\vcenter{\hbox{\includegraphics{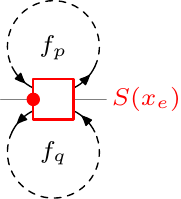}}}.
\end{equation}

In summary we have the following
\begin{definition}[Hopf tensor trace without boundary]
    \label{def:simple_Hopf_ttr}
    For each $p\in F$ let $f_p\in\Cocom(X)$. Then the
    \emph{Hopf tensor trace} associated with the graph~$\Gamma$ is the
    function $\ttr_\Gamma:H^{\otimes\abs{E}}\otimes X^{\otimes\abs{F}}
    \to\mathbb{C}$,
    \begin{equation}
        \bigotimes_{e\in E}
        x_e
        \bigotimes_{p\in F}
        f_p
        \mapsto
        \ttr_\Gamma(\{x_e\};
                    \{f_p\})
    \end{equation}
    which is defined via diagrams and the evaluation
    rules~\eqref{eq:simple_rule1} and \eqref{eq:simple_rule2}.
\end{definition}
From this definition it is clear that the tensor trace is linear in each
argument. It can be regarded as the wavefunction amplitude of a
quantum many-body state as we will see shortly.

\bigskip

\noindent
In the following we would like to investigate Hopf tensor networks
on graphs which are embedded in a surface~$M$ \emph{with} boundary as
well as Hopf tensor networks at interfaces between
regions~$R$ of a surface with or without boundary.

\begin{figure}
    \centering
    \begin{equation*}
        \vcenter{\hbox{\includegraphics{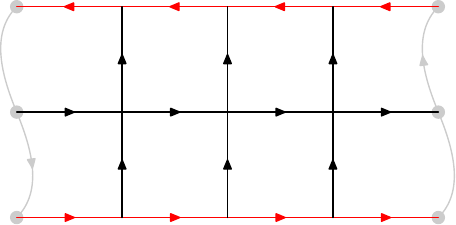}}}\
        \simeq\
        \vcenter{\hbox{\includegraphics{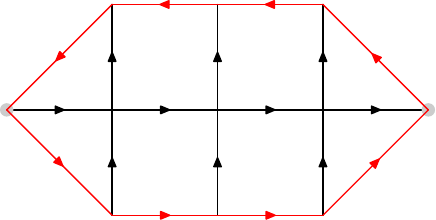}}}
    \end{equation*}
    \caption{Ordinary faces vs. boundary “faces”. Wherever the boundary
        of the surface~$M$ does not coincide with an edge of the graph~$\Gamma$ it is
        displayed in grey. Upon smooth deformation of this boundary the
        grey vertices on either side can actually be identified with each other.
        This modifies $\Gamma$ and thus creates new faces whose preimages
        on the left-hand side we call boundary “faces”.
        Boundary edges are drawn in red.}
    \label{fig:boundary_faces}
\end{figure}

In general the edge set $E=E_M$ of the graph~$\Gamma$ embedded in $M$
may be decomposed into the disjoint union of interior edges
$E_{M^\circ}$ and boundary edges $E_{\partial M}$ (see Figure~\ref{fig:boundary_faces}). On the other hand,
the set of faces $F=F_M=F_{M^\circ}$ only contains interior faces by
definition. However, one may collect incomplete faces (where two
vertices of degree~$1$ can be connected/merged to yield a genuine new
face) into the set $F_{\partial M}$ of boundary “faces” as illustrated
in Figure~\ref{fig:boundary_faces}. By construction
these are not faces of the original graph, hence $F\cap F_{\partial M}
=\emptyset$.

Furthermore
we may collect all boundary segments into the set $X_{\partial M}
=E_{\partial M}\cup F_{\partial M}$. Once we fix a distinguished segment
on the boundary this set is equipped with a natural ordering inherited
from the orientation of $\partial M$. Without loss of generality we will
take any boundary to be oriented in counterclockwise direction with
respect to its interior for the remainder of the article.

In order to take nontrivial boundaries into account we need to refine
our diagrammatic notation for tensor networks. Pick an arbitrary edge
$e\in E$. Since it belongs to the boundary of at least one face~$p$ in either
$F$ or $F_{\partial M}$ (i.e. $p$ is either an interior or a boundary face) we may define the following elementary diagram
for any elements $f_p\in X$ and $x_e\in H$ simply as their canonical pairing:
\begin{equation}
    \vcenter{\hbox{\includegraphics{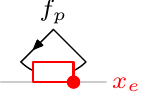}}}
    :=f_p(x_e).
    \label{eq:rule1}
\end{equation}
Different orientations of graph edges
and virtual loops are resolved via
\begin{align}
    \vcenter{\hbox{\includegraphics{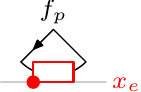}}}
    & :=\vcenter{\hbox{\includegraphics{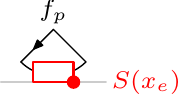}}}
        \label{eq:rule2}\\
    \vcenter{\hbox{\includegraphics{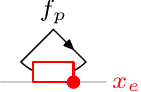}}}
    & :=\vcenter{\hbox{\includegraphics{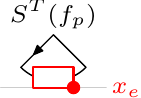}}}
        \label{eq:rule3}
\end{align}
where all these elementary diagrams are assumed to be invariant under arbitrary
rotations, for instance:
\begin{equation}
    \vcenter{\hbox{\includegraphics{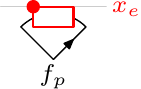}}}
    =\vcenter{\hbox{\includegraphics{lattices-80}}}
\end{equation}
Note also that $S^T=(S^{-1})^T$ is the correct antipode of $X
=(H^\mathrm{op})^*$. As a consequence, any of the above elementary diagrams
has the same value as its mirror
image under reflection about a vertical axis. In fact, \eqref{eq:rule2}
and \eqref{eq:rule3} are mirror images of each other in that sense.
Furthermore for some $f_p$ (such as the dual Haar integral~$\phi$) it may happen
that $S^T(f_p)=f_p$ and the loop orientation may become
unimportant.

If the face~$p$ has edges other than $e$ in its boundary we
may extend the above diagrams as follows. Pick another edge $e'\in E(p)$
which shares a common vertex with $e$. Then for any $x_{e'}\in H$ we
define a “virtual” glueing operation by
\begin{equation}
    \label{eq:rule4}
    \vcenter{\hbox{\includegraphics{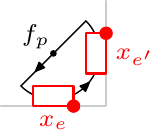}}}
    :=\sum_{(f_p)}\,
      \vcenter{\hbox{\includegraphics{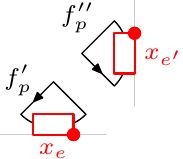}}}
    =f_p(x_{e'}
         x_e)
\end{equation}
where the arrows indicate the order in which the coproduct of $f_p$
is applied to the elementary diagrams. The black dot denotes the origin
for this comultiplication. While this origin is
very important in general it can be neglected iff $f_p$ is cocommutative.
In that case we will simply omit the corresponding dot from the diagram
as we did implicitly in Definition~\ref{def:simple_Hopf_ttr}. In any case
one needs to pay attention to the correct comultiplication in
$X$ which causes the product around the edges of $p$ to be taken in clockwise order.

Finally for any interior edge $e\in E_{M^\circ}$ with adjacent
interior or boundary faces $p,q\in F\cup F_{\partial M}$ we pick $x_e\in H$
and $f_p,f_q\in X$ arbitrarily and define a “physical” glueing
operation by
\begin{equation}
    \label{eq:rule5}
    \vcenter{\hbox{\includegraphics{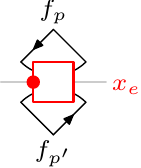}}}
    :=\sum_{(x_e)}\,
      \vcenter{\hbox{\includegraphics{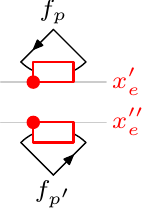}}}
\end{equation}
where the order of comultiplication is determined by the orientation
of the underlying graph edge. Consequently one has for instance
\begin{multline}
    \vcenter{\hbox{\includegraphics{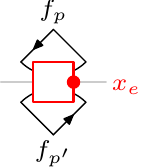}}}
    =\sum_{(x_e)}\,
     \vcenter{\hbox{\includegraphics{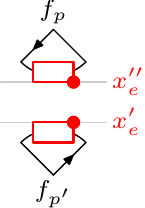}}}
    =\sum_{(x_e)}\,
     \vcenter{\hbox{\includegraphics{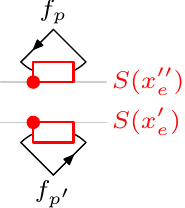}}} \\
    =\!\!\sum_{(S(x_e))}
     \vcenter{\hbox{\includegraphics{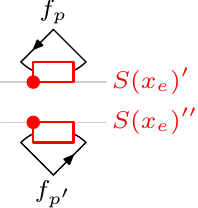}}}
    =\vcenter{\hbox{\includegraphics{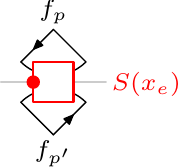}}}
\end{multline}
which is perfectly compatible with~\eqref{eq:orientation_reversal} as expected.
In contrast to elementary diagrams such a composite diagram may only be invariant under rotation
by $\pi$ if $S(x_e)=x_e$. Also note that in general a simultaneous reversal
of both virtual loops is \emph{not} given by reflecting a composite diagram about its
vertical axis:
\begin{equation}
    \vcenter{\hbox{\includegraphics{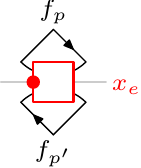}}}
    =\sum_{(x_e)}\,
     \vcenter{\hbox{\includegraphics{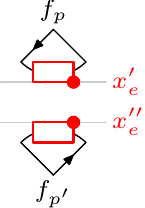}}}
    \neq\sum_{(x_e)}\,
     \vcenter{\hbox{\includegraphics{lattices-91}}}
    =\vcenter{\hbox{\includegraphics{lattices-90}}}\ .
\end{equation}
Rather, equality holds iff $x_e\in\Cocom(H)$. This should be compared
with~\eqref{eq:tensor_reflected}.

\bigskip

\noindent
Thus by starting from the elementary diagram~\eqref{eq:rule1} we have reexpressed the evaluation
rule~\eqref{eq:simple_rule1} for an interior edge entirely in terms of
“virtual”~\eqref{eq:rule4} and “physical”~\eqref{eq:rule5} glueing
operations. These are given by comultiplication in $X$ and $H$
respectively.
This means that an interior edge is formed by appropriately glueing two
boundary edges together via comultiplication. Alternatively, one may regard
this as glueing together two virtual interior loops. Virtual loops
themselves are assembled via glueing together smaller loop pieces.

\begin{figure}
    \centering
    \includegraphics{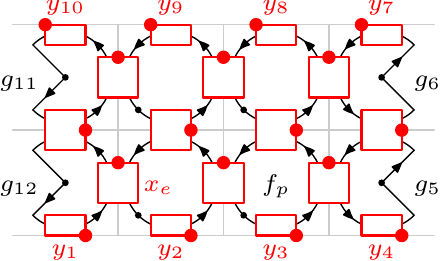}
    \caption{Diagram encoding the tensor trace $\ttr(\{x_e\};\{f_p\};
        \{y_e\};\{g_q\})$. While the interior degrees of freedom $\{x_e\}
        \subset H$ and $\{f_p\}\subset X$ are only shown partially
        the boundary degrees of freedom $\{y_e\}\subset H$ and $\{g_q\}
        \subset X$ are labelled in such a way that the ordering of the
        boundary is evident.}
    \label{fig:ttr}
\end{figure}

In summary we have the following general
\begin{definition}[Hopf tensor trace]
    \label{def:Hopf_ttr}
    Let $E_{M^\circ}$ and $F_M$ the sets of interior edges and faces
    respectively, and accordingly $E_{\partial M}$ and $F_{\partial M}$
    the sets of boundary edges and faces.
    
    The \emph{Hopf tensor trace} associated with the graph~$\Gamma$ is
    the function $\ttr_\Gamma:H^{\otimes\abs{E_{M^\circ}}}\otimes
    X^{\otimes\abs{F_M}}\otimes H^{\otimes\abs{E_{\partial M}}}\otimes
    X^{\otimes\abs{F_{\partial M}}}\to\mathbb{C}$,
    \begin{equation}
        \bigotimes_{e\in E_{M^\circ}}\!\!
        x_e
        \bigotimes_{p\in F_M}\!\!
        f_p
        \bigotimes_{e\in E_{\partial M}}\!\!
        y_e
        \bigotimes_{q\in F_{\partial M}}\!\!
        g_q
        \mapsto
        \ttr_{\Gamma}(\{x_e\};
                      \{f_p\};
                      \{y_e\};
                      \{g_q\})
    \end{equation}
    which is defined via diagrams and the evaluation
    rules~\eqref{eq:rule1}, \eqref{eq:rule2}, \eqref{eq:rule3},
    \eqref{eq:rule4} and \eqref{eq:rule5}.
\end{definition}

Note that this generalized Hopf tensor trace reduces to the Hopf tensor
trace for a surface without boundary (as in
Definition~\ref{def:simple_Hopf_ttr}) in the natural way by restricting
any virtual loop function~$f_p$ to be cocommutative and setting:
\begin{equation}
    \ttr_\Gamma(\{x_e\};
                \{f_p\})
    :=\ttr_\Gamma(\{x_e\};
                  \{f_p\};
                  \emptyset;
                  \emptyset).
\end{equation}

\subsection{Quantum states}
\label{sec:Hopf_TN_states}

So far we have merely defined a particular, fully contracted tensor network
(which is a complex number) with \emph{no}
reference to a quantum many-body state whatsoever. We now take the
next step and use the tensor trace above to generate actual
quantum states in a remarkably straightforward fashion:

\begin{definition}[Hopf tensor network state]
    Let $x_e,y_e\in H$ and $f_p,g_q\in X$ as in Definition~\ref{def:Hopf_ttr}.
    Let $\Gamma$ the graph embedded in the surface~$M$.
    \begin{enumerate}
        \item
        If $\partial M\neq\emptyset$ then
        \begin{multline}
            \ket{\psi_\Gamma(\{x_e\};
                             \{f_p\};
                             \{y_e\};
                             \{g_q\})} \\
            :=\!\!\sum_{\substack{(x_e)\\
                        e\in E_{M^\circ}}}
             \sum_{\substack{(y_e)\\
                   e\in E_{\partial M}}}\!\!
            \ttr_\Gamma(\{x_e''\};
                        \{f_p\};
                        \{y_e''\};
                        \{g_q\})
            \bigotimes_{e\in E_{M^\circ}}\!\!
            \ket{x_e'}
            \bigotimes_{e\in E_{\partial M}}\!\!
            \ket{y_e'}.
        \end{multline}
        \item
        If $\partial M=\emptyset$ then
        \begin{equation}
            \label{eq:Hopf_TN_state_compact}
            \ket{\psi_\Gamma(\{x_e\};
                             \{f_p\})}
            :=\!\sum_{\substack{(x_e)\\
                      e\in E_M}}\!
            \ttr_\Gamma(\{x_e''\};
                        \{f_p\})
            \bigotimes_{e\in E_M}\!
            \ket{x_e'}.
        \end{equation}
    \end{enumerate}
    In both cases we call the resulting state a \emph{Hopf tensor
    network state} on the graph~$\Gamma$.
\end{definition}

Given an arbitrary region~$R$ within the surface~$M$ it is
straightforward to partition such a quantum state into interior and
exterior parts accordingly. In the language of Hopf tensor network
states this simply amounts to cutting virtual loops. Without loss of
generality we will show this for a graph embedded in a surface without
boundaries.

Before proceeding we need to clarify though what a region~$R$ of a surface
discretized by the graph~$\Gamma$ is supposed to mean. We will only consider
regions whose boundary does not cross any edge of $\Gamma$. Equivalently,
such a boundary may only run through vertices and faces or \emph{along}
edges of $\Gamma$. Any part of the boundary which coincides with a graph edge
will be called \emph{smooth} and otherwise \emph{rough}. This coincides with
the way a graph may be embedded in a surface with boundary (see
Figure~\ref{fig:boundary_faces}). It is clear that a valid region~$R$ is
determined by a subgraph~$\Gamma_R$ up to deformations of rough boundaries,
hence we will define a region by its associated subgraph. This means that
a partition of M into $R$ and its complement~$\bar{R}$ naturally divides $E$
into the disjoint sets~$E_R$ and $E_{\bar{R}}$. At the same time it divides
$F$ into the disjoint sets~$F_R$, $F_{\bar{R}}$ and $F_{\partial R}$ or,
put differently, boundary faces~$F_{\partial R}$ are created which do \emph{not}
belong to either subgraph~$\Gamma_R$ or $\Gamma_{\bar{R}}$. In contrast,
there are \emph{no} boundary edges. Hence the (discretized) boundary of the region~$R$
coincides with $F_{\partial R}$ and has a length of $\abs{F_{\partial
R}}$. By abuse of notation we will call this length simply
$\abs{\partial R}$.

\begin{proposition}[Partitions]
    \label{prop:partitions}
    Let $\ket{\psi_\Gamma(\{x_e\};\{f_p\})}$ be a Hopf tensor network
    state on the graph~$\Gamma$ and $R$ an arbitrary region which is
    associated to the subgraph $\Gamma_R=(V_R,E_R,F_R)\subset\Gamma$.
    Let $g_q\in X$ and define
    \begin{equation}
        \ket{\psi_R(\{g_q\})}
        :=\ket{\psi_{\Gamma_R}(\{x_e\}_R;
                               \{f_p\}_R;
                               \emptyset;
                               \{g_q\})}
    \end{equation}
    where $\{x_e\}_R=\{x_e\mid e\in E_R\}$ and $\{f_p\}_R=\{f_p\mid p\in
    F_R\}$ are the natural restrictions to $\Gamma_R$. Then
    \begin{equation}
        \label{eq:decomposition}
        \ket{\psi_\Gamma(\{x_e\};
                         \{f_p\})}
        =\!\sum_{\substack{(f_q)\\
                 q\in F_{\partial R}}}\!
         \ket{\psi_R(\{f_q'\})}\otimes
         \ket{\psi_{\bar{R}}(\{f_q''\})}.
    \end{equation}
\end{proposition}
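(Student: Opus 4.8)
The plan is to expand $\ket{\psi_\Gamma(\{x_e\};\{f_p\})}$ from its definition for $\partial M=\emptyset$ and exploit the hypothesis that $\partial R$ crosses no edge of $\Gamma$. This last assumption is what makes the argument clean: it forces the edge set to split as $E=E_R\sqcup E_{\bar{R}}$ with \emph{no} edge straddling the partition, so that the physical factors separate at once as $\bigotimes_{e\in E_R}\ket{x_e'}\otimes\bigotimes_{e\in E_{\bar{R}}}\ket{x_e'}$. All that then remains is to show that the amplitude $\ttr_\Gamma(\{x_e''\};\{f_p\})$ itself factorizes into an $R$-part and an $\bar{R}$-part.

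Next I would use that the Hopf tensor trace is a product of independent virtual-loop evaluations, one per face. Each interior face in $F_R$ reads only edges of $E_R$ and each interior face in $F_{\bar{R}}$ reads only edges of $E_{\bar{R}}$, so those factors are already correctly sorted onto the two sides. The only faces straddling the partition are the boundary faces $q\in F_{\partial R}$: since $\partial R$ passes through $q$, the edges surrounding $q$ are separated into an arc lying in $\Gamma_R$ and an arc lying in $\Gamma_{\bar{R}}$, and $q$ thereby gives rise to one boundary face of $\Gamma_R$ and one of $\Gamma_{\bar{R}}$.

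The heart of the proof is to cut each such loop. Writing the clockwise product around $q$ as $P=P_{\bar{R}}\,P_R$, where $P_R$ and $P_{\bar{R}}$ collect the edge contributions from the two arcs, I would recognize the evaluation $f_q(P_{\bar{R}}\,P_R)$ as an instance of the virtual glueing rule~\eqref{eq:rule4} run in reverse and split it by the coproduct of $f_q$:
\begin{equation*}
    f_q(P_{\bar{R}}\,P_R)
    =\sum_{(f_q)}
     f_q'(P_R)\,
     f_q''(P_{\bar{R}}),
\end{equation*}
using that the coproduct in $X=(H^{\mathrm{op}})^*$ is $(\mu^{\mathrm{op}})^T$ and hence dual to multiplication in $H$ with the order reversed. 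By coassociativity this single split is compatible with the further comultiplications that rule~\eqref{eq:rule4} applies \emph{within} each arc, so $f_q'$ is exactly the boundary-face function seen by $\Gamma_R$ and $f_q''$ the one seen by $\Gamma_{\bar{R}}$. Collecting all $R$-factors then reconstitutes $\ttr_{\Gamma_R}(\{x_e''\}_R;\{f_p\}_R;\emptyset;\{f_q'\})$, and symmetrically for $\bar{R}$; pairing these with the factored kets and summing over all boundary coproducts yields~\eqref{eq:decomposition}.

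The step I expect to be the main obstacle is the orientation bookkeeping that guarantees $f_q'$ lands on the $R$-side and $f_q''$ on the $\bar{R}$-side. One must reconcile the counterclockwise orientation fixed for region boundaries with the clockwise convention that rule~\eqref{eq:rule4} imposes on $X$-coproducts, and verify that the arc through the shared vertices of $q$ is factored in the order consistent with $P=P_{\bar{R}}\,P_R$. Once this is pinned down, cocommutativity of the loop functions $f_q$ removes any dependence on where each loop is based, and the factorization follows from coassociativity alone.
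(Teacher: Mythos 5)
Your proposal is correct and follows essentially the same route as the paper, whose entire proof reads that the claim is ``obvious from the complete diagrams and evaluation rule~\eqref{eq:rule4} which needs to be applied backwards'' --- precisely your step of splitting each straddling loop evaluation $f_q(P_{\bar{R}}\,P_R)=\sum_{(f_q)}f_q'(P_R)\,f_q''(P_{\bar{R}})$ via the coproduct in $X$. Your additional bookkeeping (the clean split $E=E_R\sqcup E_{\bar{R}}$, coassociativity within each arc, and the orientation/cocommutativity check for where $f_q'$ and $f_q''$ land) simply makes explicit what the paper leaves implicit.
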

\begin{proof}
    Obvious from the complete diagrams and evaluation rule~\eqref{eq:rule4}
    which needs to be applied backwards.\qed
\end{proof}

\bigskip

\noindent
Building on this general framework of Hopf tensor network states we can
solve the generalized quantum double model now. Namely, we identify a particular
Hopf tensor network state as a ground state of the model. As such
this state is directly seen to be topologically ordered.
Note that we will only make use of the
structure maps of $H$ as well as the Haar integral and its dual to describe
the state. Again
we assume a surface without boundaries.

\begin{theorem}[Ground state of the generalized quantum double model]
    \label{thm:Hopf_gs}
    Let $h\in H$ and $\phi\in X$ the respective Haar integrals. The
    state
    \begin{equation}
        \ket{\psi_\Gamma}
        :=\ket{\psi_\Gamma(h,\dots,h;
          \phi,\dots,\phi)}
    \end{equation}
    is a ground state of the $\mathrm{D}(H)$-model.
\end{theorem}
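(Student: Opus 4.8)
The plan is to exploit frustration-freeness. By Theorem~\ref{thm:model} the Hamiltonian is a sum of mutually commuting projectors $A(s)=A_h(s,p)$ and $B(p)=B_\phi(s,p)$, each with spectrum $\{0,1\}$, so the energy is bounded below by $-(\abs{V}+\abs{F})$ and this bound is saturated exactly by the common $+1$-eigenvectors of all the $A(s)$ and $B(p)$. Hence it suffices to check that $\ket{\psi_\Gamma}$ is nonzero and satisfies $A(s)\ket{\psi_\Gamma}=\ket{\psi_\Gamma}$ for every $s\in V$ and $B(p)\ket{\psi_\Gamma}=\ket{\psi_\Gamma}$ for every $p\in F$. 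Since both operators act only on the edges incident to a single vertex or bounding a single face, each identity is a local statement about the tensor trace, to be verified diagrammatically from the evaluation rules and the defining properties of the Haar integrals collected in Proposition~\ref{prop:Haar_integral}.

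For the vertex condition I would use that, by~\eqref{eq:vertex_Hopf}, $A(s)$ left-multiplies the edges in $E(s)$ by the components of the distributed coproduct $\Delta^{(r-1)}(h)$ via $L_+$, while on $\ket{\psi_\Gamma}$ these components hit the physical parts of the edge integrals $x_e=h$. Conceptually $A(s)$ is the projector onto the $H$-invariant configurations at $s$ (precisely because $h$ is the Haar integral), so the claim is that $\ket{\psi_\Gamma}$ already lies in this invariant subspace. I would make this explicit by resumming the jointly distributed vertex integral against the edge integrals: because $\Delta$ is an algebra morphism and $h$ is a two-sided integral with $h^2=h$, the left multiplications collapse through the counit axioms and reproduce exactly the edge data of the original state, the face amplitudes being unchanged because $\phi$ is an invariant (gauge-insensitive) functional.

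For the face condition I would start from~\eqref{eq:face_Hopf}: $B(p)$ refines the physical indices of the edges of $p$ by a further coproduct and produces the amplitude $\phi(x_r'\cdots x_1')$, i.e.\ it contracts the boundary of $p$ against an \emph{additional} copy of $\phi$. But the tensor trace already feeds the clockwise loop product of face $p$ into $f_p=\phi$ through the virtual glueing rule~\eqref{eq:rule4} and the loop evaluation~\eqref{eq:simple_rule1}. The two $\phi$-contractions are stitched together along the edge coproducts, and since $\phi$ is a two-sided integral of $X$ with $\phi\cdot\phi=\phi$ they collapse to the single loop evaluation already present in $\ket{\psi_\Gamma}$, giving $B(p)\ket{\psi_\Gamma}=\ket{\psi_\Gamma}$.

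The hard part will be the bookkeeping of orientations rather than any deep algebra. I must track the antipodes appearing in the edge splitting $(S\otimes\id)\circ\Delta$ of~\eqref{eq:edge_split} and in the operators $T_-$, reconcile the counterclockwise coproduct convention of $A(s)$ with the clockwise loop product demanded for $\phi$, and keep the comultiplication origin consistent across the faces a vertex touches. Because $h$ and $\phi$ are integrals — so that $xh=\epsilon(x)h$ together with the dual relation~\eqref{eq:dual_integral} trivialize most reorderings — these manipulations should be far shorter than the straightening computations in the Lemma and in Theorem~\ref{thm:local_modules}; the one point requiring genuine care is verifying that the idempotency collapse respects both the clockwise product and the $S\otimes\id$ split. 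Finally, nonvanishing of $\ket{\psi_\Gamma}$ follows from faithfulness and positivity of $\phi$ in Proposition~\ref{prop:Haar_integral}, since the Haar contractions entering its amplitudes are then nonzero.
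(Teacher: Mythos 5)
Your global strategy is the paper's own: frustration-freeness reduces the claim to the local invariances $A(s)\ket{\psi_\Gamma}=\ket{\psi_\Gamma}$ and $B(p)\ket{\psi_\Gamma}=\ket{\psi_\Gamma}$, checked on interior pieces cut out via Proposition~\ref{prop:partitions}, and your face argument is essentially the paper's: the operator's copy of $\phi$ and the loop's copy of $\phi$ land on consecutive legs of the iterated edge coproducts and merge --- using multiplicativity of $\Delta$ and (implicitly) cocommutativity of $h$ to align the legs --- into $\phi^2=\phi$.

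The gap is in the vertex step, which is where the paper does its real work. The legs $h^{(1)},\dots,h^{(r)}$ of the operator's distributed coproduct multiply only the \emph{physical} first legs $h_j'$ of each edge's split coproduct \eqref{eq:edge_split}, while $h_j''$ and $h_j'''$ are locked inside the face functions; so the collapse you advertise --- ``$\Delta$ is an algebra morphism, $h^2=h$, counit axioms'' --- cannot be executed as stated, since no product of the form $\Delta^{(2)}(h\,h_j)$ ever appears. The missing idea is the transfer identity of Lemma~\ref{lem:Hopf_singlet1}, in the form \eqref{eq:Hopf_singlet}, $\sum_{(h)} ah'\otimes S(h'')\otimes h''' = \sum_{(a)(h)} h'\otimes S(h'')\,a''\otimes S(a')\,h'''$, which trades a left multiplication on the ket for modifications of the neighbouring face arguments; the paper applies it repeatedly around the vertex until all legs of $\Delta^{(r-1)}(h)$ pile up on a single edge, where cocommutativity of $h$ and $xh=\epsilon(x)\,h$ absorb them. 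This identity rests on the antipode axioms, not just the counit, so the difficulty is not merely ``bookkeeping of orientations'' as you suggest. Relatedly, attributing the vertex invariance partly to invariance of $\phi$ is a misattribution: the face arguments are transiently modified and then restored by the shuttling, and vertex invariance in fact holds for \emph{arbitrary} face functions $f_i\in X$ (as the paper remarks after the theorem), using only that the edges carry $h$. Your added point that $\ket{\psi_\Gamma}\neq0$ must be checked is fair --- the paper leaves it implicit --- but faithfulness of $\phi$ alone does not instantly give it; one should, e.g., exhibit a nonvanishing amplitude or compute the norm, as the paper later does via isometries.
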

\begin{proof}
    Since the Hamiltonian of the $\mathrm{D}(H)$-model is a sum of
    local, commuting terms by Theorem~\ref{thm:model} it is enough to show that each
    operator~$A(s)$ and $B(p)$ leaves the state~$\ket{\psi_\Gamma}$
    invariant individually. In order to do so we may partition
    $\ket{\psi_\Gamma}$ into an interior part corresponding to the
    support of such an operator and an exterior part. Both parts are
    glued via comultiplication in $X$. It will then suffice to prove
    that this interior part remains unchanged by
    either $A(s)$ or $B(p)$ respectively.

    Hence consider a face $p\in F$ with a boundary consisting of
    $r$~edges. The interior part of $\ket{\psi_\Gamma}$ is given by
    \begin{align}
        \ket{\psi_p(f_1,\dots,f_r)}
        & =\sum_{(h_i)}\,
           \vcenter{\hbox{\includegraphics{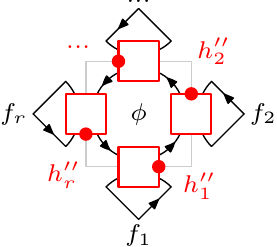}}}\quad
           \vcenter{\hbox{\includegraphics{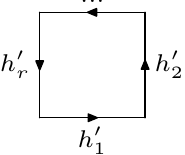}}}
           \label{eq:face1} \\
        & =\sum_{(h_i)}
           \phi(h_r'''\cdots h_1''')
           \prod_{j=1}^r
           f_j\bigl(S(h_j'')\bigr)
           \ket{h_1'}\otimes
           \dots\otimes
           \ket{h_r'}.
           \label{eq:face2}
    \end{align}
    with the Haar integrals $h_i=h\in H$ and arbitrary $f_i\in X$. It is
    invariant under the action of $B(p)$ as can be seen from
    \begin{align*}
        & B(p)
        \ket{\psi_p(f_1,
                    \dots,
                    f_r)} \\
        & =\sum_{(h_i)}
           \phi(h_r^{(4)}\cdots h_1^{(4)})\,
           \phi(h_r^{(1)}\cdots h_1^{(1)})
           \prod_{j=1}^r
           f_j\bigl(S(h_j^{(3)})\bigr)
           \ket{h_1^{(2)}}\otimes
           \dots\otimes
           \ket{h_r^{(2)}}
           \displaybreak[0]\\
        & =\sum_{(h_i)}
           \phi(h_r^{(3)}\cdots h_1^{(3)})\,
           \phi(h_r^{(4)}\cdots h_1^{(4)})
           \prod_{j=1}^r
           f_j\bigl(S(h_j^{(2)})\bigr)
           \ket{h_1^{(1)}}\otimes
           \dots\otimes
           \ket{h_r^{(1)}}
           \displaybreak[0]\\
        & =\sum_{(h_i)}
           \phi^2(h_r'''\cdots h_1''')
           \prod_{j=1}^r
           f_j\bigl(S(h_j'')\bigr)
           \ket{h_1'}\otimes
           \dots\otimes
           \ket{h_r'} \\
        & =\ket{\psi_p(f_1,\dots,f_r)}.
    \end{align*}

    Next consider a vertex $s\in V$ with $r$ attached edges. In this
    case the interior part of $\ket{\psi_\Gamma}$ reads
    \begin{align}
        \ket{\psi_s(f_1,\dots,f_r)}
        & =\sum_{(h_i)}\,
           \vcenter{\hbox{\includegraphics{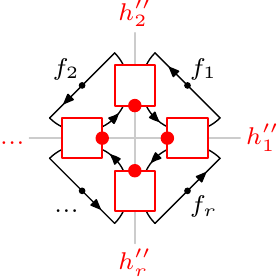}}}\quad
           \vcenter{\hbox{\includegraphics{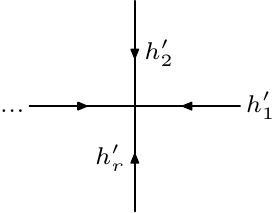}}}
           \label{eq:vertex1}\\
        & =\sum_{(h_i)}
           \prod_{j=1}^r
           f_j\bigl(S(h_j'')\,
                    h_{j+1}'''\bigr)
           \ket{h_1'}\otimes
           \dots\otimes
           \ket{h_r'}.
           \label{eq:vertex2}
    \end{align}
    Note that the orientation of the graph edges can always be reduced
    to the above setting using~\eqref{eq:orientation_reversal} for kets
    and~\eqref{eq:rule2} for diagrams. Now we obtain
    \begin{align*}
        & A(s)
          \ket{\psi_s(f_1,\dots,f_r)} \\
        & =\sum_{(h_i)}
           \prod_{j=2}^{r-1}
           f_j\bigl(S(h_j'')\,
                    h_{j+1}'''\bigr)
           \sum_{(h)}
           f_1\bigl(S(h_1'')\,
                    h_2'''\bigr)\,
           f_r\bigl(S(h_r'')\,
                    h_1'''\bigr) \\
        & \relphantom{=}
           \ket{h^{(1)}h_1'}\otimes
           \dots\otimes
           \ket{h^{(r)}h_r'}
           \displaybreak[0]\\
        & =\sum_{(h_i)}
           \prod_{j=3}^{r-1}
           f_j\bigl(S(h_j'')\,
                    h_{j+1}'''\bigr)
           \sum_{(h)}
           f_1\bigl(S(h_1'')\,
                    h^{(2)}
                    h_2'''\bigr)\,
           f_2\bigl(S(h_2'')\,
                    h_3'''\bigr)\,
           f_r\bigl(S(h^{(1)}h_r'')\,
                    h_1'''\bigr) \\
        & \relphantom{=}
           \ket{h_1'}\otimes
           \ket{h^{(3)}h_2'}\otimes
           \dots\otimes
           \ket{h^{(r+1)}h_r'}
           \displaybreak[0]\\
        & =\sum_{(h_i)}
           f_1\bigl(S(h_1'')\,
                    h_2'''\bigr)
           \prod_{j=3}^{r-1}
           f_j\bigl(S(h_j'')\,
                    h_{j+1}'''\bigr)
           \sum_{(h)}
           f_2\bigl(S(h_2'')\,
                    h^{(2)}
                    h_3'''\bigr)\,
           f_r\bigl(S(h^{(1)}h_r'')\,
                    h_1'''\bigr) \\
        & \relphantom{=}
           \ket{h_1'}\otimes
           \ket{h_2'}\otimes
           \ket{h^{(3)}h_3'}\otimes
           \dots\otimes
           \ket{h^{(r)}h_r'}
           \displaybreak[0]\\
        & =\sum_{(h_i)}
           \prod_{j=1}^{r-2}
           f_j\bigl(S(h_j'')\,
                    h_{j+1}'''\bigr)
           \sum_{(h)}
           f_{r-1}\bigl(S(h_{r-1}'')\,
                        h^{(2)}h_r'''\bigr)\,
           f_r\bigl(S(h_1^{(1)}h_r'')\,
                    h_1'''\bigr) \\
        & \relphantom{=}
           \ket{h_1'}\otimes
           \dots\otimes
           \ket{h_{r-1}'}\otimes
           \ket{h^{(3)}h_r'} \\
        & =\ket{\psi_s(f_1,\dots,f_r)}
    \end{align*}
    where we repeatedly used Lemma~\ref{lem:Hopf_singlet1}.
    This concludes the proof.\qed
\end{proof}

In fact, the proof of Theorem~\ref{thm:Hopf_gs} implies
that $\ket{\psi_\Gamma}$ is also invariant under each local action
of $\mathrm{D}(H)$ at any site~$(s,p)$, not just under the operators
constituting the Hamiltonian. More precisely, one has
\begin{equation}
    B_f(s,p)\,
    A_a(s,p)
    \ket{\psi_\Gamma}
    =\epsilon(a)\,
     f(1_H)
     \ket{\psi_\Gamma}
\end{equation}
for any $f\otimes a\in\mathrm{D}(H)$. This can be easily seen from
the local $\mathrm{D}(H)$-module structure, the fact that $A(s)=A_h(s,p)$
and $B(p)=B_\phi(s,p)$ leave $\ket{\psi_\Gamma}$ strictly invariant and
the properties of the Haar integrals. In other words, \emph{the quantum
state~$\ket{\psi_\Gamma}$ is nothing but a trivial representation of the
quantum double~$\mathrm{D}(H)$}. Comparing with the comment after
Theorem~\ref{thm:model} one realizes that $\ket{\psi_\Gamma}$ should be
viewed as the spatially distributed version of the integral $\phi\otimes h
\in\mathrm{D}(H)$. For these reasons one may call $\ket{\psi_\Gamma}$
the vacuum of the model and as such it has trivial topological charge
everywhere.

Let us emphasize again that one really needs just a single datum,
i.e. the finite-dimensional Hopf $C^*$-algebra~$H$, to produce this topological ground
state since the Haar integral~$h$ (and the Haar functional~$\phi$) is uniquely defined. In
particular, the construction is fully basis-independent.

Furthermore the construction is symmetric%
\footnote{Strictly speaking, symmmetry holds up to a flip in the
    comultiplication, which is precisely the difference between
    $H^*$ and $X$.}
in the
    algebras~$H$ and $X$ with their respective integrals~$h$ and $\phi$,
    hence it has a natural dual notion. In fact, this foreshadows
    electric-magnetic duality as shown
    in~\cite{Buerschaper:2010p3003}.

While we excluded surface boundaries explicitly for
Theorem~\ref{thm:Hopf_gs} the following example shows what one
can learn from the presence of boundaries imposed by the underlying
surface. 

\begin{example}
    Consider the graph~$\Gamma$ underlying the diagram shown in
    Figure~\ref{fig:ttr} and let $H=\mathbb{CZ}_2$. Then the state
    \begin{equation}
        \ket{\psi_0}
        :=\ket{\psi_\Gamma(\{h\};
          \{\phi\};
          \{h\};
          \{\phi\})}
    \end{equation}
    is a codeword of the surface code defined in~\cite{Bravyi:1998}.
    It encodes the logical state~$\ket{+}$. The other
    codeword~$\ket{\psi_1}$ can be obtained by acting on this Hopf
    tensor network state with the appropriate string operator connecting
    the two rough boundaries. This
    generalizes to any finite-dimensional Hopf $C^*$-algebra by using the
    appropriate ribbon operators.
\end{example}

\begin{remark}
    From the proof of Theorem~\ref{thm:Hopf_gs} it is clear that
    any Hopf tensor network state of the form
    \begin{equation}
        \ket{\psi_\Gamma(h,\dots,h;f_1,\dots,f_{\abs{F}})}
    \end{equation}
    is invariant under all vertex operators~$A(s)$. From the perspective
    of lattice gauge theory this means that deforming the ground
    state~$\ket{\psi_\Gamma}$ of the $\mathrm{D}(H)$-model by changing
    the functions $\{\phi\}\mapsto\{f_i\}$ only will never break the gauge
    symmetry. These deformations in the state might therefore well correspond
    to a local perturbation of the Hamiltonian and thus preserve
    topological order provided the strength of the perturbation
    is limited to a certain finite
    threshold~\cite{Bravyi:2010p2791,Bravyi:2010p2836}.
    Some initial work towards this direction has been conducted
    in~\cite{Chen:2010p2865} which is concerned with tensor network
    deformations of the toric code.
\end{remark}

\subsection{Hierarchy}

Apart from ground states of the $\mathrm{D}(H)$-models the
framework of Hopf tensor network states based on an arbitrary Hopf
$C^*$-algebra~$H$ comprises more intriguing examples of quantum many-body
states. These can be obtained from certain other choices of elements in
$H$ and $X$ which are motivated both by the algebraic structure itself
as well as by ideas of charge condensation~\cite{Bais:2002p1647,Bais:2003p1648}:

\begin{definition}[Hierarchy]
    \label{def:Hopf_hierarchy_states}
    Let $A\subset H$ and $B\subset X$ Hopf subalgebras with Haar
    integrals $h_A\in A$ and $\phi_B\in B$ respectively. Set
    \begin{equation}
        \ket{\psi_\Gamma^{A,B}}
        :=\ket{\psi_\Gamma(h_A,\dots,h_A;
          \phi_B,\dots,\phi_B)}.
    \end{equation}
\end{definition}

Obviously, with the choice $A=H$ and $B=X$ we recover the
state~$\ket{\psi_\Gamma}$ which is topologically ordered as a ground
state of the $\mathrm{D}(H)$-model. On the other hand, if $B=\{1_X\}$ is
the trivial Hopf subalgebra of $X$ then the resulting Hopf tensor
network state is a product state for any Hopf subalgebra $A\subset H$.
This can be seen from~\eqref{eq:coproduct_unit}. Indeed, suppose a face~$p$
has $r$~edges in its boundary and $f_p=1_X$. Then
iterating~\eqref{eq:coproduct_unit} yields
\begin{equation}
    \sum_{(f_p)}
    f_p^{(1)}\otimes
    \dots\otimes
    f_p^{(r)}
    =1_X\otimes
     \dots\otimes
     1_X
\end{equation}
and consequently one has
\begin{equation}
   \ttr_\Gamma(\{x_e\};1_X,\dots,1_X)
   =\!\prod_{e\in E_M}
    \vcenter{\hbox{\includegraphics{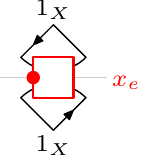}}}
\end{equation}
up to a possible application of the antipode~$S$ on each graph edge
depending on its orientation. Since $1_X=\epsilon^T(1_\mathbb{C})$ is the
analogue of~\eqref{eq:unit} for the Hopf $C^*$-algebra~$X$ we deduce
\begin{equation}
    \vcenter{\hbox{\includegraphics{lattices-97}}}
    =\sum_{(x_e)}
     \eval{\epsilon^T(1_\mathbb{C})}
          {S(x_e')}\,
     \eval{\epsilon^T(1_\mathbb{C})}
          {x_e''}
    =\sum_{(x_e)}
     \epsilon\bigl(S(x_e')\bigr)\,
     \epsilon(x_e'')
    =\epsilon(x_e)
\end{equation}
and therefore
\begin{equation}
    \ttr_\Gamma(\{x_e\};1_X,\dots,1_X)
    =\!\prod_{e\in E_M}\!
     \epsilon(x_e)
\end{equation}
which no longer depends on the orientation of $\Gamma$. Now by
virtue of~\eqref{eq:Hopf_TN_state_compact} the quantum state
\begin{equation}
    \ket{\psi_\Gamma(h_A,\dots,h_A;1_X,\dots,1_X)}
    =\!\sum_{(h_{A,e})}
     \bigotimes_{e\in E_M}\!
     \epsilon(h_{A,e}'')
     \ket{h_{A,e}'}
    =\!\bigotimes_{e\in E_M}\!
     \ket{h_A}
\end{equation}
is seen to factor into a simple product state.

In between these two extremes a hierarchy of quantum states unfolds
which are indexed by different choices of $A$ and $B$. However,
depending on the Hopf algebra~$H$ in question the interior of this
hierarchy may collapse partially. This means that different
pairs~$(A,B)$ of Hopf subalgebras may actually define identical quantum
states. Unfortunately we do not know how to characterize the surviving
equivalence classes of states in closed form without additional
assumptions on $H$.

However, if $H=\mathbb{C}G$ we have a clear picture of the above
hierarchy.
It turns out that the classes of Hopf tensor network states
emerging from the partial collapse are isomorphic to ground states of
certain quantum double models based on groups smaller than $G$. Indeed, the relevant Hopf subalgebras in
this case exactly read $A=\mathbb{C}K$ and $B=\mathbb{C}^{G/N}$ where
$K\subset G$ is a subgroup and $N\lhd G$ a normal subgroup (see
Section~\ref{sec:subalgebras_group}). For simplicity we abbreviate such
a pair of Hopf subalgebras by $(K,N)$. Then it is not difficult to see
that both $(K,N)$ and $(K,K\cap N)$ yield identical Hopf tensor network
states. Furthermore, if $\ket{k_1,\dots,k_{\abs{E}}}$ with $k_e\in K$
for each edge $e\in E$ is a basis state then one has
\begin{equation*}
    \braket{k_1,
            \dots,
            k_{\abs{E}}}
           {\psi_\Gamma^{K,K\cap N}}
    =\braket{k_1
             l_1,
             \dots,
             k_{\abs{E}}
             l_{\abs{E}}}
            {\psi_\Gamma^{K,K\cap N}}
\end{equation*}
for arbitrary elements $l_e\in K\cap N$, so all amplitudes are actually
constant on the cosets~$k_e(K\cap N)$. This means that one may apply the
canonical projection $\pi\colon K\to K/(K\cap N)$ at each edge and
regard $\phi_{K\cap N}$ as the Haar integral of
$\mathbb{C}^{K/(K\cap N)}$. Hence the resulting state coincides with the
ground state~$\ket{\psi_\Gamma}$ of the quantum double based on the
group algebra of the group $K/(K\cap N)\simeq KN/N$.

We conclude that the hierarchy of Hopf tensor network states arranges
quantum double models based on \emph{different} groups in one coherent
picture. These groups are precisely isomorphic to $KN/N$ so the
equivalence classes of states in the hierarchy are the trivial representations
of the quantum doubles~$\mathrm{D}\bigl(\mathbb{C}(KN/N)\bigr)$. These
comprise all possibilities for the residual symmetry algebra after the
full $\mathrm{D}(\mathbb{C}G)$-symmetry has been partially broken down
to a smaller symmetry algebra~\cite{Bais:2002p1647,Bais:2003p1648}. For that
reason the quantum states in our hierarchy realize the condensation of
topological charges within the \emph{same} underlying Hilbert space.

\subsection{{\sc PEPS}}

At this stage it is important to make contact with one of the usual
formulations of tensor network states. In the {\sc PEPS}
approach~\cite{Verstraete:2004p114} a quantum
state is represented by choosing some fixed basis and encoding the
wavefunction amplitude for each basis element in a fully contracted
tensor network. Contrastingly, our Hopf tensor network states are defined
without reference to any basis. It is rather the choice of particular,
often canonical, elements $\{x_e\}\subset H$ and $\{f_p\}\subset X$ which
determines the properties of the quantum state. If the need arises one
can still obtain an explicit wavefunction expansion in a straightforward manner.
As the next example shows, those distinguished elements naturally
generate all relevant sums and amplitudes.

\begin{figure}
    \centering
    \includegraphics{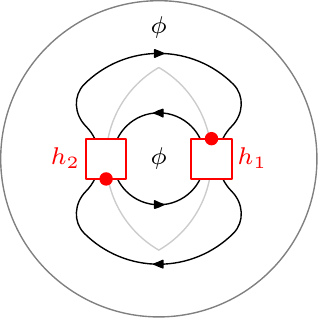}
    \caption{Hopf tensor network state $\ket{\psi_\Gamma}$ on a small
        graph~$\Gamma$ embedded in $S^2$. The outer circle is identified
        with the north pole of $S^2$.}
    \label{fig:small_graph}
\end{figure}

\begin{example}
    Let $H=\mathbb{C}^G$. We would like to explicitly construct the
    Hopf tensor network state~$\ket{\psi_\Gamma}$ for the graph~$\Gamma$
    shown in Figure~\ref{fig:small_graph} and obtain an expansion in
    terms of the canonical basis $\{\delta_g\mid g\in G\}$ of $\mathbb{C}^G$.
    We assume the underlying surface to be homeomorphic to $S^2$ as indicated
    in Figure~\ref{fig:small_graph}. Using the Haar integral $h_i=\delta_e$
    of the dual group algebra~$\mathbb{C}^G$ and its corresponding Haar functional
    $\phi\in H$ we get by
    Theorem~\ref{thm:Hopf_gs}
    \begin{equation*}
        \begin{split}
            \ket{\psi_\Gamma}
            & =\sum_{(h_i)}
               \phi(h_1'''h_2''')\,
               \phi\bigl(S(h_1'')\,
                         S(h_2'')\bigr)
               \ket{h_1'}\otimes
               \ket{h_2'} \\
            & =\sum_{(h_i)}
               \eval{\phi\otimes
                     \phi}
                    {\Delta(h_1''h_2'')}
               \ket{h_1'}\otimes
               \ket{h_2'} \\
            & =\sum_{(h_i)}
               \phi(h_1''h_2'')
               \ket{h_1'}\otimes
               \ket{h_2'} \\
            & =\!\sum_{u,v\in G}
               \phi(\delta_{u^{-1}}
                    \delta_{v^{-1}})
               \ket{\delta_u}\otimes
               \ket{\delta_v} \\
            & =\frac{1}{\abs{G}}
               \sum_{g\in G}
               \ket{\delta_g}\otimes
               \ket{\delta_g}.
        \end{split}
    \end{equation*}
    Interestingly, while the Haar integral~$\delta_e$ merely represents
    a single element of the basis, it is actually the act of
    comultiplication which produces the sum over the entire basis
    from $\delta_e$.\qed
\end{example}

As outlined in the
introduction {\sc PEPS} are defined in terms of local tensors with
\emph{open} virtual indices, for a particular example
see~\eqref{eq:group_tensors}. Contrastingly, our diagrammatic notation as
given by~\eqref{eq:rule4} uses the comultiplication of the Hopf algebra
to disconnect and separate local objects on the virtual level, hence it
appears there are \emph{no} virtual indices at all in our formalism.

This is not entirely true. If the functions~$\{f_p\}$ for evaluating
virtual loops belong to a particular class%
\footnote{This is the so-called character ring
    $R_\mathbb{C}(H)=\sum_{i=1}^n\mathbb{C}\chi_i$ where the~$\chi_i$
    are the irreducible characters of $H$.},
then indeed one may regard the evaluation
of a virtual loop as tracing over a product of certain matrices, once a
particular basis of $H$ has been chosen. Consequently, each such matrix
naturally constitutes part of a local tensor with open virtual indices
and evaluating the loop corresponds to a cyclic contraction of those
indices. Note that this applies to all Hopf tensor network states of our
hierarchy (but is not limited to these). For example, for the
state~$\ket{\psi_\Gamma}$ at the top of the hierarchy one has the local
projector
\begin{equation}
    P
    =\norm{h}^{-1}
     \sum_{(h)}
     \sum_{\alpha,\beta,\gamma,\delta\in\mathcal{B}}
     \bigl(L_+^{S(h'')}\bigr)_{\alpha\beta}\,
     \bigl(L_+^{h'''}\bigr)_{\gamma\delta}
     \ket{h'}
     \bra{\alpha,\beta,\gamma,\delta}
\end{equation}
where $(L_+^a)_{\alpha\beta}$ denotes a matrix element of the
action~\eqref{eq:action1} with respect to some basis~$\mathcal{B}$.%
\footnote{Furthermore one may express the (physical) ket in the same
    basis, too, and appreciate the similarities and differences as
    compared to the trivial case~\eqref{eq:projector}.}
Note that in this setting the local matrices encode both information
about the spin state on the graph edge as well as about the function
used to evaluate the virtual loop.

Furthermore, if a Hopf tensor trace admits a representation in terms
of local tensors with open indices properties like \emph{injectivity}
may be studied. More precisely, a {\sc PEPS} is called
\emph{injective}~\cite{PerezGarcia:2008p401} if there is a partition of the
underlying graph in disjoint regions~$R_i$ such that for every region~$R_i$
the linear map from open virtual indices at the boundary to physical indices
in the interior is injective. Whether or not a particular tensor network
representation of a quantum state is injective has important consequences for e.g.
the existence of a parent Hamiltonian that has the given {\sc PEPS}
as its unique ground state~\cite{PerezGarcia:2008p401}. One can show
that all Hopf tensor network states~$\ket{\psi_\Gamma^{A,B}}$ in the hierarchy
are \emph{not} injective for any finite-dimensional Hopf $C^*$-algebra~$H$.
However, (at least) the state~$\ket{\psi_\Gamma}$ at the top of the hierarchy
obeys a relaxed version of injectivity which one might call \emph{$H$-injectivity}.
This is a certain generalization of the $G$-injectivity condition
defined in~\cite{Schuch:2010p2806}.

\section{Calculating the topological entanglement entropy}
\label{sec:entropy}

Topological order is commonly associated with \emph{non-local}
order parameters. It is believed that among these resides the topological
entanglement entropy~$\gamma$~\cite{Kitaev:2006,Levin:2006p288} which is a
universal additive correction to the area law for the entanglement entropy of a bipartition of the system into a region~$R$ and its complement. Given a ground state
of a topologically ordered system, the entanglement entropy~$S_R$ for a
region~$R$ is argued to scale as
\begin{equation}
    S_R
    =\alpha\cdot
     \abs{\partial R}
     -\gamma
\end{equation}
in the limit of infinitely large regions. Here $\alpha$ is a parameter
which encodes non-universal behaviour on short length scales. In fact, the
topological entanglement entropy~$\gamma$ also contains partial information
about the types of quasiparticle excitations that may occur in the low-energy
sector of the system.

For these reasons we would like to show in this section how the non-local order
parameter~$\gamma$ can be understood naturally in the context of Hopf tensor
network states. In particular, we are going to show that different classes
of finite-dimensional Hopf $C^*$-algebras yield fundamentally different
mechanisms for the emergence of a non-vanishing topological entanglement
entropy.

To this end, we will exploit the generic decomposition of Hopf tensor
network states into interior and exterior parts as stated in
Proposition~\ref{prop:partitions}. We will then compute the block entropy
directly from certain properties of the reduced density operator. In order
to render its simple structure evident we will make use of isometries and
completely clear out the interior of both the region~$R$ and its
complement~$\bar{R}$. Effectively, we concentrate all topological information
contained in the ground state into the (inner) boundaries of the system.
Furthermore, the distillation process will be crafted such that all intermediate
quantum states can be kept track of conveniently via Hopf tensor traces.

It should be noted that our scheme of applying isometries implements entanglement renormalisation~\cite{Vidal:2007p2581,Vidal:2008p2582} both for the $\mathrm{D}(H)$-models as well as for states of the hierarchy. In fact, it directly extends the work in~\cite{Aguado:2008p518}
which is concerned with states at the top of the hierarchy for $H=\mathbb{C}G$.
At the same time our scheme provides a complementary view on entanglement renormalisation for
string-net models~\cite{Konig:2009p1839}, namely from the perspective of a local symmetry algebra.

\subsection{Isometries}
\label{sec:isometries}

We begin by developing the distillation process. For that we will
introduce a hierarchy of “little” isometries
that insert minimal faces and vertices into a given graph. The little
isometries at the top end of the hierarchy are taylored to the generalized quantum
double model, i.e. they preserve all excitations of that model.
Subsequently we will define a pair of local unitary maps that
allow for reconnecting edges. Taken together, both the little isometries
and the unitary maps will
yield a hierarchy of isometries that add or remove arbitrary edges.
Their effect on Hopf tensor network states will be analyzed in the next section.

As far as notation is concerned we will always denote the original graph
by $\Gamma_1$ and the modified one by $\Gamma_2$. Whenever it is
appropriate to talk about actual quantum double models, $\mathcal{H}_i$ will
denote the Hamiltonian of the $\mathrm{D}(H)$-model on the
graph~$\Gamma_i$.

\begin{proposition}[Little isometries]
    \label{prop:little_isometries}
    Let $A\subset H$ and $B\subset X$ Hopf subalgebras with the
    respective Haar integrals $h_A\in A$ and $\phi_B\in B$.
    Additionally, let
    \begin{align}
        \lambda_{A,B} & :=\sum_{(h_A)}
                          \phi_B(h_A'')\,
                          h_A', \\
        1_{A,B}       & :=\frac{\lambda_{A,B}}
                          {\norm{\lambda_{A,B}}}, \\
        \Lambda_A     & :=\frac{h_A}{\norm{h_A}}
                         =\frac{1}{\sqrt{\phi(h_A)}}\,
                          h_A.
    \end{align}

    Then the linear maps~$i_F^{A,B}$ and $i_V^A$ defined via
    \begin{align}
        \vcenter{\hbox{\includegraphics{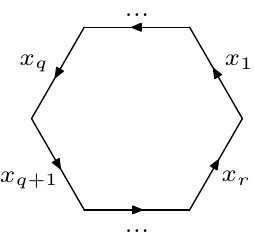}}}\,
        & \xrightarrow{i_F^{A,B}}
           \vcenter{\hbox{\includegraphics{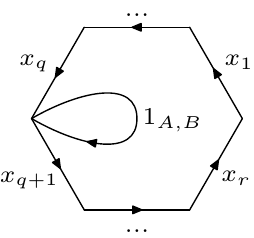}}} \\
        \intertext{and}
        \vcenter{\hbox{\includegraphics{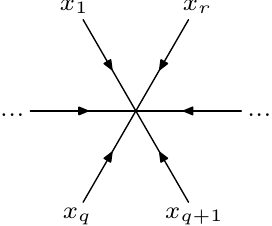}}}\,
        & \xrightarrow{i_V^A}\,
           \vcenter{\hbox{\includegraphics{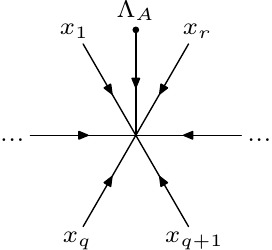}}}
    \end{align}
    are isometries.

    Furthermore, for $i_F:=i_F^{H,H^*}$ and $i_V:=i_V^H$ one has
    \begin{align}
        i_F
        \mathcal{H}_1
        & =\mathcal{H}_2
           i_F, \\
        i_V
        \mathcal{H}_1
        & =\mathcal{H}_2
           i_V.
    \end{align}
\end{proposition}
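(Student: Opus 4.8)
The plan is to prove the two assertions separately: first that $i_F^{A,B}$ and $i_V^A$ are isometries, which is a purely algebraic statement about the inserted data, and then that for the distinguished choice $A=H$, $B=H^*$ the maps $i_F$ and $i_V$ intertwine the two Hamiltonians. For the isometry claim I would verify $i^\dagger i=\id$ on the microscopic Hilbert space $\mathcal{N}_{\Gamma_1}$ directly, working with the inner product eq:inner_product and the adjoint formulas eq:adjoint_vertex and eq:adjoint_face. Each little isometry acts as the identity away from the modification and attaches new edges whose state is assembled from $\Lambda_A=h_A/\norm{h_A}$ (for $i_V^A$) respectively $1_{A,B}=\lambda_{A,B}/\norm{\lambda_{A,B}}$ (for $i_F^{A,B}$), glued to the ambient graph by the coproduct rules eq:rule4 and eq:rule5. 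Using the partition structure of Proposition prop:partitions to isolate the modified region, computing $i^\dagger i$ then contracts the inserted edges back against themselves, so the whole calculation collapses to a local identity supported on those edges.

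The local identity is where Proposition prop:Haar_integral does the work. I would reduce the contraction to an expression in $h_A$ (and $\phi_B$) and simplify it using idempotency $h_A^2=h_A$, self-adjointness $h_A^*=h_A$, antipode-invariance $S(h_A)=h_A$ and the cocommutativity $h_A\in\Cocom(H)$, together with the integral property eq:integral, which lets one absorb neighbouring factors through $h_A$ inside $A$. The normalisations are tailored precisely so that $\norm{\Lambda_A}=\norm{1_{A,B}}=1$; combined with the $*$-representation property of $L_\pm$ and $T_\pm$ (whence adjoints act by the involution), this turns the contracted loop into the scalar $1$ and leaves the identity on the remaining edges. Since $\lambda_{A,B}=(\id\otimes\phi_B)\Delta(h_A)$ is exactly $h_A$ dressed with the face function $\phi_B$, the same idempotency argument applies once $\phi_B^2=\phi_B$ and $\phi_B^*=\phi_B$ are invoked on the dual side.

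For the intertwining relations $i_F\mathcal{H}_1=\mathcal{H}_2 i_F$ and $i_V\mathcal{H}_1=\mathcal{H}_2 i_V$ I would argue term by term, matching the local operators $A(s)=A_h(s,p)$ and $B(p)=B_\phi(s,p)$ of Theorem thm:model on $\Gamma_1$ with those on $\Gamma_2$. Terms whose support is disjoint from the insertion commute past the isometry trivially. The new face (respectively vertex) created by the insertion carries the full Haar data, so its operator acts as the identity on the image — this is again $\phi^2=\phi$ resp.\ $h^2=h$ from Proposition prop:Haar_integral, exactly as in the ground-state computation of Theorem thm:Hopf_gs. The remaining terms are the vertex and face operators adjacent to the insertion, whose support acquires the new edge; for these I would move the operator across the inserted edge using the local $\mathrm{D}(H)$-module structure of Theorem thm:local_modules and the coproduct distribution eq:rule4 and eq:rule5, then reabsorb the extra coproduct leg of $h$ (or $\phi$) via $h^2=h$, $S(h)=h$ and cocommutativity so that the modified operator on $\Gamma_2$ reduces to the original one on $\Gamma_1$.

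The hard part will be precisely this last step: threading the new edge through the support of the neighbouring operators while keeping the orientation of the coproduct consistent (clockwise for faces, counterclockwise for vertices). In particular one must check that the extra projector appearing in $\mathcal{H}_2$ and the modification of the adjacent term conspire so that no spurious constant survives, i.e.\ that the energy is matched exactly rather than up to a shift. I expect this to be a careful but essentially routine diagrammatic manipulation once the normalisations have been fixed by the isometry property; the genuinely new content is confined to showing that the inserted Haar data is transparent to every neighbouring local term.
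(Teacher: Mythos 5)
Your proposal is correct in outline and follows essentially the same route as the paper: the little isometries merely tensor on a new edge in a fixed state, so the isometry claim reduces to $\norm{1_{A,B}}=\norm{\Lambda_A}=1$, which you obtain exactly as the paper does from $\norm{h_A}^2=\phi(h_A^*h_A)=\phi(h_A^2)=\phi(h_A)$ (invoking Proposition~\ref{prop:Haar_integral}); the partition machinery you invoke is unnecessary for this. For the intertwining, the paper likewise argues term by term via transparency of the inserted data.

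One concrete correction, though: for $i_F$ your mechanism is off. By the dual-integral property~\eqref{eq:dual_integral} one has $\lambda_{H,H^*}=\sum_{(h)}\phi(h'')\,h'=\phi(h)\,1_H$, i.e. $1_{H,H^*}=1_H$, so the inserted bubble carries the \emph{unit}, not ``the full Haar data''. Transparency to the enlarged original face operator then follows because multiplying in $1_H$ changes nothing, and transparency to the adjacent vertex operator follows from $\ad(a)(1_H)=\epsilon(a)\,1_H$ — not from $h^2=h$, $S(h)=h$ and cocommutativity, which is the right mechanism only for $i_V$, whose dangling edge carries $\Lambda_H\propto h$. As written, your plan to ``reabsorb the extra coproduct leg of $h$'' for the face insertion would not go through until you perform this simplification. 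Two further orienting remarks: the ``spurious constant'' you flag is real under the literal convention $\mathcal{H}=-\sum_s A(s)-\sum_p B(p)$, since $\mathcal{H}_2$ contains one extra projector that stabilizes the image, so strictly $\mathcal{H}_2\,i=i\,(\mathcal{H}_1-\id)$; the paper silently absorbs this immaterial shift. And the edge-threading you anticipate as the hard part does not arise here at all — the insertion is transparent by construction — it is precisely the content of the reconnecting unitaries $U_F$, $U_V$ in Proposition~\ref{lem:unitaries}, where the paper carries out the long commutation computations separately.
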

\begin{proof}
    For the first claim, it is enough to show the invariance of the
    inner product or equivalently that $\norm{1_{A,B}}=\norm{\Lambda_A}
    =1$. Indeed, from
    \begin{equation*}
        \norm{h_A}^2
        =(h_A,h_A)
        =\phi(h_A^*h_A)
        =\phi(h_A^2)
        =\phi(h_A).
    \end{equation*}
    this is easily seen to be true.

    As for the second claim we note that $i_F$ inserts a loop which is
    automatically stabilized by the corresponding face operator in
    $\mathcal{H}_2$. It is also easily seen that $1_{H,H^*}=1_H$ and hence it does
    not affect the magnetic flux through the original face as measured
    by either $\mathcal{H}_1$ or $\mathcal{H}_2$. Since $1_H$ is trivially invariant under
    the adjoint action we see that the vertex operator at the leftmost
    vertex is not affected by the additional face, either. By similar
    arguments one proves that $i_V$ intertwines the Hamiltonians,
    too.\qed
\end{proof}

For the sake of completeness we also give the maps~$(i_F^{A,B})^\dagger$
and $(i_V^A)^\dagger$:
\begin{align}
    \vcenter{\hbox{\includegraphics{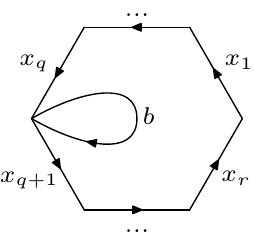}}}\,
    & \xrightarrow{(i_F^{A,B})^\dagger}
       (1_{A,B},b)
       \vcenter{\hbox{\includegraphics{lattices-65}}} \\
    \vcenter{\hbox{\includegraphics{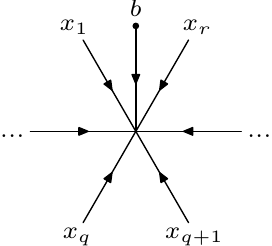}}}
    & \xrightarrow{(i_V^A)^\dagger}
       (\Lambda_A,b)\,\,
       \vcenter{\hbox{\includegraphics{lattices-67}}}
       \label{eq:blub}
\end{align}

Now we define the unitary maps that allow for reconnecting edges of the
underlying graph.
\begin{proposition}[Unitaries]
    \label{lem:unitaries}
    The linear maps~$U_F$ and $U_V$ defined via
    \begin{align}
        \vcenter{\hbox{\includegraphics{lattices-61}}}
        & \xrightarrow{U_F}
           \sum_{(x_i)}\,\,
           \vcenter{\hbox{\includegraphics{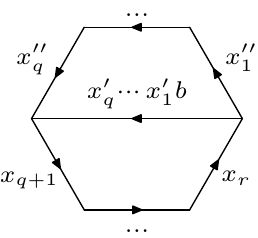}}}
           \label{eq:unitary_face}\\
        \intertext{and}
        \vcenter{\hbox{\includegraphics{lattices-57}}}\,
        & \xrightarrow{U_V}
           \sum_{(b)}\,
           \vcenter{\hbox{\includegraphics{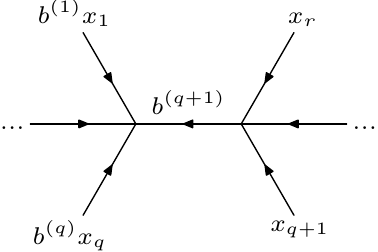}}}
           \label{eq:unitary_vertex} \\
        \intertext{are unitary and their inverses read}
        \vcenter{\hbox{\includegraphics{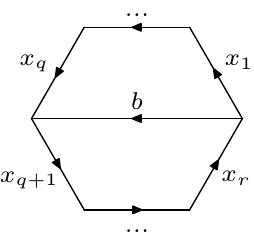}}}
        & \xrightarrow{U_F^\dagger}
           \sum_{(x_i)}\,\,
           \vcenter{\hbox{\includegraphics{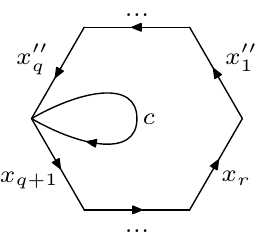}}} \\
        \intertext{where $c=S(x_q'\cdots x_1')\,b$ and}
        \vcenter{\hbox{\includegraphics{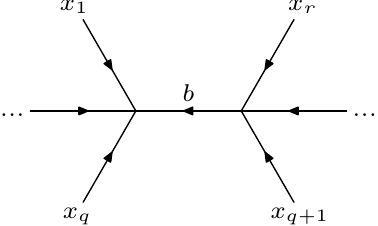}}}\,
        & \xrightarrow{U_V^\dagger}
           \sum_{(b)}
           \vcenter{\hbox{\includegraphics{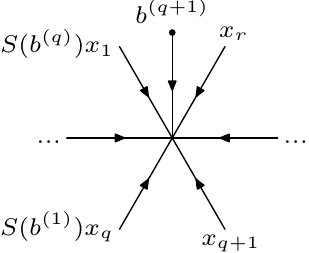}}}
    \end{align}
    respectively.

    Furthermore one has
    \begin{align}
        U_F
        \mathcal{H}_1
        & =\mathcal{H}_2
           U_F,
           \label{eq:unitary_face_commute}\\
        U_V
        \mathcal{H}_1
        & =\mathcal{H}_2
           U_V.
           \label{eq:unitary_vertex_commute}
    \end{align}
\end{proposition}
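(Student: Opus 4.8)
The plan is to treat both maps as \emph{local} moves: each of $U_F$ and $U_V$ rewires only a bounded patch of the graph and acts as the identity on every edge outside it, so it suffices to verify all claims on the affected edges while carrying the spectator edges along unchanged. On the patch the maps are assembled entirely from the structure maps of $H$ (comultiplication, multiplication and antipode) together with the Haar integrals, which is exactly the data entering the local $\mathrm{D}(H)$-module structure of Theorem~\ref{thm:local_modules} and the stabilizers $A(s)=A_h(s,p)$, $B(p)=B_\phi(s,p)$ of Theorem~\ref{thm:model}. This shared provenance is what will ultimately force the intertwining relations.

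For unitarity I would proceed in two steps. First I would check that the map written as $U_F^\dagger$ is a genuine two-sided inverse of $U_F$. Forming the composite $U_F^\dagger U_F$ on a basis configuration of the patch, the two successive edge comultiplications merge by coassociativity; the reconnection factor $c=S(x_q'\cdots x_1')\,b$ then meets the corresponding split of $b$ and collapses through the antipode axiom $\sum_{(x)}x'S(x'')=\epsilon(x)\,1_H=\sum_{(x)}S(x')x''$, after which the residual counits restore the original edge labels, giving $U_F^\dagger U_F=\id$; the mirror computation yields $U_F U_F^\dagger=\id$. Second I would confirm that this inverse really is the adjoint with respect to the edgewise inner product $(a,b)_H=\phi(a^*b)$ from~\eqref{eq:inner_product}. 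The essential inputs here are that $\phi$ is a faithful positive trace and that $S^2=\id$ (Proposition~\ref{prop:Haar_integral}), together with the already-established adjoint formulas for the module actions $L_\pm$ and $T_\pm$, which let the $*$ on each edge be carried through the reconnection. Inverse equal to adjoint then delivers unitarity; the argument for $U_V$ is identical with the roles of $H$ and $X$ (equivalently $\mu$ and $\Delta$) interchanged, now splitting the reconnected edge via $\sum_{(b)}$.

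For the intertwining relations~\eqref{eq:unitary_face_commute} and~\eqref{eq:unitary_vertex_commute} I would show that conjugation by $U_F$ carries each local term of $\mathcal{H}_1$ near the patch to the corresponding term of $\mathcal{H}_2$, while terms away from the patch are untouched, so the two Hamiltonians agree term by term. Since $A(s)$ distributes the coproduct of $h$ over the edges at $s$ via $L_\pm$ and $B(p)$ distributes the coproduct of $\phi$ over the edges of $p$ via $T_\pm$, whereas $U_F$ merely redistributes an edge element by its own coproduct, the computation reduces to the bialgebra compatibility $\Delta(xy)=\Delta(x)\Delta(y)$ together with coassociativity and the antipode axioms. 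Concretely, one commutes $U_F$ past a stabilizer, uses $\Delta(xy)=\Delta(x)\Delta(y)$ to recombine the split edge with the factor multiplied in by the stabilizer, and reads off that the result is precisely the stabilizer attached to the reconnected edge of $\Gamma_2$ acting on $U_F$ of the state. The same steps, dualized, handle $U_V$.

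The main obstacle I anticipate is the bookkeeping in this last step: as in the proof of Theorem~\ref{thm:local_modules}, one must track a proliferation of Sweedler indices on both the reconnected edges and the stabilizer element and arrange them so that exactly the right antipode pairs $\sum_{(x)}x'S(x'')=\epsilon(x)\,1_H$ appear to cancel the reconnection factor. The conceptual content is light---every identity used is a Hopf-algebra axiom---but keeping the clockwise versus counterclockwise ordering of coproducts consistent with the conventions fixed in Definition~\ref{def:general_module} is where the care lies, since a single misordering would spoil the cancellation.
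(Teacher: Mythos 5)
Your overall route coincides with the paper's. The explicit formulas for $U_F^\dagger$ and $U_V^\dagger$ are verified there to be two-sided inverses by exactly the coassociativity/antipode/counit cascade you describe (one small correction: in $U_F^\dagger U_F$ it is the edge labels $x_i$ that are split twice and then cancelled pairwise via $\sum_{(x)}S(x')\,x''=\epsilon(x)1_H$; the element $b$ on the reconnected edge is never comultiplied). Likewise the intertwining relations are proved term by term on the affected patch, just as you propose: the two complete faces directly, the surrounding faces via $[U_F,(T_+^f)_i]=0$, and the incomplete vertices via identities for $L_\pm$ together with the adjoint action $\ad$ on the reconnected edge — with precisely the Sweedler bookkeeping you anticipate, and with the paper restricting the vertex checks to $b\propto 1_H$ after observing that $B(p_2)$ projects $b$ onto $\phi(b)\,1_H$.

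The one substantive slip is in your second unitarity step. Compatibility with the inner product \eqref{eq:inner_product} cannot be obtained from "the already-established adjoint formulas for the module actions $L_\pm$ and $T_\pm$": the maps $U_F$ and $U_V$ are not assembled from these actions at all, but from edge comultiplication followed by multiplication onto the reconnected edge, so there is no operator of the form $L_\pm^a$ or $T_\pm^f$ whose adjoint you could invoke; and the trace property of $\phi$ together with $S^2=\id$ (Proposition~\ref{prop:Haar_integral}) is not sufficient either, since multiplication by a fixed non-unitary element satisfies no such soft preservation statement. What actually closes the computation — in the paper's direct verification that $U_F$ and $U_V$ preserve the inner product, which in finite dimension is equivalent to your "inverse equals adjoint" formulation — is the two-sided integral property \eqref{eq:dual_integral} of the Haar functional: one first uses that $\Delta$ and $*$ are multiplicative to merge split legs, and then applies $\sum_{(z)}z'\,\phi(z'')=\phi(z)\,1_H$ edge by edge, e.g. contracting $\phi\bigl(\cdots(x_q^*)'y_q'\cdots\bigr)\,\phi\bigl((x_q^*)''y_q''\bigr)$ into $(x_q,y_q)$ times the shorter expression, and, in the $U_V$ case, telescoping $\Delta^{(q)}(b^*c)$ down to $(b,c)\prod_{j}(x_j,y_j)$. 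With that ingredient named, your plan reproduces the paper's proof.
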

\begin{proof}
    It is easy to check that $U_F^\dagger$ ($U_V^\dagger$) as defined
    above is both a left and a right inverse of the map~$U_F$ ($U_V$).
    Indeed, leaving out the part $x_{q+1}\otimes\dots\otimes x_r$ (which
    is unaffected by $U_F$) one has:
    \begin{align*}
        U_F^\dagger
        U_F
        (x_1\otimes
         \dots\otimes
         x_q\otimes
         b)
        & =U_F^\dagger
           \biggl(\sum_{(x_i)}
                  x_1''\otimes
                  \dots\otimes
                  x_q''\otimes
                  x_q'\cdots
                  x_1'
                  b\biggr) \\
        & =\sum_{(x_i)}
        x_1'''\otimes
        \dots\otimes
        x_q'''\otimes
        S(x_q''\cdots
        x_1'')\,
        x_q'\cdots
        x_1'
        b
        \displaybreak[0]\\
        & =\sum_{(x_i)}
        x_1'''\otimes
        \dots\otimes
        x_q'''\otimes
        S(x_1'')\cdots
        S(x_q'')\,
        x_q'\cdots
        x_1'
        b
        \displaybreak[0]\\
        & =\sum_{(x_i)}
        x_1''\otimes
        \dots\otimes
        x_q''\otimes
        \epsilon(x_1')\cdots
        \epsilon(x_q')\,
        b \\
        & =x_1\otimes
        \dots\otimes
        x_q\otimes
        b.
    \end{align*}
    Similarly, one shows $U_FU_F^\dagger=\id$. Now
    \begin{align*}
        U_V^\dagger
        U_V(x_1\otimes
        \dots\otimes
        x_q\otimes
        b)
        & =U_V^\dagger
        \sum_{(b)}
        b^{(1)}
        x_1\otimes
        \dots\otimes
        b^{(q)}
        x_q\otimes
        b^{(q+1)} \\
        & =\sum_{(b)}
        \sum_{(b^{(q+1)})}
        S\bigl((b^{(q+1)})^{(q)}\bigr)\,
        b^{(1)}
        x_1\otimes
        \cdots \\
        & \relphantom{=}
        \hphantom{\sum_{(b)}
          \sum_{(b^{(q+1)})}}
        \otimes
        S\bigl((b^{(q+1)})^{(1)}\bigr)\,
        b^{(q)}
        x_q\otimes
        (b^{(q+1)})^{(q+1)}
        \displaybreak[0]\\
        & =\sum_{(b)}
        S(b^{(2q)})\,
        b^{(1)}
        x_1\otimes
        \dots\otimes
        S(b^{(q+1)})\,
        b^{(q)}
        x_q\otimes
        b^{(2q+1)}
        \displaybreak[0]\\
        & =\sum_{(b)}
        S(b^{(2q-2)})\,
        b^{(1)}
        x_1\otimes
        \dots\otimes
        S(b^{(q)})\,
        b^{(q-1)}
        x_{q-1} \\
        & \relphantom{=}
        \hphantom{\sum_{(b)}}
        \otimes
        x_q\otimes
        b^{(2q-1)}
        \displaybreak[0]\\
        & =\sum_{(b)}
        S(b'')\,
        b'
        x_1\otimes
        x_2\otimes
        \dots\otimes
        x_q\otimes
        b''' \\
        & =x_1\otimes
        \dots\otimes
        x_q\otimes
        b.
    \end{align*}
    and by the same token one proves $U_VU_V^\dagger=\id$.

    For the first claim it remains to show that for all $x_i,y_i,b,c\in
    H$ the inner product~\eqref{eq:inner_product} is invariant under
    $U_F$ ($U_V$):
    \begin{align*}
        & \bigl(U_F
        (x_1\otimes
        \dots\otimes
        x_q\otimes
        b),
        U_F
        (y_1\otimes
        \dots\otimes
        y_q\otimes
        c)\bigr) \\
        & =\biggl(\sum_{(x_i)}
                  x_1''\otimes
                  \dots\otimes
                  x_q''\otimes
                  x_q'\cdots
                  x_1'
                  b,
                  \sum_{(y_i)}
                  y_1''\otimes
                  \dots\otimes
                  y_q''\otimes
                  y_q'\cdots
                  y_1'
                  c\biggr)
        \displaybreak[0]\\
        & =\!\!\sum_{(x_i)(y_i)}\!
        (x_q'\cdots
        x_1'
        b,
        y_q'\cdots
        y_1'
        c)
        \prod_{j=1}^q
        (x_j'',y_j'')
        \displaybreak[0]\\
        & =\!\!\sum_{(x_i)(y_i)}\!
        \phi\bigl(b^*(x_1^*)'\cdots
        (x_q^*)'
        y_q'\cdots
        y_1'
        c\bigr)\,
        \phi\bigl((x_q^*)''
        y_q''\bigr)
        \prod_{j=1}^{q-1}
        \phi\bigl((x_j^*)''
        y_j''\bigr)
        \displaybreak[0]\\
        & =(x_q,y_q)\!
        \sum_{(x_i)(y_i)}\!
        \phi\bigl(b^*(x_1^*)'\cdots
        (x_{q-1}^*)'
        y_{q-1}'\cdots
        y_1'
        c\bigr)
        \prod_{j=1}^{q-1}
        \phi\bigl((x_j^*)''
        y_j''\bigr)
        \displaybreak[0]\\
        & =(b,c)
        \prod_{j=1}^q
        (x_j,y_j) \\
        & =(x_1\otimes
        \dots\otimes
        x_q\otimes
        b,
        y_1\otimes
        \dots\otimes
        y_q\otimes
        c).
    \end{align*}
    Note that we used property~\eqref{eq:dual_integral} of the Haar
    integral~$\phi$ in the fifth line. Furthermore we have
    \begin{align*}
        & \bigl(U_V
        (x_1\otimes
        \dots\otimes
        x_q\otimes
        b),
        U_V
        (y_1\otimes
        \dots\otimes
        y_q\otimes
        c)\bigr) \\
        & =\biggl(\sum_{(b)}
                  b^{(1)}
                  x_1\otimes
                  \dots\otimes
                  b^{(q)}
                  x_q\otimes
                  b^{(q+1)},
                  \sum_{(c)}
                  c^{(1)}
                  y_1\otimes
                  \dots\otimes
                  c^{(q)}y_q\otimes
                  c^{(q+1)}\Biggr)
           \displaybreak[0]\\
        & =\!\sum_{(b)(c)}
        (b^{(q+1)},
        c^{(q+1)})
        \prod_{j=1}^q
        (b^{(j)}x_j,
        c^{(j)}y_j)
        \displaybreak[0]\\
        & =\!\sum_{(b)(c)}
        \phi\bigl((b^{(q+1)})^*
        c^{(q+1)}\bigr)
        \prod_{j=1}^q
        \phi\bigl(x_j^*
        (b^{(j)})^*
        c^{(j)}
        y_j\bigr)
        \displaybreak[0]\\
        & =\sum_{(b^*c)}
        \phi\bigl((b^*c)^{(q+1)}\bigr)
        \prod_{j=1}^q
        \phi\bigl(x_j^*
        (b^*c)^{(j)}
        y_j\bigr)
        \displaybreak[0]\\
        & =(x_q,y_q)
        \sum_{(b^*c)}
        \phi\bigl((b^*c)^{(q)}\bigr)
        \prod_{j=1}^{q-1}
        \phi\bigl(x_j^*
        (b^*c)^{(j)}
        y_j\bigr)
        \displaybreak[0]\\
        & =(b,c)
        \prod_{j=1}^q
        (x_j,y_j) \\
        & =(x_1\otimes
        \dots\otimes
        x_q\otimes
        b,
        y_1\otimes
        \dots\otimes
        y_q\otimes
        c).
    \end{align*}

    For the second claim, one needs to show that $U_F$ ($U_V$) commutes
    appropriately with the terms~$A(s)$ and $B(p)$ in the
    Hamiltonians~$\mathcal{H}_i$. This can be done directly for those faces which
    are complete in~\eqref{eq:unitary_face}. Stating this more
    precisely we have:
    \begin{equation*}
        U_F
        B(p_i)
        =B(p_i')\,
        U_F\qquad
        i\in\{1,2\}
    \end{equation*}
    where faces are labelled as follows:
    \begin{equation*}
        \vcenter{\hbox{\includegraphics{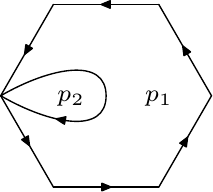}}}\ ,\qquad
        \vcenter{\hbox{\includegraphics{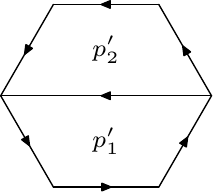}}}\ .
    \end{equation*}
    Indeed, consider
    \begin{align*}
        & U_F
        B(p_2)
        (x_1\otimes
        \dots\otimes
        x_q\otimes
        b) \\
        & =U_F
           \biggl(x_1\otimes
                  \dots\otimes
                  x_q\otimes
                  \sum_{(b)}
                  \phi(b')\,
                  b''\biggr)
           \displaybreak[0]\\
        & =\!\!\sum_{(x_i)(b)}\!
        x_1''\otimes
        \dots\otimes
        x_q''\otimes
        x_q'\cdots
        x_1'
        b'
        \phi(b'')
        \displaybreak[0]\\
        & =\!\!\sum_{(x_i)(b)}\!
        \phi\bigl(S(b'')\,
        S(x_1^{(2)})\cdots
        S(x_q^{(2)})\,
        x_q^{(3)}\cdots
        x_1^{(3)}\bigr)\,
        x_1^{(4)}\otimes
        \dots\otimes
        x_q^{(4)}\otimes
        x_q^{(1)}\cdots
        x_1^{(1)}
        b'
        \displaybreak[0]\\
        & =\!\!\sum_{(x_i)(b)}\!
        \phi\bigl(S(x_q^{(2)}\cdots
        x_1^{(2)}
        b'')\,
        x_q^{(3)}\cdots
        x_1^{(3)}\bigr)\,
        x_1^{(4)}\otimes
        \dots\otimes
        x_q^{(4)}\otimes
        x_q^{(1)}\cdots
        x_1^{(1)}
        b'
        \displaybreak[0]\\
        & =(\id\otimes
        \dots\otimes
        \id\otimes
        S)\!
        \sum_{(x_i)(b)}
        \sum_{(S(x_q'\cdots x_1'b))}\!
        \phi\bigl(S(x_q'\cdots
        x_1'
        b)'
        x_q''\cdots
        x_1'') \\
        & \relphantom{=}
        \hphantom{(\id\otimes
          \dots\otimes
          \id\otimes
          S)
          \!\sum_{(x_i)(b)}
          \sum_{(S(\dots))}}
        x_1'''\otimes
        \dots\otimes
        x_q'''\otimes
        S(x_q'\cdots
        x_1'
        b)''
        \displaybreak[0]\\
        & =B(p_2')
        (x_1''\otimes
        \dots\otimes
        x_q''\otimes
        x_q'\cdots
        x_1'
        b) \\
        & =B(p_2')\,
        U_F
        (x_1\otimes
        \dots\otimes
        x_q\otimes
        b).
    \end{align*}
    Note that in the fourth line we repeatedly used the definition of
    the antipode. Hence both $B(p_2)$ and $B(p_2')$ project $b$ onto the
    value~$\phi(b)\,1_H$. In the following we may therefore assume that
    $b$ is replaced by $\phi(b)\,1_H$. We now analyze the other complete
    face:
    \begin{align*}
        & U_F
        B(p_1)
        \bigl(x_1\otimes
        \dots\otimes
        x_q\otimes
        \phi(b)\,
        1_H\otimes
        x_{q+1}\otimes
        \dots\otimes
        x_r\bigr) \\
        & =\phi(b)\,
           U_F
           \biggl(\sum_{(x_i)}
           \phi(x_r'\cdots
                x_1')\,
           x_1''\otimes
           \dots\otimes
           x_q''\otimes
           1_H\otimes
           x_{q+1}''\otimes
           \dots\otimes
           x_r''\biggr)
        \displaybreak[0]\\
        & =\phi(b)
        \sum_{(x_i)}
        \phi(x_r'\cdots
        x_1')\,
        x_1'''\otimes
        \dots\otimes
        x_q'''\otimes
        x_q''\cdots
        x_1''\otimes
        x_{q+1}''\otimes
        \dots\otimes
        x_r''
        \displaybreak[0]\\
        & =\phi(b)
        \sum_{(x_i)}
        \sum_{(x_q'\cdots
          x_1')}
        \phi\bigl(x_r'\cdots
        x_{q+1}'
        (x_q'\cdots
        x_1')'\bigr)\,
        x_1''\otimes
        \dots\otimes
        x_q''\otimes
        (x_q'\cdots
        x_1')'' \\
        & \relphantom{=}
        \hphantom{\phi(b)
          \sum_{(x_i)}
          \sum_{(x_q'\cdots
            x_1')}}
        \otimes
        x_{q+1}''\otimes
        \dots\otimes
        x_r''
        \displaybreak[0]\\
        & =\phi(b)\,
           B(p_1')
           \biggl(\sum_{(x_i)}
                  x_1''\otimes
                  \dots\otimes
                  x_q''\otimes
                  x_q'\cdots
                  x_1'\otimes
                  x_{q+1}\otimes
                  \dots\otimes
                  x_r\biggr)
           \displaybreak[0]\\
        & =B(p_1)\,
        U_F(x_1\otimes
        \dots\otimes
        x_q\otimes
        \phi(b)\,
        1_H\otimes
        x_{q+1}\otimes
        \dots\otimes
        x_r).
    \end{align*}
    For those surrounding faces which are affected by $U_F$ one may
    easily prove
    \begin{equation*}
        [U_F,(T_+^f)_i]
        =0\qquad
        i\in\{1,\dots,q\}
    \end{equation*}
    for any $f\in X$. Here $(T_+^f)_i$ denotes the action on the i-th
    tensor factor.

    Those incomplete vertices in~\eqref{eq:unitary_face} which are
    affected by $U_F$ can be dealt with if the following is true for all
    $a\in H$, $i\in\{1,\dots,q-1\}$ and $b\propto1_H$:
    \begin{align*}
        U_F(L_-^a)_1
        & =\biggl(\sum_{(a)}
                  (L_-^{a'})_1\otimes
                  (L_-^{a''})_\lozenge\biggr)
           U_F \\
        \biggl[U_F,
               \sum_{(a)}
               (L_+^{a'})_i\otimes
               (L_-^{a''})_{i+1}\biggr]
        & =0 \\
        U_F
        \biggl(\sum_{(a)}
               (L_+^{a''})_q\otimes
               \ad(a')_\lozenge\biggr)
        & =\biggl(\sum_{(a)}
                  (L_+^{a''})_q\otimes
                  (L_+^{a'})_\lozenge\biggr)
           U_F.
    \end{align*}
    Here $\lozenge$ denotes the bubble edge that is reconnected by
    $U_F$. While it is easy to show the first two equations, the
    remaining one needs some more attention:
    \begin{align*}
        & U_F
          \biggl(\sum_{(a)}
                 (L_+^{a''})_q\otimes
                 \ad(a')_\lozenge\biggr)
          \bigl(x_1\otimes
                \dots\otimes
                x_q\otimes
                \phi(b)\,
                1_H\bigr) \\
        & =U_F
           \biggl(\sum_{(a)}
                  x_1\otimes
                  \dots\otimes
                  x_{q-1}\otimes
                  a''
                  x_q\otimes
                  \phi(b)
                  \ad(a')(1_H)\biggr)
           \displaybreak[0]\\
        & =\phi(b)\,
        U_F(x_1\otimes
        \dots\otimes
        x_{q-1}\otimes
        ax_q\otimes
        1_H)
        \displaybreak[0]\\
        & =\phi(b)\!
        \sum_{(x_i)(a)}\!
        x_1''\otimes
        \dots\otimes
        x_{q-1}''\otimes
        a''
        x_q''\otimes
        a'
        x_q'\dots
        x_1'
        \displaybreak[0]\\
        & =\biggl(\sum_{(a)}
                  (L_+^{a''})_q\otimes
                  (L_+^{a'})_\lozenge\biggr)
           U_F
           (x_1\otimes
            \dots\otimes
            x_q\otimes
            \phi(b)\,
            1_H)
    \end{align*}
    This completes the proof of \eqref{eq:unitary_face_commute}.

    The proof of \eqref{eq:unitary_vertex_commute} is analogous
    and left as an exercise for the reader.\qed
\end{proof}

\begin{corollary}[Isometries]
    The linear maps $I_F^{A,B}:=U_F\circ i_F^{A,B}$,
    \begin{equation}
        \vcenter{\hbox{\includegraphics{lattices-65}}}
        \xrightarrow{I_F^{A,B}}
        \sum_{(x_i)}\,\,
        \vcenter{\hbox{\includegraphics{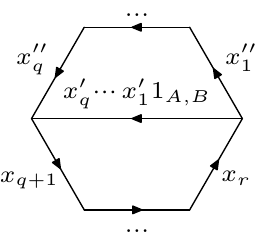}}}
    \end{equation}
    and $I_V^A:=U_V\circ i_V^A$,
    \begin{equation}
        \vcenter{\hbox{\includegraphics{lattices-67}}}\,
        \xrightarrow{I_V^A}
        \sum_{(\Lambda_A)}\,
        \vcenter{\hbox{\includegraphics{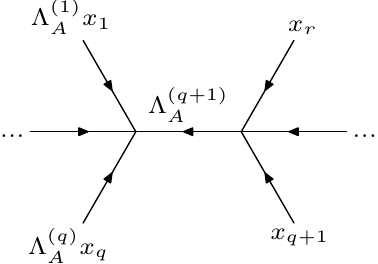}}}
    \end{equation}
    are isometries. Furthermore, for $I_F:=I_F^{H,H^*}$ and $I_V:=I_V^H$
    one has
    \begin{align}
        I_F
        \mathcal{H}_1
        & =\mathcal{H}_2
           I_F, \\
        I_V
        \mathcal{H}_1
        & =\mathcal{H}_2
           I_V.
    \end{align}
\end{corollary}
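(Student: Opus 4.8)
The plan is to deduce both assertions directly from the two preceding propositions, since by definition $I_F^{A,B}=U_F\circ i_F^{A,B}$ and $I_V^A=U_V\circ i_V^A$ are composites of a little isometry with a unitary. Nothing new has to be computed; the work is entirely in chaining the established statements and keeping track of the intermediate graph.

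For the isometry claim I would use that a composite of an isometry with a unitary is again an isometry. Proposition~\ref{prop:little_isometries} shows that $i_F^{A,B}$ and $i_V^A$ preserve the inner product~\eqref{eq:inner_product} for arbitrary Hopf subalgebras $A\subset H$ and $B\subset X$, while Proposition~\ref{lem:unitaries} shows that $U_F$ and $U_V$ are unitary. Hence for any vectors $\psi,\chi$ in the appropriate domain,
\begin{equation*}
    \bigl(I_F^{A,B}\psi, I_F^{A,B}\chi\bigr)
    =\bigl(U_F i_F^{A,B}\psi, U_F i_F^{A,B}\chi\bigr)
    =\bigl(i_F^{A,B}\psi, i_F^{A,B}\chi\bigr)
    =(\psi,\chi),
\end{equation*}
and the identical computation with $i_V^A$ and $U_V$ settles the case of $I_V^A$. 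This establishes the isometry property for all $A$ and $B$.

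For the intertwining relations I would track the intermediate graph carefully. The little isometry $i_F$ inserts a minimal face, taking the original graph $\Gamma_1$ to an intermediate graph $\Gamma'$, and the unitary $U_F$ then reconnects the bubble edge, taking $\Gamma'$ to the final graph $\Gamma_2$. Writing $\mathcal{H}'$ for the Hamiltonian of the $\mathrm{D}(H)$-model on $\Gamma'$, Proposition~\ref{prop:little_isometries} applied to the insertion step gives $i_F\mathcal{H}_1=\mathcal{H}'i_F$, and Proposition~\ref{lem:unitaries} applied to the reconnection step gives $U_F\mathcal{H}'=\mathcal{H}_2 U_F$. Chaining these yields
\begin{equation*}
    I_F\mathcal{H}_1
    =U_F i_F\mathcal{H}_1
    =U_F\mathcal{H}' i_F
    =\mathcal{H}_2 U_F i_F
    =\mathcal{H}_2 I_F,
\end{equation*}
and the same argument with $i_V$, $U_V$ in place of $i_F$, $U_F$ produces $I_V\mathcal{H}_1=\mathcal{H}_2 I_V$.

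The sole point requiring care — and the only potential obstacle — is the bookkeeping of which graph each Hamiltonian lives on: one must verify that the modified graph produced by the little isometry is \emph{exactly} the graph on which the subsequent unitary is defined, so that the intermediate Hamiltonian $\mathcal{H}'$ on the right of the first relation coincides with the one on the left of the second and thereby cancels. This matching is immediate from the diagrammatic definitions of $i_F$ and $U_F$ (respectively $i_V$ and $U_V$), whose output and input graphs agree by construction. Finally I would note that the intertwining holds only for the distinguished choice $I_F=I_F^{H,H^*}$ (resp. $I_V=I_V^H$), precisely because the Hamiltonian-intertwining half of Proposition~\ref{prop:little_isometries} is stated only for $i_F=i_F^{H,H^*}$.
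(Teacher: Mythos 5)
Your proof is correct and matches the paper's (implicit) argument exactly: the corollary is stated without proof precisely because it follows by composing Proposition~\ref{prop:little_isometries} with Proposition~\ref{lem:unitaries} in the way you describe, including the observation that the intertwining relations are only claimed for $I_F^{H,H^*}$ and $I_V^H$. Your added care in checking that the output graph of the little isometry is the input graph of the unitary is the right bookkeeping point and is indeed immediate from the diagrammatic definitions.
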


Put differently, the collection of these maps allows to move between the
$\mathrm{D}(H)$-models defined on two arbitrary graphs which are
embedded in the \emph{same} surface. Furthermore, all maps are local,
i.e. they cannot change nonlocal topological quantum numbers.

\subsection{Transforming hierarchy states}

In the preceding section we have learned that the set~$\{I_F,I_V\}$ of
isometries maps the ground state subspaces of generalized quantum double
models on related graphs onto each other. We would like to prove that this
not only holds for the entire subspace but also for individual ground states.
Namely, the isometries~$\{I_F,I_V\}$ precisely identify the Hopf tensor
network states~$\ket{\psi_{\Gamma_i}}$ as given by
Theorem~\ref{thm:Hopf_gs} for the different graphs~$\Gamma_i$.

Keeping our focus on states we will furthermore
show (for a special case) that the hierarchy of isometries~$\{I_F^{A,B},
I_V^A\}$ has the very same effect on the hierarchy of
states~$\ket{\psi_{\Gamma_i}^{A,B}}$ as given by
Definition~\ref{def:Hopf_hierarchy_states}. In other words, identifying
these states in an isometric fashion is as simple as adding or removing
edges from the underlying graph and adjusting the tensor trace
canonically.

Once we act with the isometries on our hierarchy of Hopf tensor network
states it will become clear that the quantum states defined in
Section~\ref{sec:Hopf_TN_states} are not properly normalized. We can
(partially) remedy this situation by choosing integrals of unit norm
instead of the Haar integrals~$h_A$ and $\phi_B$. In other words, we
need to include a factor of $\norm{h_A}^{-1}$ for each edge and a factor
of $\norm{\phi_B}^{-1}$ for each (complete) face of a Hopf tensor
network state. It is understood that from now on all Hopf tensor network
states will be normalized in this fashion without change of notation
unless otherwise stated.

We now fix some notation in the following definition so that we can
easily refer to pieces of Hopf tensor network states later on.

\begin{definition}
    Let $A\subset H$, $B\subset X$ Hopf subalgebras, $h_A\in A$,
    $\phi_B\in B$ their respective Haar integrals and let $f_1,\dots,
    f_r\in B$. Set
    \begin{align}
        \ket{\psi_p^{A,B}(\{f_i\})}
        & :=\frac{1}{\norm{h_A}^r\,
          \norm{\phi_B}}
        \sum_{(h_{A,i})}
        \vcenter{\hbox{\includegraphics{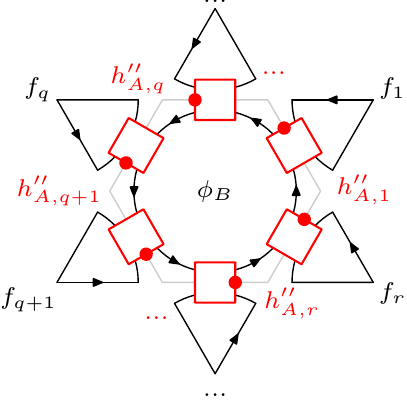}}}
        \nonumber\\
        & \relphantom{:=}
        \vcenter{\hbox{\includegraphics{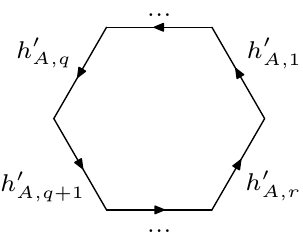}}},
        \displaybreak[0]\\
        \ket{\psi_{p\cup p'}^{A,B}(\{f_i\})}
        & :=\frac{1}{\norm{h_A}^{r+1}
          \norm{\phi_B}^2}
        \sum_{(h_{A,i})(h_A)}
        \vcenter{\hbox{\includegraphics{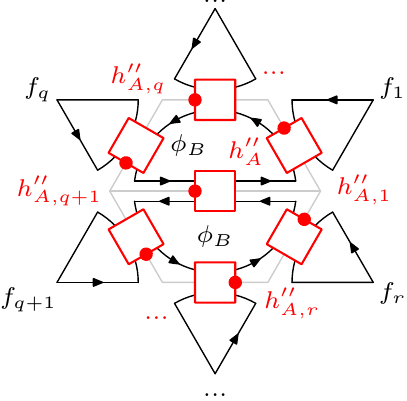}}}
        \nonumber\\
        & \relphantom{:=}
        \vcenter{\hbox{\includegraphics{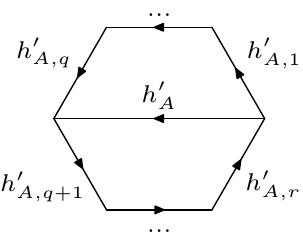}}},
        \displaybreak[0]\\
        \ket{\psi_s^{A,B}(\{f_i\})}
        & :=\frac{1}{\norm{h_A}^r}
        \sum_{(h_{A,i})}\,
        \vcenter{\hbox{\includegraphics{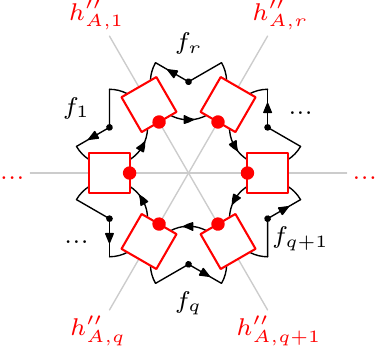}}}\quad
        \vcenter{\hbox{\includegraphics{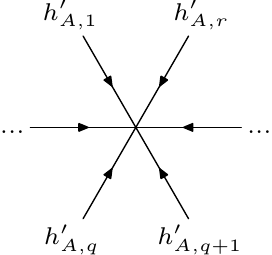}}}, \\
        \intertext{and}
        \ket{\psi_{s\cup s'}^{A,B}(\{f_i\})}
        & :=\frac{1}{\norm{h_A}^{r+1}}\!
        \sum_{(h_{A,i})(h_A)}
        \vcenter{\hbox{\includegraphics{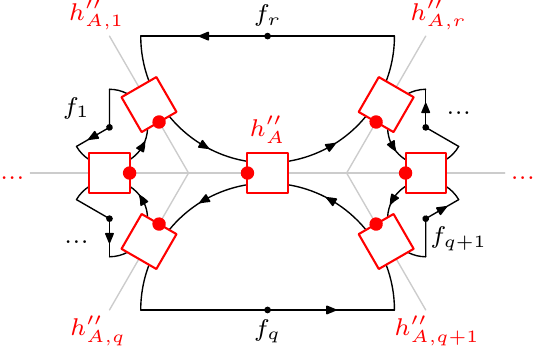}}}
        \nonumber\\
        & \relphantom{:=}
        \vcenter{\hbox{\includegraphics{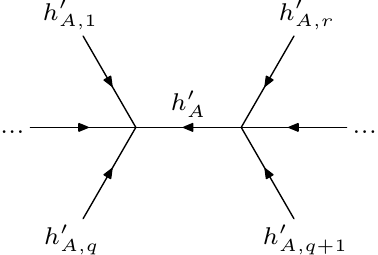}}}.
    \end{align}
\end{definition}

As before we simply write $\ket{\psi_\Gamma(f_1,\dots,f_r)}$ whenever
$A=H$ and $B=X$. Note that in this case one deals with a fragment of a
quantum double model ground state!

\begin{proposition}
    \label{prop:isometries_states}
    \begin{align}
        I_F
        \ket{\psi_p
          (f_1,\dots,f_r)}
        & =\ket{\psi_{p\cup p'}
          (f_1,\dots,f_r)}, \\
        I_V^A
        \ket{\psi_s^{A,B}
          (f_1,\dots,f_r)}
        & =\ket{\psi_{s\cup s'}^{A,B}
          (f_1,\dots,f_r)}.
    \end{align}
\end{proposition}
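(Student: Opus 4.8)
The plan is to prove both identities by unfolding the definitions $I_F = U_F\circ i_F$ and $I_V^A = U_V\circ i_V^A$ and then carrying out a direct diagrammatic computation governed by the evaluation rules \eqref{eq:rule4} and \eqref{eq:rule5} together with the defining properties of the Haar integrals collected in Proposition~\ref{prop:Haar_integral}. Since $\ket{\psi_p(f_1,\dots,f_r)}$ and $\ket{\psi_s^{A,B}(f_1,\dots,f_r)}$ are by construction fragments of Hopf tensor network states whose edges carry Haar integrals and whose virtual loops carry $\phi$ (respectively $\phi_B$), the whole argument amounts to tracking how the little isometry inserts a new face or vertex and how the ensuing unitary redistributes the Sweedler components of the edge elements through comultiplication.

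For the face identity I would first apply $i_F = i_F^{H,H^*}$. By Proposition~\ref{prop:little_isometries} one has $1_{H,H^*} = 1_H$, so $i_F$ attaches a minimal bubble face $p'$ whose single new edge carries $1_H$ and whose virtual loop function is $\phi$; because $i_F$ is an isometry, the normalization convention supplies exactly the extra factor $\norm{h}^{-1}\norm{\phi}^{-1}$ appearing in the prefactor of $\ket{\psi_{p\cup p'}(f_1,\dots,f_r)}$. I would then apply $U_F$ using its reconnection formula \eqref{eq:unitary_face}: the clockwise product of the primed parts of the boundary edges is multiplied into the new edge (initially $1_H$) while the doubly primed parts remain on the original edges. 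Invoking the idempotency $h^2=h$ and cocommutativity $h\in\Cocom(H)$ of the Haar integral together with $\phi\in\Cocom(X)$ and the dual integral identity \eqref{eq:dual_integral}, one checks that the resulting diagram is precisely the one defining $\ket{\psi_{p\cup p'}(f_1,\dots,f_r)}$.

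For the vertex identity the argument is entirely analogous but remains valid for a general Hopf subalgebra $A$. First $i_V^A$ inserts a minimal vertex $s'$ carrying the normalized integral $\Lambda_A = h_A/\norm{h_A}$. Then $U_V$, via its reconnection formula \eqref{eq:unitary_vertex}, distributes the coproduct of $\Lambda_A$ over the edges emanating from $s'$. The decisive inputs are once more the idempotency and cocommutativity of $h_A$, now read as the Haar integral of the subalgebra $A$ through Proposition~\ref{prop:Haar_integral} applied to $A$, which guarantee that the sum $\sum_{(\Lambda_A)}$ splits and recombines into $\ket{\psi_{s\cup s'}^{A,B}(f_1,\dots,f_r)}$. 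Since neither $i_V^A$ nor $U_V$ acts on the loop functions $f_i\in B$, no condition on $B$ is required, which explains why this identity holds throughout the hierarchy while the face identity is stated only at its top.

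The step I expect to be the main obstacle is the combinatorial bookkeeping of the Sweedler sums once comultiplication is applied simultaneously to several shared edges and to the inserted face or vertex: one must verify that the primed/doubly primed splitting induced by $U_F$ (respectively $U_V$) dovetails exactly with the splitting already present in the definitions of $\ket{\psi_{p\cup p'}}$ and $\ket{\psi_{s\cup s'}}$. The Haar integral properties are precisely the algebraic facts that make these two splittings coincide, so the difficulty is bookkeeping rather than conceptual; a careful reindexing of coproduct components, in the same spirit as the Hopf singlet manipulations used in the proof of Theorem~\ref{thm:Hopf_gs}, should close the argument.
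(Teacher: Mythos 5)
Your proposal is correct and takes essentially the same route as the paper: there too the composite maps are unfolded into the little isometries followed by $U_F$ and $U_V$, the vertex case is handled by a forward Sweedler computation valid for general $A$ (with $B$ playing no role, exactly as you observe), and the face case is closed by the same Hopf singlet bookkeeping, which the paper carries out by reducing to a two-edge face and collapsing $\ket{\psi_{p\cup p'}}$ backwards via Lemma~\ref{lem:Hopf_singlet2} rather than pushing $I_F$ forwards. One small correction: the algebraic facts that actually drive those manipulations are the two-sided integral absorption property~\eqref{eq:integral} and its dual~\eqref{eq:dual_integral} (these underlie Lemmas~\ref{lem:Hopf_singlet1} and~\ref{lem:Hopf_singlet2}), not the idempotency and cocommutativity of $h_A$ you name as decisive, but since you explicitly defer to the singlet manipulations of Theorem~\ref{thm:Hopf_gs} the plan closes as stated.
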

\begin{proof}
    First let us prove the relation involving the face isometry. It is
    enough to consider a face which is bounded by two edges. In this
    case we first transform the right hand side of the equation in order
    to eliminate one of the virtual loops:
    \begin{align*}
        \ket{\psi_{p\cup p'}
          (f_1,f_2)}
        & =\abs{H}^\frac{5}{2}\!\!
        \sum_{(h_i)(h)}\!\!
        \phi(h_1'''h''')\,
        \phi\bigl(h_2'''\,
        S(h'')\bigr)
        \prod_{j=1}^2
        f_j\bigl(S(h_j'')\bigr)
        \ket{h_1'}\otimes
        \ket{h'}\otimes
        \ket{h_2'} \\
        & =\abs{H}^\frac{3}{2}
        \sum_{(h_i)}
        \phi(h_2'''h_1^{(3)})\,
        f_1\bigl(S(h_1^{(2)})\bigr)\,
        f_2\bigl(S(h_2'')\bigr)
        \ket{h_1^{(1)}}\otimes
        \ket{S(h_1^{(4)})}\otimes
        \ket{h_2'} \\
        & =\abs{H}^\frac{3}{2}
        \sum_{(h_i)}
        \phi(h_2'''h_1^{(4)})\,
        f_1\bigl(S(h_1^{(3)})\bigr)\,
        f_2\bigl(S(h_2'')\bigr)
        \ket{h_1^{(2)}}\otimes
        \ket{S(h_1^{(1)})}\otimes
        \ket{h_2'}
    \end{align*}
    where in the second line we employed Lemma~\ref{lem:Hopf_singlet2}.

    In diagrammatic notation the above amounts to
    \begin{equation*}
        \ket{\psi_{p\cup p'}
          (f_1,f_2)}
        =\abs{H}^\frac{3}{2}
        \sum_{(h_i)}\,
        \vcenter{\hbox{\includegraphics{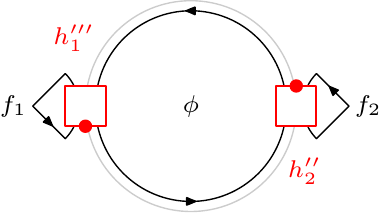}}}\quad
        \vcenter{\hbox{\includegraphics{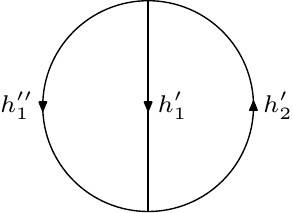}}}
    \end{equation*}
    which proves the claim about the face isometry.

    As for the vertex isometries we can in fact assume $A=H$ without
    loss of generality. By the same token we will never make use of $f_i
    \in B$ in the following argument, hence it suffices to assume $B
    =X$, too. We begin with the state on the original graph:
    \begin{equation*}
        \ket{\psi_s
          (f_1,\dots,f_r)}
        =\norm{h}^{-r}
        \sum_{(h_i)}
        \prod_{j=1}^r
        f_j\bigl(S(h_j'')\,
        h_{j+1}'''\bigr)
        \ket{h_1'}\otimes
        \dots
        \otimes
        \ket{h_r'}.
    \end{equation*}

    \bigskip

    Applying $I_V$ yields:
    \begin{align*}
        & \norm{h}^{r+1}\,
        I_V
        \ket{\psi_s
          (f_1,\dots,f_r)} \\
        & =\!\!\sum_{(h)(h_i)}
        \prod_{j=1}^r
        f_j\bigl(S(h_j'')\,
        h_{j+1}'''\bigr)
        \bigotimes_{k=1}^q
        \ket{h^{(k)}h_k'}\otimes
        \ket{h^{(q+1)}}\otimes
        \ket{h_{q+1}'}\otimes
        \dots\otimes
        \ket{h_r'}
        \displaybreak[0]\\
        & =\!\!\sum_{(h_i)(h)}\!
        f_r\bigl(S(h_r'')\,
        h_1'''\bigr)
        \prod_{j=1}^q
        f_j\bigl(S(h_j'')\,
        h_{j+1}'''\bigr)
        \bigotimes_{k=1}^q
        \ket{h^{(k)}h_k'}\otimes
        \ket{h^{(q+1)}}\otimes\cdots
    \end{align*}
    where in the last line we have abbreviated the remaining tensor
    factor
    \begin{equation*}
        \prod_{j=q+1}^{r-1}\!
        f_j\bigl(S(h_j'')\,
        h_{j+1}'''\bigr)
        \bigotimes_{k=q+1}^r\!
        \ket{h_k'}
    \end{equation*}
    by the trailing ellipsis. In order to simplify notation we will
    continue to do so in the following. Hence:
    \begin{align*}
        & \norm{h}^{r+1}\,
        I_V
        \ket{\psi_s
          (f_1,\dots,f_r)} \\
        & =\!\!\sum_{(h_i)(h)}\!
        f_r\bigl(S(h_r'')\,
        h_1'''\bigr)\,
        f_1\bigl(S(h_1'')\,
        h_2'''\bigr)
        \prod_{j=2}^q
        f_j\bigl(S(h_j'')\,
        h_{j+1}'''\bigr) \\
        & \relphantom{=}
        \ket{h^{(1)}h_1'}\otimes
        \dots\otimes
        \ket{h^{(q)}h_q'}\otimes
        \ket{h^{(q+1)}}\otimes\cdots
        \displaybreak[0]\\
        & =\!\!\sum_{(h_i)(h)}\!
        f_r\bigl(S(h_r'')\,
        S(h^{(1)})\,
        h_1'''\bigr)\,
        f_1\bigl(S(h_1'')\,
        h^{(2)}
        h_2'''\bigr)\,
        f_2\bigl(S(h_2'')\,
        h_3'''\bigr)
        \prod_{j=3}^q
        f_j\bigl(S(h_j'')\,
        h_{j+1}'''\bigr) \\
        & \relphantom{=}
        \ket{h_1'}\otimes
        \ket{h^{(3)}h_2'}\otimes
        \dots\otimes
        \ket{h^{(q+1)}h_q'}\otimes
        \ket{h^{(q+2)}}\otimes\cdots
        \displaybreak[0]\\
        & =\!\!\sum_{(h_i)(h)}\!
        f_r\bigl(S(h_r'')\,
        S(h^{(1)})\,
        h_1'''\bigr)\,
        f_1\bigl(S(h_1'')\,
        h_2'''\bigr)\,
        f_2\bigl(S(h_2'')\,
        h^{(2)}
        h_3'''\bigr)
        \prod_{j=3}^q
        f_j\bigl(S(h_j'')\,
        h_{j+1}'''\bigr) \\
        & \relphantom{=}
        \ket{h_1'}\otimes
        \ket{h_2'}\otimes
        \ket{h^{(3)}h_3'}\otimes
        \dots\otimes
        \ket{h^{(q)}h_q'}\otimes
        \ket{h^{(q+1)}}\otimes\cdots
        \displaybreak[0]\\
        & =\!\!\sum_{(h_i)(h)}\!
        f_r\bigl(S(h_r'')\,
        S(h^{(1)})\,
        h_1'''\bigr)\,
        f_q\bigl(S(h_q'')\,
        h^{(2)}
        h_{q+1}'''\bigr)
        \prod_{j=1}^{q-1}
        f_j\bigl(S(h_j'')\,
        h_{j+1}'''\bigr) \\
        & \relphantom{=}
        \ket{h_1'}\otimes
        \dots\otimes
        \ket{h_q'}\otimes
        \ket{h^{(3)}}\otimes
        \cdots
        \displaybreak[0]\\
        & =\!\!\sum_{(h_i)(h)}\!
        f_q\bigl(S(h_q'')\,
        h'''
        h_{q+1}'''\bigr)
        f_r\bigl(S(h_r'')\,
        S(h'')\,
        h_1'''\bigr)\,
        \prod_{j\neq q,r}\!
        f_j\bigl(S(h_j'')\,
        h_{j+1}'''\bigr) \\
        & \relphantom{=}
        \ket{h_1'}\otimes
        \dots\otimes
        \ket{h_q'}\otimes
        \ket{h'}\otimes
        \ket{h_{q+1}'}\otimes
        \dots\otimes
        \ket{h_r'} \\
        & =\norm{h}^{r+1}
        \ket{\psi_{s\cup s'}
          (f_1,\dots,f_r)}.
    \end{align*}
    Note that from the third line on we repeatedly applied
    Lemma~\ref{lem:Hopf_singlet1}.\qed
\end{proof}

Unfortunately, we do not know how \emph{arbitrary} hierarchy
states~$\ket{\psi_\Gamma^{A,B}}$
are affected by the associated face isometries~$I_F^{A,B}$ for a
general finite-dimensional Hopf $C^*$-algebra. However, for $H=\mathbb{C}G$
we can state the following and leave the proof to the reader.
\begin{lemma}
    Let $K\subset G$ a subgroup and $N\lhd G$ a normal subgroup.
    Furthermore let $f_i\in\mathbb{C}^{G/N}$. Then
    \begin{equation}
        I_F^{K,N}
        \ket{\psi_p^{K,N}
          (f_1,\dots,f_r)}
        =\sqrt{\frac{\abs{KN}}{\abs{G}}}
        \ket{\psi_{p\cup p'}^{K,N}
          (f_1,\dots,f_r)}.
    \end{equation}
\end{lemma}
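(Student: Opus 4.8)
The plan is to imitate the computation in the proof of Proposition~\ref{prop:isometries_states}, performed for the group algebra $H=\mathbb{C}G$ with $A=\mathbb{C}K$ and $B=\mathbb{C}^{G/N}$, and to isolate the single scalar by which the two normalisation conventions fail to agree. The first task is to make the element inserted by the little isometry explicit. Using the diagonal coproduct $\Delta(k)=k\otimes k$ for $k\in K$ and the fact that $\phi_N=\sum_{n\in N}\delta_n$ pairs with $k$ to give $\eval{\phi_N}{k}$, which vanishes unless $k\in N$, the element from Proposition~\ref{prop:little_isometries} becomes
\begin{equation*}
    \lambda_{K,N}
    =\sum_{(h_K)}\phi_N(h_K'')\,h_K'
    =\frac{1}{\abs{K}}\sum_{k\in K\cap N}k
    =\frac{\abs{K\cap N}}{\abs{K}}\,h_{K\cap N},
\end{equation*}
so that $1_{K,N}=\abs{K\cap N}^{-1/2}\sum_{k\in K\cap N}k$ is the normalised Haar integral of $\mathbb{C}(K\cap N)$. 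This already locates the origin of the scalar: the little isometry $i_F^{K,N}$ deposits $1_{K,N}\propto h_{K\cap N}$ on the newly created bubble edge, whereas the target state $\ket{\psi_{p\cup p'}^{K,N}}$ is, by construction, built with the Haar integral $h_K$ on that edge.

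As in Proposition~\ref{prop:isometries_states} it suffices to treat a face bounded by two edges, the general case being verbatim. I would start from
\begin{equation*}
    \ket{\psi_p^{K,N}(f_1,f_2)}
    =\frac{1}{\norm{h_K}^2\norm{\phi_N}}
     \sum_{(h_{K,i})}
     \phi_N(h_{K,2}'''h_{K,1}''')
     \prod_{j=1}^2 f_j\bigl(S(h_{K,j}'')\bigr)
     \ket{h_{K,1}'}\otimes\ket{h_{K,2}'},
\end{equation*}
apply $i_F^{K,N}$ to insert the bubble edge carrying $1_{K,N}$ together with its stabilised face, and then apply $U_F$, whose reconnection $U_F(x_1\otimes\dots\otimes x_q\otimes b)=\sum_{(x_i)}x_1''\otimes\dots\otimes x_q''\otimes x_q'\cdots x_1'\,b$ folds the bubble back onto the boundary by multiplication. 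On the group algebra all coproducts are diagonal, so the resulting expression becomes a sum over $k_1,k_2\in K$ and $m\in K\cap N$; since each $f_j\in\mathbb{C}^{G/N}$ is constant on $N$-cosets and $K\cap N\lhd K$ by normality of $N$, the $m$-dependence of the $f_j$ and of the surviving $\phi_N$-loop can be absorbed. The outcome should match, term by term, the expansion of $\ket{\psi_{p\cup p'}^{K,N}(f_1,f_2)}$ obtained (as in the proof of Proposition~\ref{prop:isometries_states}) by collapsing one of its two virtual loops with the Hopf-singlet identity, showing the two states are parallel.

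The remaining, genuinely bookkeeping-heavy, step is to evaluate the proportionality constant. Since $i_F^{K,N}$ and $U_F$ are isometries, the constant equals $\norm{\ket{\psi_p^{K,N}}}/\norm{\ket{\psi_{p\cup p'}^{K,N}}}$, which I would pin down using $\norm{h_K}^2=\phi(h_K)=\abs{K}^{-1}$ together with the dual computation $\norm{\phi_N}^2=\abs{N}/\abs{G}$ (from $\norm{\delta_g}_X^2=\abs{G}^{-1}$ and orthogonality of the $\delta_n$), and the group-theoretic product formula $\abs{KN}=\abs{K}\,\abs{N}/\abs{K\cap N}$. I expect all dependence on $K$, $N$ and the $f_j$ to cancel and to be left with exactly $\sqrt{\abs{KN}/\abs{G}}$; as a consistency check, the specialisation $K=G$, $N=\{e\}$ gives $\lambda_{G,\{e\}}=\abs{G}^{-1}e$, hence $1_{G,\{e\}}=1_H$ and $\sqrt{\abs{G}/\abs{G}}=1$, recovering Proposition~\ref{prop:isometries_states}. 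The main obstacle is precisely this normalisation accounting: one must verify that the extra projection onto $K\cap N$ forced by $1_{K,N}$ is, after the reconnection by $U_F$ and the evaluation of the two $\phi_N$-loops, compensated so that the image is a genuine scalar multiple of $\ket{\psi_{p\cup p'}^{K,N}}$ rather than a state supported on a strictly smaller subalgebra — and it is the normality of $N$ (which guarantees that the $f_j$ descend to $G/N$ and that $KN$ is a group) that makes this cancellation clean.
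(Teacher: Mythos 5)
First, a point of reference: the paper itself gives no proof of this lemma --- it is explicitly ``left to the reader'' --- so the only internal benchmark is the proof of Proposition~\ref{prop:isometries_states}, of which this lemma is the $(K,N)$-generalization (that proposition being the case $K=G$, $N=\{e\}$). Your route is correct but genuinely different from that benchmark: the paper works \emph{backwards} from the two-face state, collapsing the newly inserted virtual loop with the Hopf-singlet Lemma~\ref{lem:Hopf_singlet2} in basis-free Sweedler notation, whereas you work \emph{forwards} in the group basis, where every coproduct is diagonal and both sides become explicit sums over group elements. For a statement that only concerns $H=\mathbb{C}G$ this is the natural choice, and it makes the origin of the scalar completely transparent: the little isometry deposits $1_{K,N}\propto h_{K\cap N}$ on the bubble while the target state carries $h_K$ there.

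The bookkeeping you deferred does close up, and with your ingredients. Indeed $\lambda_{K,N}=\abs{K}^{-1}\sum_{m\in K\cap N}m$, so $1_{K,N}=\abs{K\cap N}^{-1/2}\sum_{m\in K\cap N}m$; in the group basis $U_F$ sends the bubble entry to $k_q\cdots k_1m$, so for each admissible boundary configuration ($k_r\cdots k_1\in N$, weighted by $\prod_jf_j(k_j^{-1})$) the new edge runs over a coset of $K\cap N$ with overall prefactor $\norm{h_K}^{-r}\norm{\phi_N}^{-1}\abs{K}^{-r}\abs{K\cap N}^{-1/2}$, while the expansion of $\ket{\psi_{p\cup p'}^{K,N}}$ runs over the same coset with prefactor $\norm{h_K}^{-(r+1)}\norm{\phi_N}^{-2}\abs{K}^{-(r+1)}$; normality of $N$ enters exactly here, via $kmk'\in N\iff kk'\in N$ for $m\in N$, which reduces the two flux constraints of the two faces to the single constraint of the original face. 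The ratio of prefactors is $\norm{h_K}\,\norm{\phi_N}\,\abs{K}/\sqrt{\abs{K\cap N}}=\sqrt{\abs{K}\abs{N}/(\abs{G}\abs{K\cap N})}=\sqrt{\abs{KN}/\abs{G}}$, using precisely the three numerical inputs you list. Two minor corrections, neither fatal: in the forward computation the $f_j$ never acquire any $m$-dependence (in the group basis $U_F$ leaves the boundary edge values untouched), so nothing has to be ``absorbed'' on that side --- normality is needed only for the flux constraints; and the norm-ratio argument pins down the constant only after parallelism and nonvanishing of the states are established, so comparing coefficients directly, as above, is the cleaner way to extract $\sqrt{\abs{KN}/\abs{G}}$ uniformly in the $f_j$.
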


The above lemma shows that unless one draws the quantum state from the
top of the hierarchy one needs to include an additional factor of
$\sqrt{\abs{KN}/\abs{G}}$ for each face to insure proper \emph{relative}
normalization of the Hopf tensor network states. In order to fix the
normalization \emph{absolutely} we may calculate the norm of a single
(preferably small) Hopf tensor network state on a given surface. Since
all other states on the same surface can be reached from this initial
one via isometries their norm will be determined automatically. It turns
out that the absolute normalization factor is entirely a property of the
surface the Hopf tensor network state is embedded in.
For example, absorbing the additional factor per face into the
definition of the Hopf tensor network states one has for arbitrary
graphs~$\Gamma$ on $S^2$
\begin{equation}
    \norm{\ket{\psi_\Gamma^{K,N}}}
    =\sqrt{\abs{KN/N}}
\end{equation}
while on $T^2$ one has
\begin{equation}
    \norm{\ket{\psi_\Gamma^{K,N}}}
    =\frac{1}{\sqrt{\abs{KN/N}}}\,
     \sqrt{\!\sum_{g\in KN/N}\!
           \abs{C_{KN/N}(g)}}.
\end{equation}
Here $C_G(g)$ denotes the centralizer of the element~$g$ in the
group~$G$. In order to keep the following discussion as general as
possible we will stick to relative normalization unless otherwise noted.

\subsection{Entanglement entropy}

Having defined the isometries~$I_F$ and $I_V$ we finally embark on an exact calculation of the
entanglement entropy for a simply connected region~$R$ on $S^2$. We focus
on the Hopf tensor network state~$\ket{\psi_\Gamma}$.

In the following we show how the inner product between two face
pieces of the Hopf tensor network state~$\ket{\psi_\Gamma}$ depends
on the boundary degrees of freedom.

\begin{proposition}
    \label{prop:overlap_face}
    Let $f_i,g_i\in X$ and $r=\abs{\partial R}$. Then
    \begin{equation}
        \braket{\psi_p(f_1,\dots,f_r)}
        {\psi_p(g_1,\dots,g_r)}
        =\abs{H}^{-r}
        \sum_{(h_i)}
        \phi(h_1'\cdots
        h_r')
        \prod_{j=1}^r
        (g_jf_j^*)(h_j'').
    \end{equation}
\end{proposition}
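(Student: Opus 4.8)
The plan is to expand both vectors explicitly, reduce the inner product to a double sum over two \emph{independent} families of Haar–integral coproducts that are glued edge by edge, and then collapse these two families into a single one using the special properties of the Haar integral from Proposition~\ref{prop:Haar_integral} together with the singlet identities in Lemmas~\ref{lem:Hopf_singlet1} and~\ref{lem:Hopf_singlet2}.

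First I would write out the face piece with its proper normalization. Since $\norm{h}^2=\phi(h)=\abs{H}^{-1}$ and likewise $\norm{\phi}^2=\abs{H}^{-1}$, each normalized face piece carries a prefactor $\abs{H}^{(r+1)/2}$, so that
\begin{equation*}
    \ket{\psi_p(g_1,\dots,g_r)}
    =\abs{H}^{\frac{r+1}{2}}
     \sum_{(h_i)}
     \phi(h_r'''\cdots h_1''')
     \prod_{j=1}^r
     g_j\bigl(S(h_j'')\bigr)
     \ket{h_1'}\otimes\dots\otimes\ket{h_r'},
\end{equation*}
with $h_i=h$ the Haar integral. Taking the adjoint I would use positivity of $\phi$ in the form $\overline{\phi(a)}=\phi(a^*)$ together with $(xy)^*=y^*x^*$ to turn the boundary amplitude into $\phi\bigl((h_1''')^*\cdots(h_r''')^*\bigr)$. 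For the function factors the key point is that in a finite-dimensional Hopf $C^*$-algebra $S^2=\id$ and $S$ commutes with $*$ (from $S(S(x^*)^*)=x$), whence $\overline{f_j(S(h_j''))}=f_j^*\bigl((h_j'')^*\bigr)$ by~\eqref{eq:dual_star}. Pairing bra and ket with $\braket{a}{b}=\phi(a^*b)$ on each edge then yields
\begin{equation*}
    \braket{\psi_p(f_1,\dots,f_r)}{\psi_p(g_1,\dots,g_r)}
    =\abs{H}^{r+1}\!\!
     \sum_{(h_i),(k_i)}\!\!
     \phi\bigl((h_1''')^*\cdots(h_r''')^*\bigr)\,
     \phi(k_r'''\cdots k_1''')
     \prod_{j=1}^r
     f_j^*\bigl((h_j'')^*\bigr)\,
     g_j\bigl(S(k_j'')\bigr)\,
     \phi\bigl((h_j')^* k_j'\bigr),
\end{equation*}
where $h_i,k_i=h$ are two copies glued on edge $j$ by the factor $\phi\bigl((h_j')^* k_j'\bigr)$.

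The heart of the argument is to eliminate the second copy. Diagrammatically the bra and the ket are two disks with interior loop function $\phi$, and the $r$ edge-wise inner products glue them into a sphere; collapsing the picture should fuse the two loops into one and, on each edge, merge $f_j^*$ and $g_j$ into the $X$-product $g_jf_j^*$ acting on a single coproduct component $h_j''$. Algebraically I would push the gluing factor $\phi\bigl((h_j')^*k_j'\bigr)$ through using the defining property~\eqref{eq:dual_integral} of the dual Haar integral, together with $h^*=h$, $S(h)=h$, $h\in\Cocom(H)$ and $h^2=h$ from Proposition~\ref{prop:Haar_integral}; Lemmas~\ref{lem:Hopf_singlet1} and~\ref{lem:Hopf_singlet2} are exactly what is needed to replace the pair $(h_j,k_j)$ by a single integral $h_j$ while carrying the function factors along. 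Repeated use of cocommutativity of $\phi$ then brings the surviving components into the single clockwise order $\phi(h_1'\cdots h_r')$, and the multiplication of $X$ (given by $\Delta^T$, so $(g_jf_j^*)(x)=\sum_{(x)}g_j(x')f_j^*(x'')$) assembles the two function factors into $(g_jf_j^*)(h_j'')$. Each collapse contributes a factor $\phi(h)=\abs{H}^{-1}$, and balancing these against the prefactor $\abs{H}^{r+1}$ produces the overall $\abs{H}^{-r}$.

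The main obstacle I anticipate is precisely this bookkeeping: keeping the Sweedler indices, the two antipodes, the two involutions and the clockwise ordering of the $\phi$-arguments straight while applying the singlet lemmas, and in particular checking that the $*$'s and $S$'s introduced by conjugating the bra cancel to leave exactly $g_jf_j^*$ rather than, say, $g_j^*f_j$ or a factor twisted by $S$. The diagrammatic reading as a gluing of two disks into $S^2$, with $\phi$ idempotent ($\phi^2=\phi$) accounting for the fusion of the two interior loops, provides a reliable check on each step and on the final power of $\abs{H}$.
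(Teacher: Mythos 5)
Your overall route is the paper's: expand both face pieces from~\eqref{eq:face2}, conjugate the bra using $h^*=h$, $\Delta(x^*)=\Delta(x)^*$ and~\eqref{eq:dual_star} to turn $\overline{f_j(S(h_j''))}$ into $f_j^*(h_j'')$, glue bra and ket edgewise by $\phi(h_j'\lambda_j')$, collapse the doubled family of Haar coproducts one edge at a time, and finish with $\phi^2=\phi$ to fuse the two loop functionals into $\phi(h_1'\cdots h_r')$. However, two concrete points would derail the write-up as proposed. First, the normalization bookkeeping is inconsistent: the proposition is stated for the \emph{unnormalized} face pieces of~\eqref{eq:face1}--\eqref{eq:face2} (compare the entropy theorem, where $\ket{\psi_R}=\norm{h}^{-r}\norm{\phi}^{-1}\ket{\psi_p}$ is introduced separately), and the factor $\abs{H}^{-r}$ on the right-hand side is produced \emph{entirely} by the $r$ edge collapses, each contributing $\phi(h)=\abs{H}^{-1}$. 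If you insert your prefactor $\abs{H}^{(r+1)/2}$ per state, the $r$ collapses leave you with $\abs{H}^{r+1}\cdot\abs{H}^{-r}=\abs{H}$ times the stated answer; there is nothing to ``balance'' against, so as written your computation cannot terminate at $\abs{H}^{-r}$.

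Second, you defer the central step — eliminating the second copy $(\lambda_j)$ while merging $f_j^*$ and $g_j$ — to Lemmas~\ref{lem:Hopf_singlet1} and~\ref{lem:Hopf_singlet2}, but neither has the required shape. The operative identity is Lemma~\ref{lem:Hopf_singlet4}: $\sum_{(h)}\phi(ah')\,f\bigl(b\,S(h'')\,c\bigr)\,g(dh''')=\abs{H}^{-1}\sum_{(a)}f(ba''c)\,g\bigl(d\,S(a')\bigr)$, applied once per edge with the glue factor $\phi(h_j'\lambda_j')$ in the role of $\phi(ah')$, with $f\mapsto g_j$, and with $g$ absorbing the product of all remaining $\lambda$-legs inside the big loop functional. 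Lemma~\ref{lem:Hopf_singlet1} is only the unnormalized singlet identity~\eqref{eq:Hopf_singlet} on which the proof of Lemma~\ref{lem:Hopf_singlet4} rests (it produces no $\abs{H}^{-1}$ and absorbs no $\phi$), and Lemma~\ref{lem:Hopf_singlet2} requires \emph{both} remaining functionals to be $\phi$ and outputs a ket, which does not match the per-edge structure here, where $\lambda_j'''$ sits inside a product with the other $\lambda_i'''$ and must be treated as the argument of an arbitrary functional. So as cited, your plan would force you to rederive precisely the content of Lemma~\ref{lem:Hopf_singlet4} before the collapse goes through; once you substitute that lemma (and drop the normalization prefactors), your argument coincides with the paper's proof, including the final merge $(g_jf_j^*)(h_j'')$ via the product $\Delta^T$ in $X$ and the closure $\phi(h_1'''\cdots h_r''')\,\phi\bigl(S(h_1'\cdots h_r')\bigr)\to\phi(h_1'\cdots h_r')$ using $\phi\circ S=\phi$ and $\phi^2=\phi$.
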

\begin{proof}
    Let $\lambda_i=h_i=h$ the Haar integral of $H$. Then
    from~\eqref{eq:face2} we get
    \begin{align*}
        \braket{\psi_p(f_1,\dots,f_r)}
        {\psi_p(g_1,\dots,g_r)}
        & =\!\sum_{(h_i)(\lambda_i)}\!\!
        \overline{\phi(h_r'''\cdots
          h_1''')}\,
        \phi(\lambda_r'''\cdots
        \lambda_1''') \\
        & \relphantom{=}
        \prod_{j=1}^r
        \overline{f_j\bigl(S(h_j'')\bigr)}\,
        g_j\bigl(S(\lambda_j'')\bigr)\,
        \braket{h_j'}{\lambda_j'}
        \displaybreak[0]\\
        & =\!\sum_{(h_i)(\lambda_i)}\!\!
        \phi(h_1'''\cdots
        h_r''')\,
        \phi(\lambda_r'''\cdots
        \lambda_1''') \\
        & \relphantom{=}
        \prod_{j=1}^r
        f_j^*(h_j'')\,
        g_j\bigl(S(\lambda_j'')\bigr)\,
        \phi(h_j'\lambda_j')
    \end{align*}
    where we used~\eqref{eq:inner_product}, the properties of $h$ and
    the involution~$*$ as well as~\eqref{eq:dual_star} for the second
    line. Employing Lemma~\ref{lem:Hopf_singlet4} we can simplify this
    to
    \begin{align*}
        \braket{\psi_p(f_1,\dots,f_r)}
        {\psi_p(g_1,\dots,g_r)}
        & =\abs{H}^{-1}\!
        \sum_{(h_i)(\lambda_i)}\!\!
        \phi(h_1'''\cdots
        h_r''')\,
        \phi\bigl(\lambda_r'''\cdots
        \lambda_2'''\,
        S(h_1')\bigr) \\
        & \relphantom{=}
        (g_1f_1^*)(h_1'')
        \prod_{j=2}^r
        f_j^*(h_j'')\,
        g_j\bigl(S(\lambda_j'')\bigr)\,
        \phi(h_j'\lambda_j')
        \displaybreak[0]\\
        & =\abs{H}^{-r}
        \sum_{(h_i)}
        \phi(h_1'''\cdots
        h_r''')\,
        \phi\bigl(S(h_1'\cdots
        h_r')\bigr) \\
        & \relphantom{=}
        \prod_{j=1}^r
        (g_jf_j^*)(h_j'').
    \end{align*}
    Finally, the claim follows from $\phi^2=\phi$.\qed
\end{proof}

Analogously, we are interested in the inner product between vertex
pieces of the same Hopf tensor network state.

\begin{proposition}
    \label{prop:overlap_vertex}
    Let $f_i,g_i\in H^*$ and $r=\abs{\partial R}$. Then
    \begin{equation}
        \braket{\psi_s(f_1,\dots,f_r)}
        {\psi_s(g_1,\dots,g_r)}
        =\abs{H}^{-r}
        \sum_{(h_i)}
        \prod_{j=1}^r
        f_j^*\bigl(h_j^{(3)}
        S(h_{j+1}^{(4)})\bigr)\,
        g_j\bigl(h_j^{(2)}
        S(h_{j+1}^{(1)})\bigr).
    \end{equation}
\end{proposition}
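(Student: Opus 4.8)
The plan is to mirror the proof of Proposition~\ref{prop:overlap_face}, but starting from the vertex piece~\eqref{eq:vertex2} rather than the face piece~\eqref{eq:face2}. First I would insert the explicit expansion~\eqref{eq:vertex2} for both the bra and the ket, using independent Haar integrals $h_i=h$ for $\ket{\psi_s(f_1,\dots,f_r)}$ and $\lambda_i=h$ for $\ket{\psi_s(g_1,\dots,g_r)}$, and evaluate each overlap $\braket{h_j'}{\lambda_j'}$ with the inner product~\eqref{eq:inner_product}, so that $\braket{h_j'}{\lambda_j'}=\phi((h_j')^*\lambda_j')$. This produces a double sum over $(h_i)$ and $(\lambda_i)$ carrying $r$ conjugated $f$-factors, $r$ unconjugated $g$-factors, and $r$ inner-product factors.

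Next I would convert the complex conjugates into $f_j^*$ using~\eqref{eq:dual_star}. Writing $\overline{f(w)}=f_j^*(S(w^*))$ and using that for a Hopf $C^*$-algebra $S^2=\id$ and $S$ commutes with $*$ (both from Proposition~\ref{prop:Haar_integral}), together with $S(ab)=S(b)S(a)$, gives $\overline{f_j(S(h_j'')\,h_{j+1}''')}=f_j^*\bigl((h_j'')^*\,S((h_{j+1}''')^*)\bigr)$. Since $h$ is self-adjoint and cocommutative, its coproduct satisfies $\sum (h')^*\otimes(h'')^*=\sum h'\otimes h''$, so the stars on the individual legs may be dropped after relabeling, yielding $f_j^*\bigl(h_j''\,S(h_{j+1}''')\bigr)$; the same self-adjointness turns the inner-product factor into $\phi(h_j'\lambda_j')$. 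At this point the $f$-arguments already have the desired shape $h_j^{(3)}S(h_{j+1}^{(4)})$.

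The heart of the argument is then to collapse the double sum into a single sum by eliminating the $\lambda_i$, exactly as Lemma~\ref{lem:Hopf_singlet4} was used in Proposition~\ref{prop:overlap_face}. Each factor $\phi(h_j'\lambda_j')$ couples the first leg of $h_j$ to the first leg of $\lambda_j$; applying the relevant singlet identity merges $\lambda_j$ into $h_j$, contributes a factor $\abs{H}^{-1}$, and rewrites the two $\lambda_j$-legs feeding $g_{j-1}$ and $g_j$ as $S$-applied legs of $h_j$, so that $S(\lambda_j'')\mapsto h_j^{(2)}$ and $\lambda_{j+1}'''\mapsto S(h_{j+1}^{(1)})$. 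Performing this $r$ times gives the overall prefactor $\abs{H}^{-r}$ and produces the four-leg splitting $h_j\mapsto(h_j^{(1)},h_j^{(2)},h_j^{(3)},h_j^{(4)})$, in which legs $(2),(3)$ feed the $j$-th factors and legs $(1),(4)$ feed the $(j-1)$-th factors.

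The main obstacle I anticipate is the cyclic index bookkeeping: because each $f_j$ and $g_j$ couples the two neighbouring sites $j$ and $j+1$, the merges are \emph{not} independent, and one must apply Lemma~\ref{lem:Hopf_singlet1} (and possibly Lemma~\ref{lem:Hopf_singlet2}) to shuttle coproduct legs between adjacent sites in the correct order before each singlet merge, taking care that the cyclic identification $r+1\equiv 1$ is respected throughout. Once the legs are lined up, a final relabeling yields precisely the stated expression.
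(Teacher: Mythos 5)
Your proposal is correct and follows essentially the same route as the paper's proof: expand both states via~\eqref{eq:vertex2}, rewrite the conjugated factors as $f_j^*\bigl(h_j''\,S(h_{j+1}''')\bigr)$ and the overlaps as $\phi(h_j'\lambda_j')$ using the Hopf $*$-structure and the properties of the Haar integral, then eliminate the $\lambda_i$ by an $r$-fold application of Lemma~\ref{lem:Hopf_singlet4}, each merge contributing $\abs{H}^{-1}$ and yielding exactly the four-leg splitting you describe. The only (harmless) deviation is your anticipated need for Lemmas~\ref{lem:Hopf_singlet1} and~\ref{lem:Hopf_singlet2} to shuttle legs before each merge: this does not actually arise, since Lemma~\ref{lem:Hopf_singlet4} holds for arbitrary $b,c,d\in H$ and can therefore be applied to one $\lambda_j$ at a time by linearity, which is precisely how the paper obtains its intermediate $r$-fold identity.
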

\begin{proof}
    Let again $\lambda_i=h_i=h$ the Haar integral of $H$.
    From~\eqref{eq:vertex2} one has
    \begin{equation*}
        \begin{split}
            \braket{\psi_s(f_1,\dots,f_r)}
            {\psi_s(g_1,\dots,g_r)}
            & =\sum_{(h_i)}
            \sum_{(\lambda_i)}
            \prod_{j=1}^r
            \overline{f_j\bigl(S(h_j'')\,
              h_{j+1}'''\bigr)}\,
            g_j\bigl(S(\lambda_j'')\,
            \lambda_{j+1}'''\bigr)\,
            \braket{h_j'}{\lambda_j'} \\
            & =\sum_{(h_i)}
            \sum_{(\lambda_i)}
            \prod_{j=1}^r
            f_j^*\bigl(h_j''\,
            S(h_{j+1}''')\bigr)\,
            g_j\bigl(S(\lambda_j'')\,
            \lambda_{j+1}'''\bigr)\,
            \phi(h_j'\lambda_j').
        \end{split}
    \end{equation*}
    Additionally, Lemma~\ref{lem:Hopf_singlet4} yields
    \begin{equation*}
        \sum_{(\lambda_i)}
        \prod_{j=1}^r
        \phi(a_j\lambda_j')\,
        g_j\bigl(S(\lambda_j'')\,
                 \lambda_{j+1}'''\bigr)
        =\abs{H}^{-r}
         \sum_{(a_i)}
         \prod_{j=1}^r
         g_j\bigl(a_j''\,
                  S(a_{j+1}')\bigr)
    \end{equation*}
    where $a_j\in H$ is arbitrary. Armed with this identity it is easy
    to verify the claim.\qed
\end{proof}

We now state the main result of this section.

\begin{theorem}
    Let $R\subset M$ a simply connected region and let $\Gamma_R
    \subset\Gamma$ the corresponding subgraph. Then for any $\alpha\geq0$
    the state~$\ket{\psi_\Gamma}$ has the Rényi entanglement entropies
    \begin{equation}
        S_\alpha(\rho_R)
        =\abs{\partial R}
         \log\abs{H}
         -\log\abs{H}.
    \end{equation}
    In particular, the topological entanglement entropy reads
    \begin{equation}
        \gamma
        =\log\abs{H}.
    \end{equation}
\end{theorem}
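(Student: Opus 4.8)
The plan is to use the fact that $\ket{\psi_\Gamma}$ is a renormalisation-group fixed point and push all entanglement onto the cut. First I would apply the little isometries and reconnection unitaries of Propositions~\ref{prop:little_isometries} and~\ref{lem:unitaries} (assembled into $I_F$, $I_V$) backwards, i.e. the adjoint maps, to remove every interior edge and vertex of both $R$ and $\bar R$. Because these maps are local and inner-product preserving, and because they send ground states to ground states, this distillation leaves the entanglement across $\partial R$ unchanged while reducing the state to one supported only on the $r=\abs{\partial R}$ physical edges lining the cut. Since $R$ is simply connected and $M=S^2$, the complement $\bar R$ is again a disk, so after distillation both $\ket{\psi_R}$ and $\ket{\psi_{\bar R}}$ collapse to single face pieces of the form~\eqref{eq:face1}, each carrying the same $r$ boundary faces $F_{\partial R}$, and there is exactly \emph{one} boundary loop.

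Next I would invoke the partition of Proposition~\ref{prop:partitions} to write the Schmidt-type decomposition
\begin{equation*}
    \ket{\psi_\Gamma}
    =\!\!\sum_{\substack{(\phi_q)\\ q\in F_{\partial R}}}\!\!
     \ket{\psi_R(\{\phi_q'\})}\otimes
     \ket{\psi_{\bar R}(\{\phi_q''\})},
\end{equation*}
splitting the dual Haar integral $\phi$ on each boundary face via its coproduct. Tracing out $\bar R$ then expresses $\rho_R$ as a sum of $\ket{\psi_R(\{f_q\})}\bra{\psi_R(\{g_q\})}$ weighted by the exterior overlaps $\braket{\psi_{\bar R}(\cdots)}{\psi_{\bar R}(\cdots)}$. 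Both these overlaps and the Gram matrix of the interior pieces are given explicitly by Proposition~\ref{prop:overlap_face}, whose right-hand side contains the single global factor $\phi(h_1'\cdots h_r')$ coupling all $r$ boundary faces around the loop together with a product of local pairings $(g_jf_j^*)(h_j'')$.

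The heart of the argument is reading off the spectrum of $\rho_R$. Using $\phi^2=\phi$ and $h^2=h$ from Proposition~\ref{prop:Haar_integral}, the overlap kernel of Proposition~\ref{prop:overlap_face} is idempotent under composition, so I expect $\rho_R$ to be a scalar multiple of an orthogonal projector $\Pi$; equivalently, the entanglement spectrum is completely flat, which is precisely why the Rényi index $\alpha$ will drop out. It then remains to compute $\rank\rho_R=\dim\Pi$. The boundary data $\{f_q\}_{q=1}^r$ naturally live in $X^{\otimes r}$, of dimension $\abs{H}^r$ since $\abs{X}=\abs{H}$, but the single loop factor $\phi(h_1'\cdots h_r')$ imposes exactly one global constraint (the product around the boundary is pinned to the counit), removing precisely one factor of $\abs{H}$. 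Hence $\rank\rho_R=\abs{H}^{\,r-1}$ and, after normalising to $\tr\rho_R=1$, $\rho_R=\abs{H}^{-(r-1)}\,\Pi$.

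Finally, for a maximally mixed state on a space of dimension $d=\abs{H}^{\,r-1}$ one has $\tr(\rho_R^\alpha)=d^{\,1-\alpha}$, so $S_\alpha(\rho_R)=(1-\alpha)^{-1}\log\tr(\rho_R^\alpha)=\log d=(r-1)\log\abs{H}$ for every $\alpha\geq0$, which is exactly $\abs{\partial R}\log\abs{H}-\log\abs{H}$; identifying the universal additive constant gives $\gamma=\log\abs{H}$. The hard part will be the rank count: one must show rigorously that the factor $\phi(h_1'\cdots h_r')$ kills exactly one factor of $\abs{H}$ — no more and no less — so that the kernel of $\{f_q\}\mapsto\ket{\psi_R(\{f_q\})}$ has the correct dimension. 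This is where the distinction between the area-law term $\abs{\partial R}\log\abs{H}$ and the topological correction $\gamma$ is decided, and it is exactly the place where the faithfulness and trace property of $\phi$ from Proposition~\ref{prop:Haar_integral} must be used carefully.
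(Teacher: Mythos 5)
Your overall strategy --- distill with the isometries, split via Proposition~\ref{prop:partitions}, read off a flat spectrum and count the rank --- is the same as the paper's, but the geometric core of your reduction is wrong, and this breaks the computation that follows. On $S^2$, with the paper's rules for partitions (edges are never boundary objects; a boundary of length $r$ consists of $r$ boundary \emph{faces} alternating with vertices), the minimal graph cannot be ``both pieces are face pieces over the same $r$ edges'': a state supported on only $r$ edges cannot decompose into two face pieces of $r$ edges each, and a single $r$-edge loop on the sphere has exactly one face on each side, hence a boundary of length $1$, not $r$. The actual minimal configuration, which is what the isometries produce, has $2r$ edges and is asymmetric: the interior reduces to a face piece $\ket{\psi_p(f_1,\dots,f_r)}$ of the form~\eqref{eq:face1}, while the exterior reduces to a \emph{vertex} piece $\ket{\psi_s(f_r,\dots,f_1)}$ of the form~\eqref{eq:vertex1} ($r$ edges radiating from one vertex), with the $r$ boundary faces wedged in between. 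Consequently the exterior overlaps entering $\tr_{\bar R}$ are vertex-piece overlaps, given by Proposition~\ref{prop:overlap_vertex}, not by Proposition~\ref{prop:overlap_face}; your proposal never invokes the vertex formula at all, so the kernel you would compose to obtain $\rho_R$ is not the one that actually occurs.

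Second, the two statements you flag as expected are exactly where the work lies, and your heuristics for them do not hold in general. The paper obtains flatness and the rank by computing $\widetilde{\rho}_R$ in closed form (via the Hopf-singlet Lemmas~\ref{lem:Hopf_singlet6} and~\ref{lem:Hopf_singlet7}), finding $\tr(\widetilde{\rho}_R)=\abs{H}^{1-r}$ and $\rho_R^2=\abs{H}^{1-r}\rho_R$ after normalization; the rank $\abs{H}^{r-1}$ then follows from these two facts together, with no separate kernel count needed. Your proposed count --- the loop factor $\phi(h_1'\cdots h_r')$ pins the boundary product to the counit and kills exactly one factor of $\abs{H}$ --- is the correct mechanism only for $H=\mathbb{C}G$. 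For $H=\mathbb{C}^G$ (already within the scope of the theorem) \emph{no} interior state vanishes; instead the states coincide in bunches of $\abs{H}$, and for nontrivial $H$ such as $H_8$ both mechanisms mix, as the paper explains right after this theorem. So the rank cannot be established by a constraint count on boundary configurations; it has to come out of the algebraic computation of $\rho_R$ itself, which is precisely the part your proposal leaves open.
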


\begin{proof}
    Without loss of generality we restrict to $M\simeq S^2$. Acting with
    the isometries~$I_F$ and $I_V$ on $\ket{\psi_\Gamma}$ we may wipe
    out the bulk of both $R$ and $\bar{R}$ completely and reduce
    $\Gamma$ to one of the graphs shown in
    Figure~\ref{fig:boundary_graphs} where we indicated several types of
    boundaries.

    \begin{figure}
        \centering
        \includegraphics{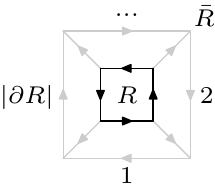}\hspace{2em}
        \includegraphics{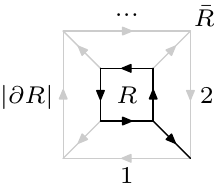}\hspace{2em}\dots\hspace{2em}
        \includegraphics{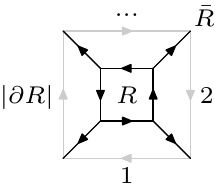}
        \caption{Boundary graphs for a simply connected region $R\subset
            S^2$. Here black edges constitute the corresponding
            subgraph~$\Gamma_R$ while grey edges belong to
            $\Gamma_{\bar{R}}$. These graphs can be reduced further
            (without touching the boundary) to minimal graphs.}
        \label{fig:boundary_graphs}
    \end{figure}

    We further assume the first type of boundary in the following. In
    this case we can continue the isometric reduction to the
    graph~$\Gamma_0$ given by
    \begin{equation*}
        \vcenter{\hbox{\includegraphics{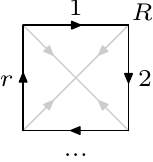}}}
    \end{equation*}
    and for the resulting Hopf tensor network state on this graph
    Proposition~\ref{prop:partitions} yields the natural splitting
    \begin{equation*}
        \label{eq:Schmidt}
        \ket{\psi_{\Gamma_0}}
        =\sum_{(\phi_i)}
         \ket{\psi_R(\phi_1',
                     \dots,
                     \phi_r')}\otimes
         \ket{\psi_{\bar{R}}(\phi_1'',
                             \dots,
                             \phi_r'')}
    \end{equation*}
    where the states
    \begin{align*}
        \ket{\psi_R
             (f_1,\dots,f_r)}
        & =\norm{h}^{-r}\,
           \norm{\phi}^{-1}
           \ket{\psi_p
                (f_1,\dots,f_r)} \\
           \ket{\psi_{\bar{R}}
                (f_1,\dots,f_r)}
        & =\norm{h}^{-r}
           \ket{\psi_s
                (f_r,\dots,f_1)}
    \end{align*}
    are identical to~\eqref{eq:face1} and~\eqref{eq:vertex1}
    respectively up to normalization. Note that $r=\abs{\partial R}$ denotes the length of
    the boundary.

    We determine the reduced density operator $\rho_R
    =N^{-1}\widetilde{\rho}_R$ of the region~$R$ as follows. Let $\phi_i
    =\varphi_i=\phi$ the Haar integral of $X$ and $h_i=\lambda_i
    =\kappa_i=h$ the Haar integral of $H$. Ignoring the normalization
    factor~$N$ for a moment one has
    \begin{align*}
        \widetilde{\rho}_R
        & =\tr_{\bar{R}}\,
           (\ket{\psi_{\Gamma_0}}
           \bra{\psi_{\Gamma_0}}) \\
        & =\!\!\sum_{(\phi_i)(\varphi_i)}\!
           \ket{\psi_R
                (\phi_1',
                 \dots,
                 \phi_r')}
           \bra{\psi_R
                (\varphi_1',
                 \dots,
                 \varphi_r')}\cdot
           \braket{\psi_{\bar{R}}
                   (\varphi_1'',
                    \dots,
                    \varphi_r'')}
                  {\psi_{\bar{R}}
                   (\phi_1'',
                    \dots,
                    \phi_r'')}
           \displaybreak[0]\\
        & =\abs{H}^{r+1}\!\!
           \sum_{(\phi_i)(\varphi_i)}\!
           \ket{\psi_p
                (\phi_1',
                \dots,
                \phi_r')}
           \bra{\psi_p
                (\varphi_1',
                 \dots,
                 \varphi_r')} \\
        & \relphantom{=}
           \sum_{(h_i)}
           \prod_{j=1}^r
           (\varphi_{r+1-j}'')^*
           \bigl(h_j^{(3)}
                 S(h_{j+1}^{(4)})\bigr)\,
           \phi_{r+1-j}''
           \bigl(h_j^{(2)}
                 S(h_{j+1}^{(1)})\bigr)
           \displaybreak[0]\\
        & =\abs{H}^{r+1}\!\!
           \sum_{(\phi_i)(\varphi_i)}
           \biggl(\sum_{(\lambda_i)}
                  \phi(\lambda_r'''\cdots
                       \lambda_1''')
                  \prod_{k=1}^r
                  \phi_k'\bigl(S(\lambda_k'')\bigr)
                  \ket{\lambda_1',
                       \dots,
                       \lambda_r'}\biggr) \\
        & \relphantom{=}
           \biggl(\sum_{(\kappa_i)}
                  \overline{\phi(\kappa_r'''\cdots
                                 \kappa_1''')}
                  \prod_{l=1}^r
                  \overline{\varphi_l'\bigl(S(\kappa_l'')\bigr)}\,
                  \bra{\kappa_1',
                       \dots,
                       \kappa_r'}\biggr) \\
        & \relphantom{=}
           \sum_{(h_i)}
           \prod_{j=1}^r
           (\varphi_j'')^*\bigl(h_j^{(3)}
                                S(h_{j-1}^{(4)})\bigr)\,
           \phi_j''\bigl(h_j^{(2)}
                         S(h_{j-1}^{(1)})\bigr)
           \displaybreak[0]\\
        & =\abs{H}^{r+1}
           \sum_{(h_i)}
           \sum_{(\kappa_i)}
           \sum_{(\lambda_i)}
           \phi(\lambda_r'''\cdots
                \lambda_1''')\,
           \phi(\kappa_1'''\cdots
                \kappa_r''')
           \prod_{j=1}^r
           \phi\bigl(h_j^{(2)}
                     S(h_{j-1}^{(1)})\,
                     S(\lambda_j'')\bigr) \\
        & \relphantom{=}
           \phi\bigl(h_j^{(3)}
                     S(h_{j-1}^{(4)})\,
                     \kappa_j''\bigr)
           \ket{\lambda_1'}\otimes
           \dots\otimes
           \ket{\lambda_r'}\otimes
           \phi(\kappa_1'?)\otimes
           \dots\otimes
           \phi(\kappa_r'?)
    \end{align*}
    where we used Proposition~\ref{prop:overlap_vertex} in the third line.
    
    Furthermore, using Lemma~\ref{lem:Hopf_singlet6}
    we may derive
    \begin{equation*}
        \sum_{(\kappa_i)}
        \phi(\kappa_1'''\cdots
             \kappa_r''')
        \prod_{j=1}^r
        \phi(b_j
             \kappa_j'')\,
        \bigotimes_{k=1}^r
        \phi(\kappa_k'?)
        =\abs{H}^{-r}
         \sum_{(b_i)}
         \phi(b_r''\cdots
              b_1'')\,
         \bigotimes_{j=1}^r
         \phi\bigl(S(b_j')\,
                   ?\bigr).
    \end{equation*}
    With this under our belt we arrive at
    \begin{equation*}
        \begin{split}
            \widetilde{\rho}_R
            & =\abs{H}\!
               \sum_{(h_i)(\lambda_i)}\!
               \phi(\lambda_r'''\cdots
                    \lambda_1''')\,
               \phi\bigl(h_r^{(4)}
                         S(h_{r-1}^{(5)})\cdots
                         h_2^{(4)}
                         S(h_1^{(5)})\,
                         h_1^{(4)}
                         S(h_r^{(5)})\bigr) \\
            & \relphantom{=}
               \prod_{j=1}^r
               \phi\bigl(h_j^{(2)}\,
                         S(h_{j-1}^{(1)})\,
                         S(\lambda_j'')\bigr)
               \ket{\lambda_1',\dots,
                    \lambda_r'}
               \bigotimes_{k=1}^r
               \phi\bigl[S\bigl(h_k^{(3)}
                                S(h_{k-1}^{(6)})\bigr)\,
                         ?\bigr] \\
            & =\abs{H}\!
               \sum_{(h_i)(\lambda_i)}\!
               \phi(\lambda_r'''\cdots
                    \lambda_1''')
               \prod_{j=1}^r
               \phi\bigl(h_j^{(2)}
                         S(h_{j-1}^{(1)})\,
                         S(\lambda_j'')\bigr) \\
            & \relphantom{=}
               \ket{\lambda_1',
                    \dots,
                    \lambda_r'}
               \bigotimes_{k=1}^r
               \phi\bigl(h_{k-1}^{(4)}\,
                         S(h_k^{(3)})\,
                         ?\bigr).
        \end{split}
    \end{equation*}
    Again some more preparation, namely Lemma~\ref{lem:Hopf_singlet7}
    allows us to find
    \begin{equation*}
        \sum_{(\lambda_i)}
        \phi(\lambda_r'''\cdots
             \lambda_1''')
        \prod_{j=1}^r
        \phi(a_j
             \lambda_j'')\,
        \bigotimes_{k=1}^r
        \ket{\lambda_k'}
        =\abs{H}^{-r}
         \sum_{(a_j)}
         \phi(a_1''\cdots
              a_r'')\,
         \bigotimes_{j=1}^r
         \ket{S(a_j')}
    \end{equation*}
    and subsequently we have for the reduced density operator:
    \begin{equation*}
        \widetilde{\rho}_R
        =\abs{H}^{1-r}
         \sum_{(h_i)}
         \bigotimes_{j=1}^r
         \ket{h_j^{(2)}\,
              S(h_{j-1}^{(1)})}\,
         \bigotimes_{k=1}^r
         \phi\bigl(h_{k-1}^{(4)}\,
                   S(h_k^{(3)})\,
                   ?\bigr).
    \end{equation*}
    Finally, it can be shown that the reduced density operator of the
    region~$R$ takes the simple form
    \begin{equation}
        \begin{split}
            \widetilde{\rho}_R
            & =\abs{H}^{1-r}
               \sum_{(h_i)}
               \ket{h_1^{(1)}}\otimes
               \dots\otimes
               \ket{h_{r-1}^{(1)}}\otimes
               \ket{S(h_{r-1}^{(4)}\cdots
                    h_1^{(4)})} \\
            & \relphantom{=}
               {}\otimes
               \phi\bigl(S(h_1^{(2)})\,
                         ?\bigr)\otimes
               \dots\otimes
               \phi\bigl(S(h_{r-1}^{(2)})\,
                         ?\bigr)\otimes
               \phi(h_{r-1}^{(3)}\cdots
                    h_1^{(3)}
                    ?).
        \end{split}
    \end{equation}
    up to normalization. It is easy to see that
    \begin{equation*}
        \begin{split}
            \tr(\widetilde{\rho}_R)
            & =\abs{H}^{1-r}
               \sum_{(h_i)}
               \prod_{j=1}^{r-1}
               \phi\bigl(S(h_j^{(2)})\,
                         h_j^{(1)}\bigr)\,
               \phi\bigl(h_{r-1}^{(3)}\cdots
                         h_1^{(3)}
                         S(h_{r-1}^{(4)}\cdots
                         h_1^{(4)})\bigr) \\
            & =\abs{H}^{1-r}
        \end{split}
    \end{equation*}
    and hence we can fix the normalization by setting $N=\abs{H}^{1-r}$.

    Furthermore it is not difficult to show that $\rho_R$ is
    proportional to a projector. Indeed, consider
    \begin{equation*}
        \begin{split}
            \rho_R^2
            & =\!\!\sum_{(h_i)(\lambda_i)}
               \prod_{j=1}^{r-1}
               \phi\bigl(S(h_j^{(2)})\,
                         \lambda_j^{(1)}\bigr)\,
               \phi\bigl(h_{r-1}^{(3)}\cdots
                         h_1^{(3)}
                         S(\lambda_{r-1}^{(4)}\cdots
                           \lambda_1^{(4)})\bigr) \\
            & \relphantom{=}
               \ket{h_1^{(1)}}\otimes
               \dots\otimes
               \ket{h_{r-1}^{(1)}}\otimes
               \ket{S(h_{r-1}^{(4)}\cdots
                      h_1^{(4)})} \\
            & \relphantom{=}
               {}\otimes
               \phi\bigl(S(\lambda_1^{(2)})\,
                         ?\bigr)\otimes
               \dots\otimes
               \phi\bigl(S(\lambda_{r-1}^{(2)})\,
                         ?\bigr)\otimes
               \phi(\lambda_{r-1}^{(3)}\cdots
                    \lambda_1^{(3)}
                    ?) \\
            & =\abs{H}^{1-r}
               \rho_R.
        \end{split}
    \end{equation*}
    This means that the spectrum of $\rho_R$ is flat.

    Finally one can prove that the rank of the reduced density
    operator~$\rho_R$ is given by $\abs{H}^{r-1}$ hence each non-zero
    eigenvalue equals $\abs{H}^{1-r}$. Then the Rényi entropies read
    \begin{equation}
        S_\alpha(\rho_R)
        =\frac{1}{1-\alpha}
         \log\bigl(\tr(\rho_R^\alpha)\bigr)
        =\abs{\partial R}
         \log\abs{H}
         -\log\abs{H}.
    \end{equation}
    independently of $\alpha$.\qed
\end{proof}

\begin{remark}
    We would like to stress that any reduced density operator~$\rho_R$
    obtained from the states~$\ket{\psi_{K,N}}$ is proportional to a
    projector as has been noted before for the case of group
    algebras at the top of the hierarchy~\cite{Flammia:2009p2617}.
    In other words, the
    entanglement spectrum is flat and all Rényi entropies are equal.
\end{remark}

\bigskip
\noindent
    While this proof yields exactly the entanglement entropy one would
    expect it does leave open one question: what is the mechanism of
    this universal correction to the area law in these Hopf tensor network states?

    It turns out that one can learn the answer by analyzing the boundary
    configurations $(f_1,\dots,f_r)\subset X^r$ of either
    $\ket{\psi_R(f_1,\dots,f_r)}$ or
    $\ket{\psi_{\bar{R}}(f_1,\dots,f_r)}$ in a particular canonical
    basis which is natural to the problem. Namely, by a generalized
    Fourier construction~\cite{Buerschaper:2009p2627,Buerschaper:2010p3003} one can extend the grouplike
    elements~$\mathcal{G}(H)$
    of $H$ to a linear basis the dual of which we denote
    $\mathcal{B}\subset X$. This dual basis is essentially unique and
    therefore serves as our canonical choice. For example, if $H$ is a
    group algebra one trivially has $\mathcal{B}=\{\delta_g \mid g\in
    G\}$.

    Then the dependence of, say, the interior
    states~$\ket{\psi_R(f_1,\dots,f_r)}$ on the boundary configuration
    $(f_1,\dots,f_r)\subset\mathcal{B}^r$ is radically different if $H$
    is a group algebra, the dual of a group algebra or a non-trivial
    Hopf $C^*$-algebra like $H_8$. Indeed, one may show that in the
    first case all non-zero interior states are orthogonal and share the
    same norm. Furthermore an interior state with boundary
    configuration~$(\delta_{g_1},\dots,\delta_{g_r})$ vanishes precisely
    if $g_1\cdots g_r\neq e$. Hence the topological constraint is
    realized by simple group multiplication around the boundary. In the
    case when $H$ equals the dual of a group algebra, \emph{any}
    interior state~$\ket{\psi_R(f_1,\dots,f_r)}$ is different from zero,
    however, altogether they cease to be orthogonal to each other.
    Instead, they come in bunches of exactly $\abs{H}$ identical states
    each. This obviously defines an equivalence relation among different
    boundary configurations. It is only these equivalence classes which
    are orthogonal and need to be summed over if we want to
    turn~\eqref{eq:Schmidt} into a proper Schmidt decomposition. This is
    the second mechanism how the topological constraint is implemented
    in a quantum double model ground state. Finally, if $H$ is a
    nontrivial Hopf $C^*$-algebra like $H_8$ one will observe a
    combination of the two mechanisms just described.

\section{Discussion}
\label{sec:discussion}

First of all, we have developed a tensor network language, based on
the formalism of finite-dimensional Hopf $C^*$-algebras, which is both more flexible
and more natural than conventional descriptions for topological
states. We have given rules to evaluate tensor network diagrams by
means of tensor traces and to construct quantum states on the lattice,
as well as to perform operations related to the spatial lattice
structure, such as cutting and joining along region boundaries.

This tensor network language has been shown to directly lead to the
construction of well-known topologically ordered states, namely ground
states of Kitaev's quantum double models based on groups~\cite{Kitaev:2003}. But in addition, we also obtain the extension of
Kitaev's construction to Hopf algebras. This is physically relevant
because the quantum double models based on Hopf algebras is the
smallest class of models allowing for a non-Abelian electric-magnetic
duality in topological models~\cite{Buerschaper:2010p3003}. All these states are
written, in a basis-independent way, in the form of tensor networks involving the intrinsic Hopf
$C^*$-algebra structure \emph{only}.

Relaxing this latter property leads us to a \emph{hierarchy} of states
defined from different subalgebras of a given Hopf $C^*$-algebra.
We study the classes obtained from group algebras, that is, Kitaev's
original models, and show that the hierarchy arises from the mechanism
of condensation of topological charges~\cite{Bais:2002p1647,Bais:2003p1648}, and so we
conjecture that this mechanism can be described in general in our
language.

Furthermore the hierarchy states can be regarded as ground states
of certain frustration-free Hamiltonians beyond the $\mathrm{D}(H)$-model. These
are obtained from Theorem~\ref{thm:model} by replacing the Haar integrals
$h\in H$ and $\phi\in X$ with $h_A\in A$ and $\phi_B\in B$ respectively
where $A\subset H$ and $B\subset X$ are Hopf $C^*$-subalgebras.

Not least, we have established isometric mappings defining
\emph{entanglement renormalization}~\cite{Vidal:2007p2581,Vidal:2008p2582} extending the
work of \cite{Aguado:2008p518} for the states at the top of the hierarchy.
This is a systematic procedure to thin out degrees of freedom keeping
the topological nature of the states, and hence their nonlocal
properties, intact.  Our computation of the topological entanglement
entropy is an application of this general scheme.

From a more philosophical viewpoint, our graphical language brings
tensor networks closer to algebraic sources; and, in particular, it
highlights the role of Hopf $C^*$-algebras as natural spaces for
transformations and symmetries of quantum many-body systems.  

We also give a broader perspective of Kitaev's original quantum double
construction, giving the mathematical background necessary but
emphasising the physical meaning of the different elements.  Hopf
$C^*$-algebras are shown to be natural and powerful tools in the
practical analysis of two-dimensional quantum problems; we hope that
the applications in this paper constitute a learn-by-doing
introduction to this structure for physicists.

This work opens several areas of research we are currently pursuing.
First of all, it is important to study the precise form in which
charge condensation appears in the Hopf quantum double models.  Next,
it is natural to ask whether all topological models can be cast in
quantum double model form by a suitable generalisation of our tensor
networks.  In~\cite{Buerschaper:2010p3003} we conjecture that the answer is yes, and
that the relevant algebraic structure is the class of weak Hopf
$C^*$-algebras; the details of the mathematical construction and
the corresponding tensor network are work in progress
\cite{Buerschaper:2010b}.

\appendix

\section{Hopf singlets}

We collect here some lemmas, with their proofs, that are used throughout
the text.

\begin{lemma}
    \label{lem:Hopf_singlet1}
    Let $h\in H$ the Haar integral, $a,b,c\in H$ and $f,g\in X$. Then one has
    \begin{equation}
        \sum_{(h)}
        f\bigl(S(h'')\,
               b\bigr)\,
        g(ch''')\,
        ah'
        =\!\sum_{(a)(h)}\!
         f\bigl(S(h'')\,
                a''b\bigr)\,
         g\bigl(c\,
                S(a')\,
                h'''\bigr)\,
         h'.
    \end{equation}
\end{lemma}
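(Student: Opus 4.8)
The plan is to recognize the claimed identity as the image of a single element of $H^{\otimes 3}$ under a fixed linear map, and to reduce everything to one integral identity for the triple coproduct of $h$. Concretely, I would define $\Phi\colon H\otimes H\otimes H\to H$ by $\Phi(x\otimes y\otimes z)=f\bigl(S(y)\,b\bigr)\,g(cz)\,x$. Then the left-hand side is exactly $\Phi\bigl(\sum_{(h)} ah'\otimes h''\otimes h'''\bigr)$, so the whole statement follows once I know how to transport the factor $a$ off the first leg of $\sum_{(h)} ah'\otimes h''\otimes h'''$ and onto the other two legs.

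The crux is therefore the integral identity
\begin{equation*}
    \sum_{(h)} ah'\otimes h''\otimes h'''
    =\sum_{(a)(h)} h'\otimes S(a'')h''\otimes S(a')h'''.
\end{equation*}
I would prove this in two steps. First the two-leg version $\sum_{(h)} ah'\otimes h''=\sum_{(h)} h'\otimes S(a)h''$, which I would obtain by evaluating the auxiliary element $E=\sum_{(a)(h)} a'h'\otimes S(a''')a''h''$ in two ways. Using only that $S$ is an anti-homomorphism with $S^2=\id$ (Proposition~\ref{prop:Haar_integral}) one has $\sum_{(a)} a'\otimes S(a''')a''=a\otimes 1$, so $E=\sum_{(h)} ah'\otimes h''$; using instead the defining integral property $ah=\epsilon(a)h$ in the form $\sum_{(a)} a'h'\otimes a''h''=\epsilon(a)\sum_{(h)} h'\otimes h''$ (valid since $\Delta$ is an algebra map) one finds $E=\sum_{(h)} h'\otimes S(a)h''$. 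Equating the two gives the two-leg identity. Applying $\id\otimes\Delta$ to it and using the antipode–coproduct rule $\Delta(S(a))=\sum_{(a)} S(a'')\otimes S(a')$ then yields the three-leg identity.

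With the integral identity in hand, I would finish by applying $\Phi$ to both sides. The right-hand side becomes $\sum_{(a)(h)} f\bigl(S\bigl(S(a'')h''\bigr)\,b\bigr)\,g\bigl(cS(a')h'''\bigr)\,h'$, and the anti-multiplicativity of $S$ together with $S^2=\id$ collapses $S\bigl(S(a'')h''\bigr)=S(h'')\,a''$, turning the argument of $f$ into $S(h'')a''b$. This is precisely the asserted right-hand side.

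I expect the only real obstacle to be the Sweedler bookkeeping in establishing the integral identity—in particular getting the order of the antipodes and the split of $\Delta(a)$ correct on the two transported legs. The group-algebra case $h=\abs{G}^{-1}\sum_g g$ (where the manipulation is just the reindexing $t\mapsto st$ in an average over $G$) is a useful consistency check at each stage and pins down the correct placement of $S(a')$ versus $a''$. Everything else is formal and relies only on the properties of the Haar integral collected in Proposition~\ref{prop:Haar_integral}.
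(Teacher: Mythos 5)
Your proof is correct, and at bottom it rests on the same key fact as the paper: applying $\id\otimes S\otimes\id$ to your three-leg identity and using $S^2=\id$ turns it into exactly the paper's identity~\eqref{eq:Hopf_singlet}, and your final step (applying the linear map $\Phi$) is just the paper's implicit ``the statement follows from~\eqref{eq:Hopf_singlet}'' made explicit. Where you genuinely differ is in how that transport identity is derived. The paper proves it in one monolithic Sweedler chain at the three-leg level, tracking a five-fold coproduct $a^{(1)},\dots,a^{(5)}$: it inserts $\epsilon(a'')=\sum S(a'')a'''$, then $\epsilon(a^{(3)})=\sum S(a^{(4)})a^{(3)}$ (this skew form is where the paper silently uses $S^2=\id$), and collapses via $\Delta^{(2)}(a'h)=\epsilon(a')\,\Delta^{(2)}(h)$. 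You instead modularize: first the two-leg swap $\sum_{(h)}ah'\otimes h''=\sum_{(h)}h'\otimes S(a)h''$, obtained by evaluating your auxiliary element $E$ in two ways — your first evaluation, $\sum_{(a)}a'\otimes S(a''')a''=a\otimes 1$, is where involutivity enters for you, legitimately by Proposition~\ref{prop:Haar_integral} — and then a lift to three legs by $\id\otimes\Delta$ together with the rule $\Delta\bigl(S(a)\bigr)=\sum_{(a)}S(a'')\otimes S(a')$, with a second use of $S^2=\id$ in the final collapse $S\bigl(S(a'')h''\bigr)=S(h'')\,a''$. Your refactoring buys lighter bookkeeping (never beyond a triple coproduct of $a$) and isolates precisely where $S^2=\id$ is needed; indeed, running your two-leg argument with the skew antipode gives the general identity $\sum ah'\otimes h''=\sum h'\otimes S^{-1}(a)h''$, valid in any finite-dimensional Hopf algebra, so your scheme is if anything slightly more general than the statement requires. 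The paper's direct chain avoids the auxiliary-element device but pays with heavier Sweedler gymnastics; beyond this organisational difference the two proofs use the same ingredients (the integral property through multiplicativity of $\Delta$, the antipode axioms, and involutivity of $S$).
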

\begin{proof}
    It is clear that the statement will follow from
    \begin{equation}
        \label{eq:Hopf_singlet}
        \sum_{(h)}
        ah'\otimes
        S(h'')\otimes
        h'''
        =\!\sum_{(a)(h)}\!
         h'\otimes
         S(h'')\,
         a''\otimes
         S(a')\,
         h'''.
    \end{equation}
    Indeed, we have
    \begin{align*}
        \sum_{(h)}
        ah'\otimes
        S(h'')\otimes
        h'''
        & =\!\sum_{(a)(h)}\!
           a'h'\otimes
           S(h'')\,
           \epsilon(a'')\otimes
           h''' \\
        & =\!\sum_{(a)(h)}\!
           a'h'\otimes
           S(h'')\,
           S(a'')\,
           a'''\otimes
           h'''
           \displaybreak[0]\\
        & =\!\sum_{(a)(h)}\!
           a^{(1)}h'\otimes
           S(a^{(2)}h'')\,
           a^{(4)}\otimes
           \epsilon(a^{(3)})\,
           h'''
           \displaybreak[0]\\
        & =\!\sum_{(a)(h)}\!
           a^{(1)}h'\otimes
           S(a^{(2)}h'')\,
           a^{(5)}\otimes
           S(a^{(4)})\,
           a^{(3)}
           h'''
           \displaybreak[0]\\
        & =\sum_{(a)}
           \sum_{(a'h)}
           (a'h)'\otimes
           S\bigl((a'h)''\bigr)\,
           a'''\otimes
           S(a'')\,
           (a'h)'''
           \displaybreak[0]\\
        & =\!\sum_{(a)(h)}\!
           \epsilon(a')\,
           h'\otimes
           S(h'')\,
           a'''\otimes
           S(a'')\,
           h''' \\
        & =\!\sum_{(a)(h)}\!
           h'\otimes
           S(h'')\,
           a''\otimes
           S(a')\,
           h'''.
    \end{align*}
    \qed
\end{proof}

\begin{lemma}
    \label{lem:Hopf_singlet2}
    Let $a,b\in H$ and $h\in H$ as well as $\phi\in X$ the respective Haar integrals. Then
    \begin{equation}
        \sum_{(h)}
        \phi(ah''')\,
        \phi\bigl(b\,
        S(h'')\bigr)\,
        h'
        =\abs{H}^{-1}
        \sum_{(a)}
        \phi(ba')\,
        S(a'').
    \end{equation}
\end{lemma}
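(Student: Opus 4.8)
The plan is to isolate a single \emph{Fourier inversion} identity for the Haar integral and then obtain the claim in two lines by applying the coproduct and contracting. Concretely, I would first establish that for every $x\in H$
\begin{equation}
    \sum_{(h)}
    \phi(xh'')\,
    h'
    =\abs{H}^{-1}
     S(x).
    \tag{$\star$}
\end{equation}
Granting $(\star)$, everything else is purely formal. The point is that $(\star)$ is an equality of \emph{elements of $H$}, so I may apply the (linear) comultiplication to both sides. Using coassociativity on the left and the antipode--coproduct axiom $\Delta(S(x))=\sum_{(x)}S(x'')\otimes S(x')$ on the right, this upgrades $(\star)$ to
\begin{equation}
    \sum_{(h)}
    \phi(xh''')\,
    h'\otimes h''
    =\abs{H}^{-1}
     \sum_{(x)}
     S(x'')\otimes
     S(x').
    \label{eq:coproducted}
\end{equation}
Now I would set $x=a$ and apply the linear functional $y\mapsto\phi\bigl(b\,S(y)\bigr)$ to the \emph{second} tensor slot of~\eqref{eq:coproducted}. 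The left-hand side becomes exactly $\sum_{(h)}\phi(ah''')\,\phi\bigl(b\,S(h'')\bigr)\,h'$, the left-hand side of the lemma, while the right-hand side becomes $\abs{H}^{-1}\sum_{(a)}S(a'')\,\phi\bigl(b\,S(S(a'))\bigr)=\abs{H}^{-1}\sum_{(a)}\phi(ba')\,S(a'')$, where I used $S^2=\id$ from Proposition~\ref{prop:Haar_integral}. This is precisely the claimed right-hand side, so the lemma follows.

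It remains to prove $(\star)$, which is where the genuine integral-theoretic content sits; the rest is Sweedler bookkeeping. I would derive it from the standard translation identity for a two-sided integral, namely
\begin{equation}
    \sum_{(h)}
    h'\otimes h''x
    =\sum_{(h)}
     h'S(x)\otimes h'',
    \label{eq:translation}
\end{equation}
valid here because $S^{-1}=S$ by Proposition~\ref{prop:Haar_integral}. Applying $\id\otimes\phi$ to~\eqref{eq:translation} gives $\sum_{(h)}h'\,\phi(h''x)=\sum_{(h)}h'S(x)\,\phi(h'')$. On the right, $\phi(h'')$ is a scalar, so the factor regroups as $\bigl(\sum_{(h)}\phi(h'')\,h'\bigr)S(x)$, and by the dual integral property~\eqref{eq:dual_integral} together with $\phi(h)=\abs{H}^{-1}$ this equals $\abs{H}^{-1}S(x)$. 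On the left, the trace property of $\phi$ from Proposition~\ref{prop:Haar_integral} gives $\phi(h''x)=\phi(xh'')$, yielding $(\star)$.

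The main obstacle is thus establishing $(\star)$ — equivalently, justifying the translation identity~\eqref{eq:translation} — since this is the one place that really uses $h$ being an \emph{integral} (the preceding Lemma~\ref{lem:Hopf_singlet1} and its core identity hold for an arbitrary $h$ and therefore cannot by themselves produce the scalar $\abs{H}^{-1}$). I expect~\eqref{eq:translation} to follow cleanly by applying $\id\otimes\epsilon$ and $\epsilon\otimes\id$ to pin down both sides, or by citing the integral theory underlying Proposition~\ref{prop:Haar_integral}; once it is in hand, the passage to~\eqref{eq:coproducted} and the final contraction are immediate. As a sanity check one may verify all steps on $H=\mathbb{C}G$, where $h=\abs{G}^{-1}\sum_g g$ and $\phi(g)=\delta_{g,e}$, and $(\star)$ reduces to $\abs{G}^{-1}k^{-1}=\abs{H}^{-1}S(k)$.
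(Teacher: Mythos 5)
Your proof is correct, but it takes a genuinely different route from the paper's. The paper mimics the proof of Lemma~\ref{lem:Hopf_singlet1}: it first establishes the three-leg moving identity
\begin{equation*}
    \sum_{(h)}
    h'\otimes
    S(h'')\otimes
    ah'''
    =\!\sum_{(a)(h)}\!
     S(a'')\,h'\otimes
     S(h'')\,a'\otimes
     h'''
\end{equation*}
and then contracts twice with the dual integral property~\eqref{eq:dual_integral} to absorb the two $\phi$'s, the scalar $\abs{H}^{-1}$ emerging at the very end as $\phi(h)$. You instead isolate the one-leg inversion formula $(\star)$ and upgrade it by comultiplication; the subsequent bookkeeping (coassociativity on the left, the antipode--coproduct flip on the right, $S^2=\id$, and contraction of the second slot against $y\mapsto\phi\bigl(b\,S(y)\bigr)$) is all sound. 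Your approach is more modular --- $(\star)$ is a clean, reusable statement --- but it costs an extra hypothesis: the step $\phi(h''x)=\phi(xh'')$ invokes traciality of $\phi$, which Proposition~\ref{prop:Haar_integral} does supply in the Hopf $C^*$ setting, whereas the paper's argument uses only \eqref{eq:dual_integral} and the integral property of $h$ and hence would survive in settings where the dual integral is not a trace.

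Two caveats, neither fatal. First, your parenthetical claim that Lemma~\ref{lem:Hopf_singlet1} and its core identity ``hold for an arbitrary $h$'' is incorrect: the paper's derivation of \eqref{eq:Hopf_singlet} passes through $(\Delta\otimes\id)\Delta(a'h)=\epsilon(a')\,(\Delta\otimes\id)\Delta(h)$, which is exactly integrality of $h$ (one can check on $H=\mathbb{C}G$ with $h$ a nontrivial group element that the identity fails). What is true is that the identity does not use the \emph{normalization} $\phi(h)=\abs{H}^{-1}$, so it alone cannot produce the scalar --- that part of your remark stands. Second, your fallback for the translation identity, applying $\id\otimes\epsilon$ and $\epsilon\otimes\id$ ``to pin down both sides,'' is not a proof: the two marginal contractions do not determine an element of $H\otimes H$. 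The identity is nonetheless standard and short: from $hy=\epsilon(y)\,h$ one obtains
\begin{equation*}
    \sum_{(h)(x)}
    h'x'\otimes
    h''x''\otimes
    x'''
    =\sum_{(h)}
     h'\otimes
     h''\otimes
     x,
\end{equation*}
and contracting the last two legs via $v\otimes w\mapsto v\,S(w)$ yields $\sum_{(h)}h'x\otimes h''=\sum_{(h)}h'\otimes h''\,S(x)$, from which your version follows by substituting $x\mapsto S^{-1}(x)$ and using $S^{-1}=S$ --- a subtlety you correctly flagged. With that identity supplied, your $(\star)$, and hence the lemma, are fully established.
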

\begin{proof}
    Similarly to the proof of Lemma~\ref{lem:Hopf_singlet1} one can first
    show that
    \begin{equation*}
        \sum_{(h)}
        h'\otimes
        S(h'')\otimes
        ah'''
        =\!\sum_{(a)(h)}\!
         S(a'')\,
         h'\otimes
         S(h'')\,
         a'\otimes
         h'''
    \end{equation*}
    holds for any $a\in H$. Then we use the defining
    property~\eqref{eq:dual_integral} of
    the dual (Haar) integral:
    \begin{align*}
        \sum_{(h)}
        \phi(ah''')\,
        \phi\bigl(b\,S(h'')\bigr)\,
        h'
        & =\!\sum_{(a)(h)}\!
           \phi(h''')\,
           \phi\bigl(b\,
                     S(h'')\,
                     a'\bigr)\,
           S(a'')\,
           h' \\
        & =\!\sum_{(a)(h)}\!
           \phi\bigl[b\,
                     S\bigl(h''\,
                            \phi(h''')\bigr)\,
                     a'\bigr]\,
           S(a'')\,
           h'
           \displaybreak[0]\\
        & =\!\sum_{(a)(h)}\!
           \phi\bigl(b\,
                     S(1_H)\,
                     a'\bigr)\,
           S(a'')\,
           h'\,
           \phi(h'') \\
        & =\phi(h)
           \sum_{(a)}
           \phi(ba')\,
           S(a'').
    \end{align*}
    \qed
\end{proof}

\begin{lemma}
    \label{lem:Hopf_singlet4}
    Let $a,b,c,d\in H$ and $f,g\in X$. Furthermore let $h\in H$ and
    $\phi\in X$ the respective Haar integrals. Then
    \begin{equation}
        \sum_{(h)}
        \phi(ah')\,
        f\bigl(b\,
               S(h'')\,
               c\bigr)\,
        g(dh''')
        =\abs{H}^{-1}
         \sum_{(a)}
         f(ba''c)\,
         g\bigl(d\,
                S(a')\bigr).
    \end{equation}
\end{lemma}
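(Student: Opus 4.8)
The plan is to derive the identity from the singlet identity~\eqref{eq:Hopf_singlet} of Lemma~\ref{lem:Hopf_singlet1} together with the defining property~\eqref{eq:dual_integral} of the dual Haar integral~$\phi$ and the normalization $\phi(h)=\abs{H}^{-1}$. The key observation is that the left-hand side arises from the triple coproduct $\sum_{(h)}h'\otimes h''\otimes h'''$ by applying $\phi$ (composed with left multiplication by~$a$) to the first leg, the functional $\xi\mapsto f(b\,\xi\,c)$ to the second leg (evaluated at $\xi=S(h'')$), and $\xi\mapsto g(d\,\xi)$ to the third leg. This is precisely the configuration in which~\eqref{eq:Hopf_singlet} transfers an element off the first leg.

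First I would invoke~\eqref{eq:Hopf_singlet} to move~$a$ onto the other two legs. Applying $\phi$ to the first tensor factor, $\xi\mapsto f(b\,\xi\,c)$ to the second, and $\xi\mapsto g(d\,\xi)$ to the third of that identity gives
\begin{equation*}
\sum_{(h)} \phi(ah')\, f\bigl(b\,S(h'')\,c\bigr)\, g(dh''')
=\sum_{(a)(h)} \phi(h')\, f\bigl(b\,S(h'')\,a''\,c\bigr)\, g\bigl(d\,S(a')\,h'''\bigr).
\end{equation*}
The $h$-dependence now sits in a triple coproduct whose first leg carries~$\phi$, which is exactly the shape that~\eqref{eq:dual_integral} contracts.

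Next I would apply~\eqref{eq:dual_integral} twice. By coassociativity the first two legs of the triple coproduct form a double coproduct, and using
\begin{equation*}
\sum_{(x)} \phi(x')\,S(x'')
=S\Bigl(\sum_{(x)}\phi(x')\,x''\Bigr)
=S\bigl(\phi(x)\,1_H\bigr)
=\phi(x)\,1_H
\end{equation*}
(where $S(1_H)=1_H$), the middle factor $S(h'')$ is replaced by the unit, so $f\bigl(b\,S(h'')\,a''\,c\bigr)$ collapses to $f(ba''c)$. What remains is $\sum_{(a)(h)}\phi(h')\,f(ba''c)\,g\bigl(d\,S(a')\,h''\bigr)$, now an ordinary double coproduct of~$h$. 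A final application, $\sum_{(h)}\phi(h')\,h''=\phi(h)\,1_H$, removes $h''$ from the argument of~$g$ and produces the factor $\phi(h)\,g\bigl(d\,S(a')\bigr)$. Since $\phi(h)=\abs{H}^{-1}$, the claimed right-hand side follows.

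The calculation is essentially mechanical once~\eqref{eq:Hopf_singlet} is in hand; the only step demanding care is the first use of~\eqref{eq:dual_integral}, where the leg being contracted sits inside the antipode, so one must commute $\phi$ through~$S$ before collapsing. Because $S$ is linear and unital this is harmless, and I anticipate no genuine obstacle.
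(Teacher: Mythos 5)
Your proof is correct and coincides essentially line for line with the paper's own argument: both first apply the singlet identity~\eqref{eq:Hopf_singlet} from Lemma~\ref{lem:Hopf_singlet1} to transfer~$a$ off the first leg, then contract with the dual-integral property~\eqref{eq:dual_integral} twice --- first collapsing $S(h'')$ to the unit (exactly the point you flag, justified since $S$ is linear and $S(1_H)=1_H$), then removing $h''$ from the argument of~$g$ --- and finish with $\phi(h)=\abs{H}^{-1}$. There are no gaps; your cautionary remark about commuting the scalar $\phi(h')$ through the antipode is precisely the step the paper leaves implicit.
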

\begin{proof}
    From~\eqref{eq:Hopf_singlet} we get
    \begin{align*}
        \sum_{(h)}
        \phi(ah')\,
        f\bigl(b\,
               S(h'')\,
               c\bigr)\,
        g(dh''')
        & =\!\sum_{(a)(h)}\!
           \phi(h')\,
           f\bigl(b\,
                  S(h'')\,
                  a''
                  c\bigr)\,
           g\bigl(d\,
                  S(a')\,
                  h'''\bigr) \\
        & =\!\sum_{(a)(h)}\!
           f(ba''c)\,
           g\bigl(d\,
                  S(a')\,
                  \phi(h')\,
                  h''\bigr) \\
        & =\phi(h)
           \sum_{(a)}
           f(ba''c)\,
           g\bigl(d\,
                  S(a')\bigr)
    \end{align*}
    which proves the claim.\qed
\end{proof}

\begin{lemma}
    \label{lem:Hopf_singlet6}
    Let $a,b,c,d\in H$ and $h\in H$ and $\phi\in X$ the respective Haar
    integrals. Then
    \begin{equation}
        \sum_{(h)}
        \phi(h'a)\,
        \phi(bh'')\,
        \phi(ch'''d)
        =\abs{H}^{-1}
         \sum_{(b)}
         \phi\bigl(S(b')\,
                   a\bigr)\,
         \phi\bigl(c\,
                   S(b'')\,
                   d\bigr).
    \end{equation}
\end{lemma}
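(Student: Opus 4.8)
The plan is to peel the central factor $\phi(bh'')$ away from the element $b$ by means of a coproduct identity for the Haar integral, after which the three remaining evaluations of $\phi$ decouple and the whole expression collapses to the scalar $\phi(h)=\abs{H}^{-1}$. The appearance of $\abs{H}^{-1}$ on the right‑hand side is exactly this collapse, so I expect the antipodes $S(b')$ and $S(b'')$ to be produced automatically, with nothing left over.

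First I would establish the auxiliary identity in $H^{\otimes 3}$
\begin{equation}
    \sum_{(h)}
    h'\otimes
    bh''\otimes
    h'''
    =\!\sum_{(b)(h)}\!
     S(b')\,h'\otimes
     h''\otimes
     S(b'')\,h''',
\end{equation}
which moves $b$ out of the middle tensor slot and redistributes it, carrying antipodes, to the two outer slots. This is the precise analogue of~\eqref{eq:Hopf_singlet} in Lemma~\ref{lem:Hopf_singlet1}, except that the bare multiplier now sits on the central factor rather than on an outer one; it is proven by the same bookkeeping, inserting $\epsilon(b^{(i)})\,1_H=\sum S(\cdots)(\cdots)$ through the antipode axioms, reassembling a term of the form $b'h$ so that the left invariance $b'h=\epsilon(b')\,h$ of the two‑sided integral applies, and invoking $S^2=\id$ from Proposition~\ref{prop:Haar_integral}. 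I expect this identity to be the main obstacle: the manipulations are purely formal but delicate, and one must resist simply permuting the factors of $\Delta^{(2)}(h)$. Although $h\in\Cocom(H)$, cocommutativity of the \emph{single} coproduct $\Delta(h)$ does \emph{not} make the iterated coproduct $\Delta^{(2)}(h)$ symmetric, so the asymmetric roles of $b'$ and $b''$ in the target must be tracked honestly rather than swapped.

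With the auxiliary identity in hand I would feed the three evaluations back in: applying $\phi(\,\cdot\,a)$ to the first slot, $\phi$ to the second, and $\phi(c\,\cdot\,d)$ to the third turns the left‑hand side of the lemma into
\begin{equation}
    \sum_{(b)(h)}
    \phi\bigl(S(b')\,h'a\bigr)\,
    \phi(h'')\,
    \phi\bigl(c\,S(b'')\,h'''d\bigr).
\end{equation}
For fixed $b$ the inner sum is $(F_1\otimes\phi\otimes F_3)\bigl(\Delta^{(2)}(h)\bigr)$ with $F_1=\phi\bigl(S(b')\,\cdot\,a\bigr)$ and $F_3=\phi\bigl(c\,S(b'')\,\cdot\,d\bigr)$, and here the clean central $\phi$ does all the work. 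By coassociativity and the defining property~\eqref{eq:dual_integral} of the dual integral one has
\begin{equation}
    (\id\otimes\phi\otimes\id)\bigl(\Delta^{(2)}(h)\bigr)
    =\sum_{(h)}
     \phi(h'')\,
     h'\otimes
     h'''
    =\phi(h)\,
     1_H\otimes
     1_H,
\end{equation}
so the inner sum reduces to $\phi(h)\,F_1(1_H)\,F_3(1_H)=\phi(h)\,\phi(S(b')a)\,\phi(c\,S(b'')d)$. Summing over $(b)$ and substituting $\phi(h)=\abs{H}^{-1}$ then yields the claim exactly.
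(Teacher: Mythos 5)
Your proposal is correct and follows essentially the same route as the paper's own proof: the identical auxiliary identity $\sum_{(h)} h'\otimes bh''\otimes h''' =\sum_{(b)(h)} S(b')\,h'\otimes h''\otimes S(b'')\,h'''$, established by the same counit/antipode bookkeeping (including the use of $S^2=\id$ and the invariance $b''h=\epsilon(b'')\,h$), followed by the collapse via the dual integral property~\eqref{eq:dual_integral} and $\phi(h)=\abs{H}^{-1}$. The only cosmetic difference is that you package the final step as a single application of $(\id\otimes\phi\otimes\id)$ to $\Delta^{(2)}(h)$, where the paper performs the same contraction in two successive uses of~\eqref{eq:dual_integral}.
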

\begin{proof}
    We first show that
    \begin{equation*}
        \sum_{(h)}
        h'\otimes
        bh''\otimes
        h'''
        =\!\sum_{(b)(h)}\!
         S(b')\,
         h'\otimes
         h''\otimes
         S(b'')\,
         h'''.
    \end{equation*}
    Indeed, we have
    \begin{align*}
        \sum_{(h)}
        h'\otimes
        bh''\otimes
        h'''
        & =\!\sum_{(b)(h)}\!
           \epsilon(b')\,
           h'\otimes
           b''h''\otimes
           h''' \\
        & =\!\sum_{(b)(h)}\!
           S(b')\,
           b''h'\otimes
           b'''h''\otimes
           h'''
           \displaybreak[0]\\
        & =\!\sum_{(b)(h)}\!
           S(b^{(1)})\,
           b^{(2)}h'\otimes
           b^{(3)}h''\otimes
           \epsilon(b^{(4)})\,
           h'''
           \displaybreak[0]\\
        & =\!\sum_{(b)(h)}\!
           S(b^{(1)})\,
           b^{(2)}h'\otimes
           b^{(3)}h''\otimes
           S(b^{(5)})\,
           b^{(4)}
           h'''
           \displaybreak[0]\\
        & =\sum_{(b)}
           \sum_{(b''h)}
           S(b')\,
           (b''h)'\otimes
           (b''h)''\otimes
           S(b''')\,
           (b''h)'''
           \displaybreak[0]\\
        & =\!\sum_{(b)(h)}\!
           S(b')\,
           \epsilon(b'')\,
           h'\otimes
           h''\otimes
           S(b''')\,
           h''' \\
        & =\!\sum_{(b)(h)}\!
           S(b')\,
           h'\otimes
           h''\otimes
           S(b'')\,
           h'''.
    \end{align*}
    Again by property~\eqref{eq:dual_integral} of the dual (Haar) integral
    we deduce:
    \begin{align*}
        \sum_{(h)}
        \phi(h'a)\,
        \phi(bh'')\,
        \phi(ch'''d)
        & =\!\sum_{(b)(h)}\!
           \phi\bigl(S(b')\,
                     h'a\bigr)\,
           \phi(h'')\,
           \phi\bigl(c\,
                     S(b'')\,
                     h'''d\bigr) \\
        & =\!\sum_{(b)(h)}\!
           \phi\bigl(S(b')\,
                     a\bigr)\,
           \phi\bigl(c\,
                     S(b'')\,
                     \phi(h')\,
                     h''d\bigr) \\
        & =\phi(h)
           \sum_{(b)}
           \phi\bigl(S(b')\,
                     a\bigr)\,
           \phi\bigl(c\,
                     S(b'')\,
                     d\bigr).
    \end{align*}
    \qed
\end{proof}

\begin{lemma}
    \label{lem:Hopf_singlet7}
    Let $a,b,c\in H$ and $h\in H$ as well as $\phi\in X$ the respective
    Haar integrals. Then
    \begin{equation}
        \sum_{(h)}
        \phi(h''a)\,
        \phi(bh'''c)\,
        h'
        =\abs{H}^{-1}
         \sum_{(a)}
         \phi\bigl(b\,
                   S(a'')\,
                   c\bigr)\,
         S(a').
    \end{equation}
\end{lemma}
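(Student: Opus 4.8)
The plan is to follow the template established by the other lemmas in this appendix, in particular Lemma~\ref{lem:Hopf_singlet6}: first isolate a purely algebraic identity in $H^{\otimes3}$ that records how the element~$a$ may be pushed across the coproduct of the Haar integral~$h$, and only afterwards feed two of the three tensor slots to the functional~$\phi$. Concretely, I would first establish the tensor identity
\begin{equation*}
    \sum_{(h)} h'\otimes h''a\otimes h''' = \sum_{(a)(h)} h'S(a')\otimes h''\otimes h'''S(a'').
\end{equation*}
This is the exact left--right mirror image of the tensor identity proved inside Lemma~\ref{lem:Hopf_singlet6} (there the element is multiplied on the \emph{left} and migrates as $S(b')$, $S(b'')$ onto the left of the outer slots; here it is multiplied on the \emph{right} and migrates onto the right). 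Since the integral property $ha=\epsilon(a)h=ah$ is two-sided and $h\in\Cocom(H)$ by Proposition~\ref{prop:Haar_integral}, the identical chain of manipulations applies: insert $\epsilon(a)\,1_H=\sum_{(a)}a'S(a'')$, use that the antipode is an anti-homomorphism, regroup the factor~$h$ together with one Sweedler component of~$a$, and collapse that product via $ha'=\epsilon(a')h$ together with cocommutativity.

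With this identity in hand, the rest is bookkeeping. Substituting it into the left-hand side of the claim turns $\sum_{(h)}\phi(h''a)\,\phi(bh'''c)\,h'$ into
\begin{equation*}
    \sum_{(a)(h)} \phi(h'')\,\phi\bigl(bh'''S(a'')c\bigr)\,h'S(a').
\end{equation*}
Now I would eliminate the two $h$-dependent copies of~$\phi$ one at a time using the defining relation~\eqref{eq:dual_integral} of the dual Haar integral. First, grouping $h'\otimes h''$ as $\Delta(h_{(1)})$ and applying $\sum_{(y)}y'\phi(y'')=\phi(y)\,1_H$ replaces $h'\phi(h'')$ by $\phi(h')\,1_H$; then, grouping the remaining $h'\otimes h''$ and applying $\sum_{(h)}\phi(h')h''=\phi(h)\,1_H$ absorbs the last coproduct of~$h$ into the surviving $\phi$, leaving the scalar $\phi(h)=\abs{H}^{-1}$ in front. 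The outcome is exactly $\abs{H}^{-1}\sum_{(a)}\phi\bigl(bS(a'')c\bigr)\,S(a')$, the right-hand side of the lemma.

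The main obstacle is not conceptual but notational: the whole argument hinges on getting the Sweedler indices of~$a$ into the correct slots, i.e. that it is $S(a')$ which ends up as the free output while $S(a'')$ is fed to~$\phi$ (rather than the other way round), since $H$ is in general non-cocommutative and the two assignments are genuinely different. I would pin down the correct order by the derivation above and cross-check it on the two trivial cases $H=\mathbb{C}G$ and $H=\mathbb{C}^G$, where the identity reduces to a transparent relabelling of the summation index and both orders must reproduce the same manifest answer.
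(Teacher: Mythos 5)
Your proposal coincides with the paper's proof essentially verbatim: the paper likewise reduces the lemma to the tensor identity $\sum_{(h)} h'\otimes h''a\otimes h''' = \sum_{(a)(h)} h'\,S(a')\otimes h''\otimes h'''\,S(a'')$, proved ``along the same lines'' as the one inside Lemma~\ref{lem:Hopf_singlet6}, and then collapses the two $h$-dependent copies of $\phi$ via the dual-integral property~\eqref{eq:dual_integral} together with $\phi(h)=\abs{H}^{-1}$, exactly as in your bookkeeping step. One small remark on your sketch of the mirrored chain: at one insertion it requires $\sum_{(a)}a''\,S(a')=\epsilon(a)\,1_H$, i.e.\ $S=S^{-1}$, which is guaranteed here by $S^2=\id$ from Proposition~\ref{prop:Haar_integral} (equivalently, one may apply Lemma~\ref{lem:Hopf_singlet6}'s identity in $H^{\mathrm{op}}$); the cocommutativity of $h$ you invoke is not actually needed, only the two-sidedness of the integral.
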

\begin{proof}
    Along the same lines as the proof of Lemma~\ref{lem:Hopf_singlet6} one
    can show that
    \begin{equation*}
        \sum_{(h)}
        h'\otimes
        h''a\otimes
        h'''
        =\!\sum_{(a)(h)}\!
         h'\,
         S(a')\otimes
         h''\otimes
         h'''\,
         S(a'').
    \end{equation*}
    As before the rest then follows from the properties of the dual Haar
    integral.\qed
\end{proof}

\begin{acknowledgements}
    We gratefully acknowledge discussions with David Pérez-García,
    J.\,Ignacio Cirac, Norbert Schuch and Sergey Bravyi. For valuable
    comments on an earlier version of this manuscript we would like
    to thank Liang Kong and Joost Slingerland. M.\,A. thanks
    Jürgen Fuchs for an illuminating introduction to Hopf algebras.
\end{acknowledgements}

\bibliographystyle{plain}
\bibliography{refs}

\end{document}